\title{Logarithmic Weisfeiler-Leman Identifies All Planar Graphs}
\author{Martin Grohe}{RWTH Aachen University, Aachen, Germany}{grohe@cs.rwth-aachen.de}{https://orcid.org/0000-0002-0292-9142}{}
\author{Sandra Kiefer}{University of Warsaw, Warsaw, Poland \and RWTH Aachen University, Aachen, Germany}{kiefer@cs.rwth-aachen.de}{https://orcid.org/0000-0003-4614-9444}{}
\authorrunning{M. Grohe and S. Kiefer} 
\keywords{
Weisfeiler-Leman algorithm,
finite-variable logic,
isomorphism testing,
planar graphs,
quantifier depth,
iteration number
}
\tikzset{current point is local=true}
\newcommand{\case}[1]{\par\medskip\noindent\textit{Case #1: }}
\newenvironment{cs}{
  \begin{description}
    \renewcommand{\case}[1]{\item[\itshape\mdseries Case ##1:]}
  }{
  \end{description}
}
\newcommand{\uend}{\hfill$\lrcorner$}
\newenvironment{eroman}{\enumerate[(i)]}{\endenumerate}
\renewcommand{\mathbf}[1]{\boldsymbol{#1}}
\renewcommand{\max}{\operatorname{max}}
\renewcommand{\min}{\operatorname{min}}
\renewcommand{\phi}{\varphi}
\newcommand{\bigmid}{\;\big|\;}
\newcommand{\Bigmid}{\;\Big|\;}
\newcommand{\formel}[1]{\textsf{\upshape #1}}
\newcommand{\logic}[1]{\textsf{\upshape #1}}
\newcommand{\FOL}{\logic{FO}}
\newcommand{\ad}{\operatorname{ad}}
\newcommand{\width}{\operatorname{wd}}
\newcommand{\Nat}{{\mathbb N}}
\newcommand{\Real}{{\mathbb R}}
\newcommand{\bag}{\beta}
\newcommand{\CC}{\mathcal C}
\newcommand{\CM}{\mathcal M}
\renewcommand{\vec}{\overline}
\newcommand{\Bigllcurly}{\Big\{\hspace{-5pt}\Big\{}
\newcommand{\Bigrrcurly}{\mbox{$\Big\}\hspace{-5pt}\Big\}$}}
\newcommand{\atp}{\operatorname{atp}}
\DeclareMathOperator{\qd}{qd}
\newcommand{\WL}[2]{\logic{WL}^{#1}_{#2}}
\newcommand{\LC}[2]{\logic{C}^{#1}_{#2}}
\theoremstyle{plain}
\theoremstyle{definition}
\newcommand{\Fraisse}{Fra\"{\i}ss{\'e}}
\begin{document}
\maketitle

\begin{abstract}
The Weisfeiler-Leman (WL) algorithm is a well-known combinatorial
procedure for detecting symmetries in graphs and it is widely used in
graph-isomorphism tests. It proceeds by iteratively refining a
colouring of vertex tuples. The number of iterations needed to obtain the final output is crucial for the parallelisability of the algorithm. 

We show that there is a constant $k$ such that every planar graph can be identified (that is, distinguished from every non-isomorphic graph)
by the $k$-dimensional WL algorithm within a logarithmic number of
iterations. This generalises a result due to Verbitsky (STACS 2007), who proved the same for 3-connected planar graphs.

The number of iterations needed by the $k$-dimensional WL algorithm
to identify a graph corresponds to the quantifier
depth of a sentence that defines the graph in the $(k+1)$-variable
fragment $\LC{k+1}{}$ of first-order logic
with counting quantifiers. Thus, our result implies that every
planar graph is definable with a $\LC{k+1}{}$-sentence of
logarithmic quantifier depth.
\end{abstract}

\section{Introduction}

The Weisfeiler-Leman (WL) algorithm is a well-known combinatorial
procedure for detecting symmetries in graphs. It is widely used in approaches to tackle the graph-isomorphism problem, both from a theoretical (\cite{bab16,caifurimm92,kie20}) and from a practical perspective (\cite{darliffsakmar04,junkas07,mck81,mckaypip14}). 
The algorithm is derived from a technique called \emph{naïve vertex classification} (or \emph{Colour Refinement}), which may be viewed
as the $1$-dimensional version $\WL{1}{}$ of the WL
algorithm. For every $k\ge 1$, the $k$-dimensional WL algorithm
($\WL{k}{}$) iteratively colours $k$-tuples of vertices of a
graph by propagating local information until it reaches a \emph{stable colouring}. Weisfeiler and Leman introduced the 2-dimensional version
$\WL{2}{}$, today known as the \emph{classical WL algorithm}, in
\cite{weilem68}. The algorithm $\WL{k}{}$ can be
implemented to run in time $O(n^{k+1}\log n)$ on graphs of order $n$
\cite{immlan90}.

The algorithm has striking connections to numerous areas of mathematics
and computer science,
which surely is a reason why research on it has been active since its
introduction over half a century ago. For example, there are tight connections to
linear and semidefinite programming~\cite{atsman13,atsoch18,groott15}, homomorphism
counting~\cite{delgrorat18,dvo10}, and the algebra of coherent
configurations \cite{chepon19}. Most recently, the WL algorithm has been applied in several interesting machine-learning contexts
\cite{ahmkermlanat13,gro21,morritfey+19,sheschlee+11,xuhulesjeg19}.

A very strong and highly exploited link between the algorithm and
logic was established by Immerman and Lander \cite{immlan90} and
Cai, Fürer, and Immerman \cite{caifurimm92}: $\WL{k}{}$ assigns the same
colour to two $k$-tuples of vertices if and only if these tuples satisfy the same formulas of the $(k+1)$-variable fragment
$\LC{k+1}{}$ of first-order logic with counting
quantifiers. Cai, Fürer, and Immerman \cite{caifurimm92}
used this correspondence and an Ehrenfeucht-\Fraisse\ game
that characterises equivalence for the
logic $\LC{k+1}{}$  to prove that, for every $k$, there are
non-isomorphic graphs
of order $O(k)$ that are not distinguished by $\WL{k}{}$. Here we say that $\WL{k}{}$ \emph{distinguishes} two graphs if $\WL{k}{}$ computes different stable colourings on them, that
is, there is some colour such that the numbers of $k$-tuples of that
colour differ in the two graphs. 

We say that $\WL{k}{}$
\emph{identifies} a graph $G$ if it distinguishes $G$ from all graphs
$G'$ that are not isomorphic to $G$. It has been shown that for suitable
constants $k$, the algorithm $\WL{k}{}$ identifies all planar graphs~\cite{gro98a}, all
graphs of bounded tree width \cite{gromar99}, and all graphs in many
other natural graph classes \cite{evdpontin00,gro00,gro17,grokie19,groneu19}. For
some of these classes, fairly tight bounds for the optimal value of $k$,
called the \emph{Weisfeiler-Leman (WL) dimension}, are known. Notably,
interval graphs have WL dimension $2$ \cite{evdpontin00}, graphs of
tree width $k$ have WL dimension in the range $\lceil k/2\rceil-3$ to $k$
\cite{kieneu19}, and, most relevant for us, planar graphs have WL dimension
$2$ or $3$ \cite{kieponschwei19}. 

Another parameter of the WL
algorithm that has received recent attention is the number of
iterations it needs to reach its final, stable colouring.
Since a set of size $n^k$ can only be partitioned
$n^k-1$ times, a natural upper bound on the number of
iterations to reach the final output is $n^k-1$ ($n$ always denotes the number of
vertices of the input graph). This bound cannot be improved for $\WL{1}{}$, since there are infinitely many
graphs on which the algorithm takes $n-1$ iterations to compute its final output
\cite{kiemck20}. However, for $\WL{2}{}$, it was shown that the bound
$\Theta(n^2)$ is asymptotically not tight \cite{kieschwei19}. Currently, the best upper bound on the iteration number for $\WL{2}{}$ is
$O(n\log n)$ \cite{lichponschwei19}.

The number of iterations of $\WL{k}{}$ is crucial for the
parallelisability of the algorithm: for $\ell\ge \log n$, it holds that $\ell$
iterations of $\WL{k}{}$ can be simulated in $O(\ell)$ steps on a PRAM
with $O(n^k)$ processors \cite{grover06,KoblerV08}. In particular, if for a class
$\CC$ of graphs, all $G,G'\in\CC$ (of order $n$) can be distinguished
by $\WL{k}{}$ in $O(\log n)$ iterations, then the isomorphism problem
for graphs in $\CC$ is in the complexity class $\textsf{AC}^1$. Grohe and
Verbitsky \cite{grover06} proved that this is the case for all classes
of graphs of bounded tree width and all maps (graphs embedded into a
surface together with a rotation system specifying the embedding), and
Verbitsky~\cite{ver07} proved it for the class of 3-connected planar
graphs.

\paragraph*{Our results}

We say that $\WL{k}{}$ distinguishes two graphs \emph{in $\ell$
iterations} if the colouring obtained by $\WL{k}{}$ in the $\ell$-th
iteration differs among the two graphs, and we say $\WL{k}{}$ identifies
a graph \emph{in $\ell$ iterations} if it distinguishes the graph from
every non-isomorphic graph in $\ell$ iterations.

\begin{theorem}\label{thm:main}
There is a constant $k$ such that $\WL{k}{}$ identifies every
$n$-vertex planar graph in $O(\log n)$ iterations.
\end{theorem}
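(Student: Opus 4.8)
The plan is to reduce the claim to Verbitsky's logarithmic-iteration bound for $3$-connected planar graphs by means of the classical decomposition of a graph into tree-structured pieces: its block tree (the $2$-connected components, glued at cut vertices) and, within each block, its tree of $3$-connected components (the $3$-connected pieces being glued at separating pairs along virtual edges). For a planar graph $G$ every such $3$-connected component is again planar, hence has an essentially unique embedding in the sphere, so that Verbitsky's analysis of $3$-connected planar graphs (and Grohe--Verbitsky's of maps) applies to it. The first step is to show that for a suitable constant $k$ the $\WL{k}{}$-colouring of $G$ already ``sees'' this decomposition: the colour of a tuple determines its position in the decomposition tree together with the (colour-refined) $3$-connected component it lies in, where the colouring induced on each component marks its virtual edges, and conversely this data determines the colour. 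This is the point at which the known facts that planar graphs and $3$-connected planar graphs/maps have bounded WL dimension enter, applied piece by piece; it guarantees that a constant $k$ suffices at all, independently of the iteration count.

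Granting this, the base case is Verbitsky's theorem together with the strengthening that $\WL{k}{}$ identifies a $3$-connected planar graph --- even when some of its edges are marked as distinguished (virtual) --- in $O(\log n)$ iterations. The engine here is that $\WL{k}{}$, already for $k=2$, can perform transitive-closure-style pointer doubling: after $t$ iterations the colour of a pair of vertices reflects connectivity and distance information up to scale $2^t$, not merely $t$. This is exactly the phenomenon behind the $\textsf{AC}^1$ membership for maps and $3$-connected planar graphs, and it is what must be lifted to the whole decomposition. The new work is to propagate the logarithmic bound up the (possibly deep) decomposition tree without the iteration count degrading. A level-by-level recursion would cost one phase per level and may be linear; instead one processes the tree by an $O(\log n)$-depth tree-contraction (rake-and-compress) scheme, equivalently by having the Spoiler in the Ehrenfeucht--\Fraisse\ game for $\LC{k+1}{}$ perform a binary search on the tree --- keeping a bounded ``bracket'' of pebbles that isolates a subtree containing any discrepancy and bisecting it each round --- so that after $O(\log n)$ rounds the isolated subtree has bounded size, i.e.\ it is a single colour-refined $3$-connected component, on which the base case finishes in a further $O(\log n)$ rounds. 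Crucially, the doubling makes the bracket pebbles effective controllers of large subtrees, which is why the total stays $O(\log n)$ rather than $O(\log^2 n)$. Concretely one proves a lemma of the form: after $O(\log n)$ iterations the $\WL{k}{}$-colouring of $G$ encodes the isomorphism type of the entire decomposition tree with its colour-refined $3$-connected components, and by uniqueness of the decomposition this pins $G$ down up to isomorphism.

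I expect the main obstacle to be precisely the control of the iteration count through the decomposition tree. Two difficulties stand out. First, a $3$-connected component may be incident to unboundedly many virtual edges, so the Spoiler cannot individualise all of its attachment points; one must instead let $\WL{k}{}$ transport these as colours and argue that the component remains rigid enough relative to that colouring --- and that the colouring is itself computed within $O(\log n)$ iterations. Second, interleaving the binary search on the tree with the long-range (doubling) information it relies on requires arranging the WL computation so that the connectivity data needed at a given bisection step has already been made available by earlier doubling; getting this bookkeeping right, and in particular ensuring that $k$ need not grow with the depth of the tree, is the crux. The bounded-treewidth result of Grohe and Verbitsky supplies a template for the tree-contraction argument, but since planar graphs have unbounded treewidth we must run it on the tree of $3$-connected components rather than on a tree decomposition, and this is exactly where the planarity-specific input --- bounded WL dimension and unique embeddings of $3$-connected planar graphs --- is indispensable.
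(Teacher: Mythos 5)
Your plan is essentially the plan of the paper: decompose the planar graph into its Tutte tree of blocks (3-connected components, triangles, edges), invoke Verbitsky's $O(\log n)$-iteration identification for 3-connected planar graphs on the pieces, use doubling (distance-$\le 2^t$ after $t$ rounds) to get logarithmic-depth connectivity/navigation statements, and process the decomposition tree in $O(\log n)$ "depth" rather than level by level. You have the right ingredients and you correctly locate the crux.

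However, the step you single out as the crux --- "keeping a bounded bracket of pebbles that isolates a subtree and bisecting it each round" --- is not merely bookkeeping: your description of a centroid bisection does not yield a bounded interface, and this is where the proposal has a genuine gap. Each cut of the Tutte tree introduces a new block separator (up to $2$ vertices of $G$), and a straightforward recursive bisection accumulates these, so after $O(\log n)$ rounds the bracket has $\Theta(\log n)$ pinned vertices, forcing $k$ to grow. Your alternative suggestion, letting $\WL{k}{}$ "transport the attachment points as colours" and arguing rigidity relative to that colouring, is not developed and is not obviously correct either, because a single 3-connected component can border unboundedly many virtual edges, and nothing in Verbitsky's theorem promises identification relative to an externally imposed colouring with unboundedly many colour classes in $O(\log n)$ iterations. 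What the paper actually proves, and what your proposal is missing, is a rebalancing lemma (Lemma~\ref{lem:logdec}, folded into Lemma~\ref{lem:tree:dec:log:height}): the Tutte tree $T$ itself admits a rooted tree decomposition of height $O(\log|T|)$, width $3$, and --- crucially --- adhesion $3$, by carefully re-splitting the carried-down boundary set of $\le 3$ tree nodes before splitting by size. Lifting this to $G$ yields a tree decomposition of logarithmic height in which each bag is a union of at most four blocks/block-separators and the adhesion is at most $6$. The bounded adhesion is exactly what keeps your "bracket" constant-sized and is the non-obvious combinatorial content of the proof; the rest of the argument (Lemmas~\ref{lem:dec:logic} and~\ref{lem:iso:formel}) then builds the identifying $\LC{O(1)}{O(\log n)}$-formula bottom-up along this balanced decomposition, with Theorem~\ref{thm:3conn} handling each proper block relative to three pebbled vertices rather than relative to a colouring. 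Without the bounded-adhesion balancing, your game-theoretic binary search does not go through as stated.
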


The correspondence between $\WL{k}{}$ and the logic $\LC{k+1}{}$ can
be refined to a correspondence between the number of iterations and the
quantifier depth: $\WL{k}{}$ assigns the same
colour to two $k$-tuples of vertices in the $\ell$-th iteration if and only if these two
$k$-tuples satisfy the same $\LC{k+1}{}$-formulas of quantifier depth
$\ell$. Thus, the following theorem is equivalent to Theorem~\ref{thm:main}.

\begin{theorem}\label{thm:main:logical}
  There is a constant $k$ such that for every
  $n$-vertex planar graph $G$, there is a $\LC{k}{}$-sentence of quantifier
  depth $O(\log n)$ that identifies $G$ (that is, characterises $G$ up to isomorphism).
\end{theorem}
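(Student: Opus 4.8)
We will prove Theorem~\ref{thm:main:logical} in its equivalent form, Theorem~\ref{thm:main}: by the Cai--F\"urer--Immerman correspondence \cite{caifurimm92} between $\WL{k}{}$ and $\LC{k+1}{}$, together with the refinement relating iteration number and quantifier depth stated above, it suffices to produce, for a suitable constant $k$ and every $n$-vertex planar graph $G$, an identifying $\LC{k}{}$-sentence $\phi_G$ built by a recursion of \emph{depth} $O(\log n)$ (the $k$ variables being recycled at every level); equivalently, a winning strategy of length $O(\log n)$ for Spoiler in the bijective counting pebble game on $G$ against any non-isomorphic $G'$. Everything is uniform in $G'$, so the task is really to analyse a single decorated graph.

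The structural backbone is the decomposition of $G$ along its connected components, then each component along its block tree, and each block along its SPQR tree. This produces a tree $T$ whose nodes carry \emph{pieces} --- edges, bonds, cycles, or $3$-connected planar graphs (the latter possibly coloured to record incident virtual edges) --- and whose edges carry separators of $G$ of bounded size (cut vertices of size~$1$, separation pairs of size~$2$). The crucial observation is that each piece can be \emph{summarised relative to its (at most two) adjacent separators} with no knowledge of the rest of $G$: once the $\le 2$ vertices of such a separator are individualised, a cycle piece is identified by $\WL{2}{}$ within $O(\log n)$ iterations (via a doubling of the pair-colours, which encode distances around the cycle), and a $3$-connected piece is identified within $O(\log n)$ iterations by Verbitsky's theorem \cite{ver07} (which carries over to the coloured setting); edges and bonds are trivial. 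Since the pieces are vertex-disjoint and these relative summaries have no cross-dependencies, $\WL{k}{}$ produces all of them in a single initial phase of $O(\log n)$ iterations.

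It remains to combine the summarised pieces along $T$. Since $T$ may have linear depth --- a long path already shows this --- propagating information from the leaves of $T$ to its root one separator at a time would cost $\Theta(n)$ iterations, and crudely balancing $T$ and sweeping through it level by level would cost $\Theta(\log^2 n)$ (planar graphs need not admit \emph{bounded-size} balanced separators). The plan is instead to let $\WL{k}{}$ simulate a parallel tree contraction on $T$, extending the balanced-decomposition technique of Grohe and Verbitsky \cite{grover06} for graphs of bounded tree width: in each round, \emph{simultaneously} \textbf{rake} every piece joined to the rest of $G$ through a single separator (folding its relative summary into that separator) and \textbf{compress} every maximal chain of degree-$2$ pieces (composing a chain of length $\ell$ into one of length roughly $\ell/2$). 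A local composition lemma --- if a bounded-size separator $S$ splits $G$ into parts whose relative summaries are already available, then individualising $S$ and refining for $O(1)$ further iterations yields the relative summary of their union --- guarantees that a round costs only $O(1)$ iterations, \emph{independently} of the sizes of the pieces involved; and a round shrinks $T$ by a constant factor, so after $O(\log n)$ rounds $T$ is contracted to a point and $G$ is identified. Adding the initial $O(\log n)$-iteration phase, and taking $k$ to be the maximum of the constants required by the decomposition, by Verbitsky's theorem, and by the composition lemma, gives the theorem.

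The main obstacle will be this combination step. One has to make the rake and compress operations $\WL{k}{}$-definable; formulate and prove the composition lemma, including the bookkeeping for virtual edges in SPQR trees and for cycle and bond pieces, and the delicate point of keeping a summary genuinely ``relative'' (a separator that gets absorbed must not remain individualised in the outcome) while retaining $O(1)$-iteration stability; handle nodes of $T$ of large degree --- many pieces sharing a single cut vertex or separation pair, all of which the rake step must fold in at once --- without a blow-up; and, finally, carry out the accounting that rules out the $\Theta(\log^2 n)$ trap, i.e.\ verify that the base-case identifications of the cycles and $3$-connected pieces really take place in one $O(\log n)$ phase and do not interleave multiplicatively with the $O(\log n)$ contraction rounds.
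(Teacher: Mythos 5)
Your proposal takes a genuinely different route from the paper's, and the difference matters. The paper never performs tree contraction or parallel rake/compress in the WL algorithm. Instead, it works almost entirely on the formula side: it takes the bare \emph{tree} $T$ of Tutte's decomposition into blocks (Lemma~\ref{lem:3cc}), applies Lemma~\ref{lem:logdec} to $T$ \emph{offline} to obtain a tree decomposition of $T$ itself of logarithmic height, width $\le 3$ and adhesion $\le 3$, and then lifts this (Lemma~\ref{lem:tree:dec:log:height}) to a rooted tree decomposition $(T^*,\beta^*)$ of $G$ of height $O(\log n)$ whose bags are unions of at most four blocks/block separators and whose adhesion is at most $6$. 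The identifying sentence is then built by a straightforward top-down recursion over $(T^*,\beta^*)$ (Lemmas~\ref{lem:dec:logic} and~\ref{lem:iso:formel}), using Verbitsky's Theorem~\ref{thm:3conn} inside each bag. Your objection that ``sweeping through a balanced $T$ level by level would cost $\Theta(\log^2 n)$'' is exactly the trap the paper shows does \emph{not} occur: at each of the $O(\log n)$ levels, the recursion wraps the previous level's formula inside only $O(1)$ new quantifiers (19 of them), while the $O(\log n)$-depth helper formulas ($\formel{block}^{(n)}$, $\formel{torso}^{(n)}$, connectivity, Verbitsky's $\formel{id}_w$) sit in \emph{sibling} conjuncts rather than nesting with the recursive call, so quantifier depth --- which is a max, not a sum, over branches --- comes out to $O(h)+O(\log n)=O(\log n)$, not $O(h\cdot\log n)$. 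This one observation dissolves the entire motivation for your rake/compress machinery.

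As a stand-alone proof sketch, your proposal also has a genuine gap precisely where you flag it: the composition lemma (that merging summaries across a bounded separator costs $O(1)$ iterations, independent of piece sizes, while simultaneously de-individualising absorbed separators and coping with unbounded-degree nodes of $T$) is asserted but not proved, and it is the crux. Without it the argument does not close. By contrast the paper's path never needs such a lemma, because the formula construction lets the $O(\log n)$-depth block/Verbitsky subformulas be re-derived at every level ``for free'' in the max accounting. One further, smaller divergence: you decompose into SPQR pieces (edges, bonds, cycles, 3-connected parts), which forces you to treat long cycles as first-class pieces and handle bonds; the paper instead decomposes cycles further into triangles (degenerate blocks of order~3), which makes every block either a 3-connected torso, a triangle, or an edge, and so every bag of $(T^*,\beta^*)$ is covered by a constant number of $O(\log n)$-identifiable pieces. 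Your high-level instincts --- block-tree structure, Verbitsky for 3-connected pieces, a logarithmic combination scheme --- are right, but the combination step is where the paper's technique (offline balancing of the block tree plus formula-side recursion with additive $O(\log n)$ overhead) is both simpler and actually complete.
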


We exploit the logical characterisation of the WL algorithm in our
proof, so it is actually Theorem~\ref{thm:main:logical} that we
prove. We first show that every planar graph $G$ has a tree
decomposition of logarithmic height where each bag consists of at most
four 3-connected components of $G$ and the adhesion is at most $6$. Then we inductively construct a formula to identify $G$ by ascending through the tree, encoding all information about isomorphism types of the parsed subgraphs in subformulas. At each node of the tree, we use
Verbitsky's result to deal with the 3-connected components.

\section{Preliminaries}\label{sec:prelim}

All graphs in this paper are finite, simple, and undirected. For a
graph $G$, we denote by $V(G)$ and $E(G)$ its set of vertices and
edges, respectively. The \emph{order} of $G$ is $|G| \coloneqq
|V(G)|$. We write edges without parenthesis, as in $vw$. For $v\in V(G)$, we let $N_G(v)\coloneqq\{w\mid vw\in
E(G)\}$. 

A \emph{subgraph} of $G$ is a graph $H$ with $V(H) \subseteq V(G)$ and
$E(H) \subseteq E(G)$. We set $N_G(H) \coloneqq \bigcup_{v \in V(H)} N_G(v) \setminus V(H)$. We call a graph $H$ a \emph{topological
  subgraph} of $G$ if a subdivision of $H$ (i.e., a graph obtained
from $H$ by replacing some edges with paths) is a subgraph of $G$.
For $W \subseteq V(G)$, we let
$G[W] \coloneqq (W, E(G) \cap \{uv\mid u,v \in W\})$ and, for arbitrary sets $W$, we let $G \setminus W\coloneqq G[V(G) \setminus W]$.

A graph $G$ is
\emph{$k$-connected} if $|G|>k$ and there is no set $S\subseteq V(G)$ with $|S| \leq k-1$ such that $G\setminus
S$ is disconnected. 

\subsection{Logic}\label{sec:log}
We denote by $\LC{}{}$ the extension of first-order logic $\FOL$ by \emph{counting
quantifiers} $\exists^{\ge m}x$ with the obvious meaning. $\LC{}{}$ is
only a syntactical extension of $\FOL$, because $\exists^{\ge
m}x\phi(x)$ is equivalent to $\exists x_1\ldots\exists
x_m\Big(\bigwedge_{i\neq j}x_i\neq
x_j\wedge\bigwedge_i\phi(x_i)\Big)$. However, we are mainly interested
in the fragments $\LC{k}{}$ of $\LC{}{}$ consisting of all formulae with at
most $k$ variables (which can, however, be reused within the formula). If $m>k$, then $\exists^{\ge m}x$ cannot be
expressed in the $k$-variable fragment of $\FOL$, this is why we add
the counting quantifiers. 

We write $\varphi(x_1,\dots,x_\ell)$ to indicate that the free
variables of $\varphi$ are among $x_1,\dots,x_\ell$. Then for a graph $G$ and
vertices $u_1,\ldots,u_\ell\in V(G)$, we write
$G\models\phi(u_1,\ldots,u_\ell)$ to denote that $G$ satisfies $\phi$ if,
for all $i$, the variable $x_i$ is interpreted by $u_i$. Moreover, we
write $\phi[G,u_1,\ldots, u_i,x_{i+1},\ldots,x_\ell]$ to denote the
set of all $(\ell-i)$-tuples $(u_{i+1},\ldots,u_\ell)$ such that
$G\models\phi(u_1,\ldots,u_\ell)$. 

The \emph{quantifier depth} $\qd(\varphi)$ of a formula $\varphi \in \LC{}{}$ is its depth of quantifier nesting. More formally,
 \begin{itemize}
   \item if $\varphi$ is atomic, then $\qd(\varphi) = 0$.
   \item $\qd(\neg \varphi) = \qd(\varphi)$.
   \item $\qd(\varphi_1 \vee \varphi_2) = \qd(\varphi_1 \land \varphi_2) = \max\{\qd(\varphi_1), \qd(\varphi_2)\}$.
   \item $\qd(\exists^{\geq p} x \varphi) = \qd(\varphi) + 1.$
 \end{itemize}
We denote the set of all $\LC{k}{}$-formulas of quantifier depth at
most $\ell$ by $\LC{k}{\ell}$.

It will often be convenient to use
asymptotic notation, such as $\LC{O(1)}{O(\log n)}$. The parameter $n$
always refers to the order of the input graph, and we will typically
make assertions such as:
{\itshape
For every $n$, there exists a $\LC{O(1)}{O(\log n)}$-formula $\phi^{(n)}(x)$
such that for all graphs $G$ of order $|G|=n$ and all $v\in V(G)$, [something holds]}. What this means is that {\itshape there is a constant
$k$ and a function $\ell(n)\in O(\log n)$ such that for every $n$, there exists a $\LC{k}{\ell(n)}$-formula $\phi^{(n)}(x)$
such that for all graphs $G$ of order $|G|=n$ and all $v \in V(G)$, [something holds].}

Throughout this paper, we will have to express properties of graphs
and their vertices using $\LC{O(1)}{O(\log n)}$-formulas. The basic
building blocks that we use are connectivity statements with formulas of logarithmic quantifier depth, as illustrated in the following example.

\begin{example}\label{exa:dist}
  For every $k\ge0$, we define a $\LC{3}{\lceil\log n\rceil}$-formula $\formel{dist}_{\leq k}$
  such that for every graph $G$ of order at most $n$ and all vertices  $u,u'\in V(G)$, it
  holds that $G \models \formel{dist}_{\le k}(u,u')$ if and only if
  $u$ and $u'$ have distance at most $k$ in $G$. We let 
  \[\formel{dist'}_{\le k}(x,x') \coloneqq
  \begin{cases}
    x = x' & \text{if } k = 0
    \\E(x,x') \lor x=x' & \text{if } k = 1
    \\
     \exists y_k \big( \formel{dist'}_{\le \lfloor \frac{k}{2}\rfloor}(x,y_k) \land \formel{dist'}_{\le \lceil \frac{k}{2}\rceil}(y_k,x')\big) & \text{otherwise.} 
  \end{cases}
  \]
  Thus, for $k \leq n$, the quantifier depth of $\formel{dist'}_{\leq k}$ is bounded by $\lceil\log n\rceil$. Now, it suffices to note that we can actually get by with the three variables $x,x', y_k$ by reusing them in the subformulas that are defined inductively. We hence obtain the desired $\LC{3}{\lceil\log n\rceil}$-formula $\formel{dist}_{\leq k}$. Note that, for $k\ge 1$, the $\LC{3}{\lceil\log n\rceil}$-formula
  $\formel{dist}_{=k}(x,x')\coloneqq\formel{dist}_{\le k}(x,x')\wedge\neg
  \formel{dist}_{\le k-1}(x,x')$ states that $x$ and $x'$ have distance
  exactly $k$. Moreover, in every graph of order at most $n$, the $\LC{3}{\lceil\log n\rceil}$-formula $\formel{comp}(x,x')\coloneqq\formel{dist}_{\le n-1}(x,x')$ states that $x$ and $x'$ lie in the same connected component and the $\LC{3}{\lceil\log n\rceil}$-sentence
  $\formel{conn}_n\coloneqq\forall x\forall x'\formel{dist}_{\le n-1}(x,x')$
  states that the graph is connected.
  \uend
\end{example}

\subsection{The WL Algorithm}
We briefly review the WL algorithm. For details, we refer to
the recent survey \cite{kie20}. 

Let $k \geq 1$. The
\emph{atomic type} $\atp(G,\bar u)$ of a $k$-tuple
$\bar u=(u_1,\ldots,u_k)$ of vertices of a graph $G$ is the set of all atomic facts satisfied by these
vertices, that is, all adjacencies and equalities between the vertices. Hence, tuples $\bar
u=(u_1,\ldots,u_k)$ and $\bar v=(v_1,\ldots,v_k)$ of vertices of
graphs $G,H$, respectively, have the same atomic type if and only if the mapping $u_i\mapsto v_i$ is an isomorphism from
the graph $G[\{u_1,\ldots,u_k\}]$ to $H[\{v_1,\ldots,v_k\}]$. 

The algorithm $\WL{k}{}$ (the \emph{$k$-dimensional Weisfeiler-Leman algorithm}) takes a graph $G$ as input and computes the following
sequence of \emph{colourings} $\logic{wl}_i^k$ of $V(G)^k$ for $i\ge0$, until it
returns $\logic{wl}_\infty^k \coloneqq \logic{wl}_i^k$ for the smallest $i$ such
that, for all $\bar u,\bar v$, it holds that $\logic{wl}_i^k(\bar u)=\logic{wl}_i^k(\bar v)\iff
\logic{wl}_{i+1}^k(\bar u)=\logic{wl}_{i+1}^k(\bar v)$. Set $\logic{wl}_0^k(\bar u) \coloneqq
\atp(G,\bar u)$.
In the $(i+1)$-st \emph{iteration}, the colouring $\logic{wl}_{i+1}^k$ is defined
by
$
\logic{wl}_{i+1}^k(\bar u) \coloneqq \big(\logic{wl}_i^k(\bar u),M_{i}(\bar u)\big),
$
where, for $\bar u=(u_1,\ldots,u_k)$, we let $M_i(\bar u)$ be the multiset
\begin{align*}
&\Bigllcurly\big(\atp(G,(u_1,\ldots,u_k,v)),\logic{wl}_i^k(u_1,\ldots,u_{k-1},v),\\
  &\hspace{3.8cm}\logic{wl}_i^k(u_1,\ldots,u_{k-2},v,u_k),\ldots,\logic{wl}_i^k(v,u_2,\ldots,u_k)\big)\mid
v\in V\Bigrrcurly
\end{align*}
The algorithm $\WL{k}{}$ \emph{distinguishes} two graphs $G$, $H$ \emph{in $\ell$ iterations} if there is
a colour $c$ in the range of $\logic{wl}_\ell^k$ such that the number of
tuples $\bar u\in V(G)^k$ with $\logic{wl}_\ell^k(\bar u)=c$ is different
from the number of
tuples $\bar v\in V(H)^k$ with $\logic{wl}_\ell^k(\bar v)=c$. In this case, we say \emph{$\WL{k}{\ell}$ distinguishes $G$ and $H$}.
Moreover, $\WL{k}{\ell}$ \emph{identifies} $G$ if it distinguishes $G$ from all graphs
$H$ that are not isomorphic to $G$. 

\begin{theorem}[\cite{caifurimm92,immlan90}]\label{thm:quantdepth}
  Let $k \in \mathbb{N}$. Let $G$ and $H$ be graphs with $|G| = |H|$ and let $\bar{u} \coloneqq (u_1,\ldots,u_k)\in V(G)^k$ and
  $\bar{v} \coloneqq (v_1,\ldots,v_k)\in V(H)^k$. Then, for all $i \in \mathbb{N}$, the following are equivalent.
    \begin{enumerate}
      \item $\logic{wl}^k_i(\bar u)=\logic{wl}^k_i(\bar v)$.
      \item
        $G\models\phi(u_1,\ldots,u_k)\iff
        H\models\phi(v_1,\ldots,v_k)$ holds for every $\LC{k+1}{i}$-formula
        $\phi(x_1,\ldots,x_k)$.
    \end{enumerate}
\end{theorem}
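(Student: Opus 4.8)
The plan is to establish the equivalence of (1) and (2) by induction on $i$, carrying along a stronger third assertion that makes the induction close: for every colour $c$ that $\logic{wl}^k_i$ takes on $G$ or $H$ there is a formula $\phi^{(i)}_c(x_1,\dots,x_k)\in\LC{k+1}{i}$ that is satisfied (in whichever of $G$, $H$ a given $k$-tuple lives) exactly by the $k$-tuples of colour $c$. This definability assertion yields the implication (2)$\Rightarrow$(1) for free --- two tuples of different colours are separated by a formula --- and it is exactly what I will feed into the quantifier case of the other direction. For $i=0$ everything is immediate: $\logic{wl}^k_0(\bar w)=\atp(G,\bar w)$; a quantifier-free $\LC{k+1}{}$-formula with free variables among $x_1,\dots,x_k$ is a Boolean combination of the atoms $x_a=x_b$ and $E(x_a,x_b)$; two $k$-tuples satisfy the same ones precisely when they have equal atomic type; and for each atomic type the conjunction of the atoms and negated atoms it prescribes is the desired $\phi^{(0)}_c$.

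For the inductive step I would first unfold $\logic{wl}^k_{i+1}(\bar u)=\logic{wl}^k_{i+1}(\bar v)$ into $\logic{wl}^k_i(\bar u)=\logic{wl}^k_i(\bar v)$ together with equality of the multisets $M_i(\bar u)$ and $M_i(\bar v)$ and --- since $|G|=|H|$ --- rephrase the latter as the existence of a bijection $\pi\colon V(G)\to V(H)$ such that for every vertex $v$ the atomic type of $(u_1,\dots,u_k,v)$ equals that of $(v_1,\dots,v_k,\pi(v))$ and, for each coordinate $\ell$, $\logic{wl}^k_i$ agrees on the $k$-tuples obtained from $\bar u$, resp.\ $\bar v$, by overwriting the $\ell$-th entry with $v$, resp.\ $\pi(v)$. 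For (1)$\Rightarrow$(2) I would then do a side induction on the structure of $\phi\in\LC{k+1}{i+1}$: the Boolean cases are routine, and for a counting quantifier $\phi=\exists^{\ge p}x_j\,\psi$ I peel it off, use the Boolean structure of $\psi$ to reduce the application of the outer induction hypothesis to subformulas with at most $k$ free variables, and transport witnesses along $\pi$ --- the key being that whichever coordinate is affected, $\pi$ preserves exactly the $\logic{wl}^k_i$-data the hypothesis requires. For the definability half at level $i+1$ I would let $\phi^{(i+1)}_{\logic{wl}^k_{i+1}(\bar u)}$ be $\phi^{(i)}_{\logic{wl}^k_i(\bar u)}$ conjoined, over all ``one-step profiles'' $t=(\tau,c_1,\dots,c_k)$ that can occur in $G$ or $H$, with a formula saying that the number of vertices which, in the role of $x_{k+1}$, realise $t$ over $x_1,\dots,x_k$ equals the multiplicity of $t$ in $M_i(\bar u)$; ``realises $t$'' is the conjunction of a quantifier-free formula for $\tau$ with the formulas $\phi^{(i)}_{c_\ell}$ from the induction hypothesis, with the roles of $x_\ell$ and $x_{k+1}$ interchanged, and the counting is expressed with $\exists^{\ge m}x_{k+1}$. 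This stays inside $k+1$ variables and raises the quantifier depth by one, and a distinguishing $\LC{k+1}{i+1}$-formula can be read off from it whenever $M_i(\bar u)\ne M_i(\bar v)$.

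The step I expect to be the real obstacle is precisely the quantifier case of the side induction, together with the variable accounting it forces. A subformula $\psi$ may keep all $k+1$ variables free and may re-bind $x_{k+1}$ --- or some $x_j$ with $j\le k$ --- further down, so one has to keep track of which $k$ of the at most $k+1$ currently relevant vertices $\logic{wl}^k_i$ speaks about, must carry out substitutions by renaming variables rather than naively (so as not to leave the $(k+1)$-variable fragment or capture a variable), and must check that re-binding a variable never needs information beyond the single-coordinate overwrites that $M_i$ records. The cleanest way to organise all this is to interpose the bijective pebble game attached to $\WL{k}{}$ --- pebbles $1,\dots,k$ sitting on the tuples plus one spare pebble pair, a round consisting of Spoiler naming a pebble, Duplicator answering with a bijection of the vertex sets, and Spoiler relocating that pebble --- and to show separately that $\logic{wl}^k_i$-equality is equivalent to Duplicator surviving $i$ rounds and that this is equivalent to agreement on all $\LC{k+1}{i}$-formulas; in that formulation a round is one quantifier and the atomic-type test is applied to the whole current configuration, which makes both equivalences transparent. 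I would also note that the formulas $\phi^{(i)}_c$ may depend on $n$, since the set of colours that occur does; this is harmless, and is in fact all that the $\LC{O(1)}{O(\log n)}$-statements later in the paper need.
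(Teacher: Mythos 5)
The paper does not prove this theorem: it is stated as a known result and attributed to Cai--F\"urer--Immerman and Immerman--Lander, so there is no internal proof to compare your argument against. Your sketch is the standard argument underlying that citation: induct on $i$, strengthening the claim to include $\LC{k+1}{i}$-definability of every $\logic{wl}^k_i$-colour class (which both gives (2)$\Rightarrow$(1) immediately and supplies the counting subformulas needed one level up), and in the inductive step use $|G|=|H|$ to replace equality of the multisets $M_i(\bar u)$, $M_i(\bar v)$ by a one-step-profile-preserving bijection $\pi\colon V(G)\to V(H)$. The point you flag is indeed the genuine content of the proof: when you strip $\exists^{\ge p}x_j$, the body may have $k+1$ free variables, so the outer hypothesis does not apply to it directly; one must combine $\pi$ with the permutation-covariance of $\logic{wl}^k_i$ and a \emph{two-sided} swap of $x_\ell$ and $x_{k+1}$ (a one-sided substitution would capture a bound occurrence of $x_{k+1}$ or push you to $k+2$ variables) to bring the hypothesis back to bear on a $k$-tuple, and the same swap is what makes the definability formulas $\phi^{(i+1)}_c$ stay in $\LC{k+1}{}$. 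Your instinct to route this through the bijective $(k{+}1)$-pebble game is exactly the standard way to absorb that bookkeeping, and it is how the cited sources and subsequent expositions present the correspondence. I do not see a gap beyond those you already acknowledge, and they are fillable by the devices you name.
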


\section{3-Connected Planar Graphs}
\label{sec:3:connected}

Verbitsky \cite{ver07} proved that $\WL{O(1)}{O(\log n)}$
distinguishes any two 3-connected planar graphs. Before we discuss the specific version of this result
that we need here, let us briefly review some background on planar
graphs. Intuitively, a \emph{plane graph} is a graph drawn into the
plane with no edges crossing. A \emph{planar graph} is an abstract
graph $G$ isomorphic to a plane graph; an isomorphism from $G$ to
a plane graph is a \emph{planar embedding} of $G$. Now suppose $G$ is a plane graph. If we cut the plane along all edges of the graph,
the pieces that remain are the \emph{faces} of $G$ (note that one of these faces is unbounded). The closed walk
along the vertices and edges in the boundary of a face is the
\emph{facial walk} associated with this face. If $G$ is 2-connected,
then every facial walk is a cycle. If $G$ is 3-connected, we can
describe the facial cycles combinatorially: a cycle $C$ is a facial
cycle of $G$ if and only if $C$ is an induced subgraph of $G$ and
$G\setminus V(C)$ is connected. (This is the statement of
\emph{Whitney's Theorem} \cite{whi32}.) This implies that all planar
embeddings of a 3-connected planar graph have the same facial cycles,
which can be interpreted as saying that, combinatorially, all planar
embeddings of the graph are the same. Another way of describing a
planar embedding combinatorially is by specifying, for each vertex,
the cyclic order in which the edges incident to this vertex
appear. This is what is known as a \emph{rotation system}. It is easy
to see that a rotation system determines all facial walks, and,
conversely, the facial walks determine the rotation system. One last
fact that we need to know about plane graphs is \emph{Euler's formula}: if $G$ is a connected plane graph with $n$ vertices, $m$
edges, and $f$ faces, then $n-m+f=2$. (For details and more
background, we refer the reader to \cite{die16}.)

Let us now turn to the version of Verbitsky's theorem about
3-connected planar graphs that we need here. It says that, in a
3-connected planar graph, we can find three vertices such that once these
vertices are fixed, we can identify every other vertex by a
$\LC{O(1)}{O(\log n)}$-formula. 

\begin{theorem}[\cite{ver07}]\label{thm:3conn}
  Let $n \in \Nat$ and let $G$ be a 3-connected planar graph of order $|G| \leq n$ and 
  $v_1v_2\in E(G)$. Then there is a $v_3\in N_G(v_2)$ and for every
  $w\in V(G)$ a $\LC{O(1)}{O(\log n)}$-formula
  $\logic{id}_w(x_1,x_2,x_3,y)$ such that $G\models
  \logic{id}_w(v_1,v_2,v_3,w)$ and $G\not\models
  \logic{id}_w(v_1,v_2,v_3,w')$ for all $w'\in V(G)\setminus \{w\}$.
\end{theorem}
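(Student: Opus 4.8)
The plan is to use the fixed flag $(v_1,v_2,v_3)$ to pin down a combinatorial embedding of $G$, to turn that embedding into a canonical linear order $<$ on $V(G)$ that is definable by a single $\LC{O(1)}{O(\log n)}$-formula, and finally to pick out each vertex $w$ from its rank in $<$ by means of a counting quantifier.

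First I would fix the embedding. By Whitney's theorem the facial cycles of a $3$-connected planar graph are exactly its induced non-separating cycles, so its embedding is unique up to reflection. The edge $v_1v_2$ bounds exactly two faces, and since $|N_G(v_2)|\ge 3$ the two vertices cyclically adjacent to $v_1$ in the rotation around $v_2$ are distinct and different from $v_1$; I would take $v_3$ to be either of them, so that $v_1v_2v_3$ is a subpath of one of the two facial walks through $v_1v_2$. This selects one of the two reflections, so relative to the flag the rotation system is completely determined, and — using the rigidity of $3$-connected planar embeddings (an orientation-preserving automorphism fixing the arc $(v_2,v_1)$ is the identity, while an orientation-reversing one fixing $v_1,v_2$ swaps the two candidates for $v_3$) — the pointwise stabiliser of $(v_1,v_2,v_3)$ in $\mathrm{Aut}(G)$ is trivial, which is what makes a canonical order possible at all.

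The technical heart is to construct a $\LC{O(1)}{O(\log n)}$-formula $\formel{less}(x_1,x_2,x_3,y,y')$ which, once $(x_1,x_2,x_3)$ is interpreted by the flag, defines a fixed linear order $<$ on $V(G)$. I would take $<$ to be the order in which vertices are discovered by a breadth-first search started at $v_1$: the primary key is the distance from $v_1$, and ties among the vertices of a common BFS layer are broken by sweeping along that layer in the cyclic order induced by the embedding, with the flag fixing both a starting point and an orientation of the sweep. That the primary key is $\LC{3}{O(\log n)}$-definable is precisely the divide-and-conquer of Example~\ref{exa:dist}. The tie-breaking must likewise be expressed by a divide-and-conquer — both \emph{along} a single layer and \emph{across} consecutive layers, so that the reference points of successive layers are propagated consistently — and organising this so that the total quantifier depth stays $O(\log n)$ and only $O(1)$ variables are used, while accessing nothing about $G$ beyond distance and connectivity predicates of logarithmic depth (as in Example~\ref{exa:dist}), is the delicate point and the step I expect to be the main obstacle; this is essentially the content of Verbitsky's argument in \cite{ver07}.

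Granting such a formula $\formel{less}$, the theorem follows easily. For $w\in V(G)$ let $r_w\in\{1,\dots,|G|\}$ be the rank of $w$ in $<$ and put
\[
  \logic{id}_w(x_1,x_2,x_3,y)\;\coloneqq\;
  \exists^{\ge r_w-1}y'\,\formel{less}(x_1,x_2,x_3,y',y)\;\wedge\;
  \neg\,\exists^{\ge r_w}y'\,\formel{less}(x_1,x_2,x_3,y',y),
\]
which expresses that $y$ has exactly $r_w-1$ predecessors in $<$. Then $G\models\logic{id}_w(v_1,v_2,v_3,w)$, and since distinct vertices receive distinct ranks, $G\not\models\logic{id}_w(v_1,v_2,v_3,w')$ for every $w'\in V(G)\setminus\{w\}$. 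Its quantifier depth is $\qd(\formel{less})+1=O(\log n)$ and it uses only the $O(1)$ variables already occurring in $\formel{less}$, so $\logic{id}_w\in\LC{O(1)}{O(\log n)}$, as desired.
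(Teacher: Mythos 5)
Your high-level plan is sound in outline --- pin down the embedding with a flag $(v_1,v_2,v_3)$, produce a canonical linear order $<$ definable by a single $\LC{O(1)}{O(\log n)}$-formula $\formel{less}$, then identify $w$ by its rank using counting quantifiers --- and your final step (from $\formel{less}$ to $\logic{id}_w$) is correct. But you explicitly ``grant'' exactly the hard part: you never construct $\formel{less}$, and you acknowledge that making the BFS order definable at logarithmic depth, with ties broken by the embedding, is ``the main obstacle.'' This is a genuine gap, not a deferral. The difficulty is real: the tie-break within BFS layer $i$ must be anchored to a reference point inherited from layer $i-1$, so a naive recursion has depth $\Theta(n)$, and it is far from clear how to organise a divide-and-conquer that keeps the reference points of successive layers consistent while using only $O(1)$ variables. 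Saying ``this is essentially Verbitsky'' is a hand-wave: neither Verbitsky's argument nor the paper's proof constructs such a linear order.

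The paper's proof of Theorem~\ref{thm:3conn} avoids any global order. Starting from Lemma~\ref{lem:angles}, it introduces \emph{angle walks}: a sequence of angles where each step is one of two moves, ``advance to the aligned angle'' ($\curvearrowright$) or ``flip across the shared edge to the other face'' ($\curlywedgedownarrow$). Crucially, both moves are \emph{deterministic} --- for a given angle, the aligned angle and the flipped angle are unique --- so an angle walk is fully determined by its start and its direction word $\vec\delta\in\{\curvearrowright,\curlywedgedownarrow\}^{\le n}$. This determinism is what replaces your linear order: to identify $w$, pick any direction $\vec\delta$ of an angle walk from $(v_1,v_2,v_3)$ to some angle ending in $w$; no two distinct vertices can share a direction. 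The formula $\logic{awalk}^{(n)}_{\vec\delta}$ is built by binary splitting of $\vec\delta$ and quantifying over the middle angle, giving quantifier depth $O(\log n)$ with $O(1)$ variables --- exactly the divide-and-conquer that Example~\ref{exa:dist} enables, but applied to a local, deterministic successor structure rather than a global BFS order. If you want to repair your proposal, the cleanest route is to drop the linear order entirely and use this deterministic-walk idea.
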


The key step in
Verbitsky's proof is to define the rotation system underlying
the unique planar embedding of a 3-connected planar graph. To state
this formally, we use the terminology of \cite{gro98a,gro17}. An
\emph{angle} of a plane graph $G$ \emph{at} a vertex $v$ is a triple $(w,v,w')$ of
vertices such that $vw$ and $vw'$ are successive edges in a facial
walk of $G$. Two angles
$(v_1,v_2,v_3)$ and $(w_1,w_2,w_3)$ are \emph{aligned}
if $w_1=v_2$ and
$w_2=v_3$ and both angles appear in the same facial walk.
Observe that, if we know all angles at a vertex $v$, we can define
the cyclic permutation of the edges incident with $v$ induced by the
embedding. If we know all angles of $G$ and the alignment relation
between them, we can define the
rotation system. 
By Whitney's Theorem, all planar embeddings of a 3-connected planar
graph $G$ have the
same angles; we call them \emph{the angles of $G$}. Similarly, we can define
abstractly if
two angles of a 3-connected planar graph are aligned.

\begin{lemma}[\cite{ver07}]\label{lem:angles}
  There are $\LC{O(1)}{O(\log n)}$-formulas
  $\logic{ang}^{(n)}(x_1,x_2,x_3)$ and
  $\logic{aln}^{(n)}(x_1,x_2,x_3,x_4)$  
  such that for all 3-connected planar graphs $G$ of order $|G|=n$ and all
  $v_1,v_2,v_3,v_4\in V(G)$, we have
  \begin{align*}
    G\models \logic{ang}^{(n)}(v_1,v_2,v_3)&\iff(v_1,v_2,v_3)\text{
      is an angle of }G,\\
G\models
    \logic{aln}^{(n)}(v_1,\ldots,v_4)&\iff(v_1,v_2,v_3), (v_2,v_3,v_4)\text{
                                                    are aligned angles of }G.
  \end{align*}
\end{lemma}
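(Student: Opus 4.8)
The plan is to reduce both formulas to the definability of \emph{facial cycles}, which we can access combinatorially through Whitney's Theorem: in a $3$-connected planar graph $G$, a cycle $C$ is a facial cycle if and only if $C$ is an induced subgraph of $G$ and $G\setminus V(C)$ is connected. Since all facial walks of a $3$-connected planar graph are induced cycles, a triple $(v_1,v_2,v_3)$ is an angle of $G$ exactly when $v_1v_2,v_2v_3\in E(G)$, $v_1\neq v_3$, and there is a facial cycle $C$ with $v_2\in V(C)$ whose two $C$-neighbours are $v_1$ and $v_3$; moreover, there is at most one such $C$, because $v_1v_2$ and $v_2v_3$ are then consecutive in the rotation at $v_2$ and exactly one face lies between two rotation-consecutive edges. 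Likewise, given that $(v_1,v_2,v_3)$ and $(v_2,v_3,v_4)$ are angles, they are aligned if and only if $v_4\in V(C)$, where $C$ is the unique facial cycle witnessing the first angle and $v_2,v_4$ are the two $C$-neighbours of $v_3$.

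So it suffices to construct, with $x_1,x_2,x_3$ (to be interpreted as $v_1,v_2,v_3$) as parameters, a $\LC{O(1)}{O(\log n)}$-formula $\chi(y;x_1,x_2,x_3)$ which, whenever $(v_1,v_2,v_3)$ is an angle of $G$, defines exactly the vertex set $V(C)$ of the witnessing facial cycle $C$ (and which always contains $v_1,v_2,v_3$). Given such a $\chi$, write $S$ for the definable set $\{y:\chi(y;x_1,x_2,x_3)\}$ and let $\logic{ang}^{(n)}(x_1,x_2,x_3)$ assert that $E(x_1,x_2)$ and $E(x_2,x_3)$ hold, that $x_1\neq x_3$, that every vertex of $S$ has exactly two neighbours in $S$, that $G[S]$ is connected, and that $G\setminus S$ is connected. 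By Whitney's Theorem this holds precisely when $S$ is the vertex set of a facial cycle through the edges $v_1v_2$ and $v_2v_3$. The degree condition is readily expressed with counting quantifiers, and the two connectivity assertions are relativisations of the formulas $\formel{conn}_n$ and $\formel{dist}_{\le n-1}$ of Example~\ref{exa:dist} to the first-order-definable set $S$ (respectively to its complement $V(G)\setminus S$): one simply guards each edge atom and each quantified midpoint of the doubling recursion by membership in the relevant set, i.e.\ by $\chi$ (or by $\neg\chi$). Since such a guard is inserted once per recursion level and enters under a maximum rather than being compounded across levels, the quantifier depth of the relativised connectivity formulas is $\qd(\chi)+O(\log n)=O(\log n)$, and a constant number of variables still suffices by reusing those of $\chi$. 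For alignment, I would let $\logic{aln}^{(n)}(x_1,x_2,x_3,x_4)$ be the conjunction of $\logic{ang}^{(n)}(x_1,x_2,x_3)$, $\logic{ang}^{(n)}(x_2,x_3,x_4)$, $\chi(x_4;x_1,x_2,x_3)$, and the assertion that $x_2$ and $x_4$ are exactly the two neighbours of $x_3$ in $S$.

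The main obstacle is the construction of $\chi$, i.e.\ recovering the whole vertex set of a facial cycle from the purely local data of an angle by a formula with a bounded number of variables and only logarithmic quantifier depth. A facial cycle may have $\Theta(n)$ vertices, so it cannot be quantified directly; instead I would exploit uniqueness. Concretely, $V(C)\setminus\{v_2\}$ is the vertex set of the unique induced $v_1$-$v_3$ path in the punctured graph $G\setminus\big(\{v_2\}\cup(N_G(v_2)\setminus\{v_1,v_3\})\big)$ whose extension by $v_2$ has connected complement, so ``$y\in V(C)$'' should be expressible as a reachability property in this punctured graph, written with the doubling recursion of Example~\ref{exa:dist} to keep the quantifier depth logarithmic. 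The delicate point is that bare reachability over-approximates the vertex set of that path, so $\chi$ must encode enough of the ``connected complement'' requirement to pin $V(C)$ down exactly on genuine angles --- for non-angles nothing further needs to be forced, since if $S$ did turn out to be an induced cycle with connected complement then, by Whitney's Theorem, $(v_1,v_2,v_3)$ would after all have been an angle. Getting this exactness right without exceeding $O(1)$ variables and $O(\log n)$ quantifier depth is the crux; once $\chi$ is available, the remaining verifications are the bookkeeping sketched above.
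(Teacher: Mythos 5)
Your reduction to defining the vertex set $V(C)$ of the witnessing facial cycle is where the proposal stops short, and I don't think the gap is merely a bookkeeping detail --- it looks like a genuine obstruction to the approach as stated. A facial cycle $C$ can have $\Theta(n)$ vertices, so its vertex set is a global object that cannot be quantified over directly. Your candidate definition of $\chi$ via reachability in the punctured graph $G\setminus\big(\{v_2\}\cup(N_G(v_2)\setminus\{v_1,v_3\})\big)$ over-approximates badly (any vertex reachable from $v_1$ in that graph is included), and the correction you propose --- ``encode enough of the `connected complement' requirement to pin $V(C)$ down exactly'' --- is circular: the connected-complement test presupposes that you already know which set you are testing, i.e.\ it presupposes $\chi$. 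The only other natural way to traverse a facial cycle with doubling recursion is to iterate the ``next aligned angle'' relation, but that relation is precisely $\logic{aln}^{(n)}$, which is what the lemma is asking you to define. So the construction of $\chi$ is not a detail to be filled in; it is the whole content of the lemma, and the proposal gives no route to it.

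The paper (following Verbitsky) avoids this circularity entirely by never defining a long facial cycle directly. Euler's formula guarantees that a constant fraction of edges lie on facial walks of length at most $6$; these short cycles have only constantly many vertices, so Whitney's criterion (induced cycle with connected complement) is directly expressible with a constant number of quantified variables plus $O(\log n)$ depth for the connectivity checks. One then contracts the regions formed by the already-found faces, applies Euler's formula again to find a constant fraction of the remaining edges on ``short'' faces of the contracted graph, and recurses; after $O(\log n)$ rounds all faces are covered. The key structural difference is that the paper bootstraps from constant-size cycles and uses a measure-decreasing iteration, so at no point does it need to define a single unbounded-length facial cycle from a local angle --- exactly the step on which your construction founders.
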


This lemma is an easy consequence of the results in
\cite[Section~4]{ver07}. The terminology there is different, the
notion corresponding to (aligned) angles is that of a \emph{layout
system}. Verbitsky's proof is based on a careful (and tedious)
analysis of how two paths between the neighbours of a vertex may
intersect.

To give the reader some intuition about the lemma, we sketch an
alternative proof, which is based on ideas from \cite{gro00} (also see
\cite[Section~10.4]{gro17}). Let $G$ be a 3-connected planar graph,
and let us think of $G$ as being embedded in the plane. It
follows from Euler's formula that in
every plane graph of minimum degree $3$, a constant fraction
of the edges is contained in facial walks
of length at most $6$. Using
Whitney's Theorem, we can define the set of all $6$-tuples that
determine a facial cycle of length at most $6$ using a $\LC{9}{}$-formula of logarithmic quantifier depth. This gives us all the angles associated with these cycles and
the alignment relation on these angles. The faces corresponding to
these facial cycles of size at most $6$ can be partitioned into \emph{regions}, where two faces belong to the same
region if their boundaries share an edge (see
Figure~\ref{fig:regions}(a)). 

We define a new graph $G^{(1)}$ as follows: for every region $R$ of
$G$, we delete all vertices contained in the interior of $R$, all
vertices on the boundary of $R$ that have no neighbours outside the region, and all edges that are either in the interior or on the
boundary of the region. Then we add a fresh vertex $v_R$
and edges from $v_R$ to all vertices that remain in the boundary of
the region $R$ (see Figure~\ref{fig:regions}(b)). Each face of
$G^{(1)}$ corresponds to a face of $G$ that we have not found
yet. Applying Euler's formula again, we can prove that a constant
fraction of the edges of $G$ that remain edges of $G^{(1)}$ are
contained in facial walks of $G^{(1)}$ that contain at most six
vertices of degree $\ge 3$. We can define the facial walks of the
corresponding edges in $G$, again using Whitney's Theorem to test if a
cycle is facial. Note that, for this, we do not need $G^{(1)}$ to be
$3$-connected (in general, it is not); we always define facial cycles in
the original graph $G$. The new facial cycles together with those
found in the first step give us new regions (covering more faces of
$G$), and from these, we construct a graph $G^{(2)}$. Iterating the
construction, we obtain a sequence of graphs $G^{(i)}$. The
construction stops once we have found all facial walks of $G$. Since
we always use a constant fraction of the edges, this happens after at
most logarithmically many iterations. This completes our proof sketch
of Lemma~\ref{lem:angles}.

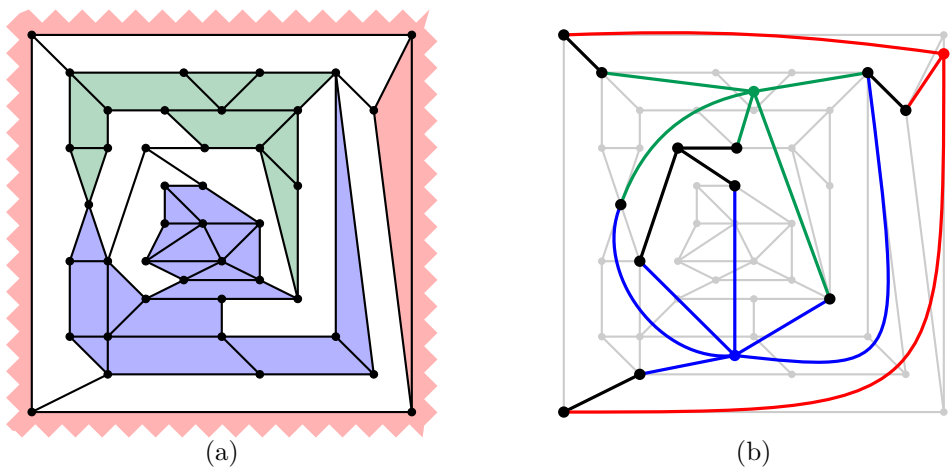
\begin{figure}
  \centering
  \begin{tikzpicture}[decoration=zigzag]

  \begin{scope}[scale=0.5]
  \fill[red!30, decorate] (-0.5,-0.5) rectangle (10.5,10.5);
  \fill[blue!30] (1,1) rectangle (9,9);
  
  \foreach \i/\x/\y in
  {1/10/10,2/10/0,3/0/0,4/0/10,5/4/9,6/6/9,7/8/9,8/9/8,9/9/1,10/6/1,11/2/1,12/1/2,13/1/4,14/1.5/5.5,15/1/7,16/1/9,17/5/8,18/7/8,19/8/2,20/5/2,21/2/2,22/2/4,23/2/7,24/2/8,25/3.5/8,26/6/7,27/7/6,28/7/3,29/5/3,30/3/3,31/3/7,32/4.55/7,33/6/5,34/6/3.5,35/4/3.5,36/3/4,37/3.5/5,38/3.5/6,39/4.5/5,40/5/4,41/4.5/6}
  \node[coordinate] (v\i) at (\x,\y) {\i};

  \fill[white] (v1) -- (v4) -- (v16) -- (v5) -- (v6) -- (v7) -- (v8)
  -- cycle;
  \fill[white] (v2) -- (v3) -- (v11) -- (v10) -- (v9) -- (v7) -- (v8)
  -- cycle;
 \fill[white] (v3) -- (v4) -- (v16) -- (v15) -- (v14) -- (v13) --
 (v12) -- (v11) -- cycle;
  \fill[white] (v22) -- (v14) -- (v23) -- (v24) -- (v25) -- (v32) --
  (v31) -- cycle;
  \fill[white] (v22) -- (v31) -- (v41) -- (v38) -- (v37) -- (v36) --
  (v35) -- (v30) -- cycle;
  \fill[white] (v31) -- (v32) -- (v26) -- (v28) -- (v34) -- (v33) --
  (v41) -- cycle;
  \fill[white] (v18) -- (v7) -- (v19) -- (v20) -- (v29) -- (v28) --
  (v27) -- cycle;
  \fill[ForestGreen!30] (v7) -- (v6) -- (v5) -- (v16) -- (v15) -- (v14) --
  (v23) -- (v24) -- (v25) -- (v32) -- (v26) -- (v28) -- (v27) -- (v18)
  -- cycle;

  \foreach \i in {1,...,41}
  \draw[fill] (v\i) circle (1mm);

  \foreach \i/\j in {1/2,2/3,3/4,4/1,5/6,6/7,7/8,8/1,8/2,7/9,9/10,10/11,3/11,11/12,12/13,13/14,14/15,15/16,4/16,16/5,17/5,17/6,17/18,18/7,7/19,19/20,19/9,20/10,20/21,21/11,21/12,21/22,22/13,22/14,23/14,23/15,23/24,24/16,24/25,25/17,26/18,26/27,26/28,27/18,27/28,28/29,29/20,29/30,30/21,30/22,31/22,31/32,32/25,32/26,31/41,41/33,41/38,33/39,33/40,33/34,34/28,34/40,34/35,35/40,35/30,35/36,36/40,36/39,36/37,37/39,37/38,38/39,39/40}
  \draw[thick] (v\i) edge (v\j);

  \path (5,-1.1) node {(a)}; 
\end{scope}


\begin{scope}[xshift=7cm,scale=0.5]
  \foreach \i/\x/\y in
  {1/10/10,2/10/0,3/0/0,4/0/10,5/4/9,6/6/9,7/8/9,8/9/8,9/9/1,10/6/1,11/2/1,12/1/2,13/1/4,14/1.5/5.5,15/1/7,16/1/9,17/5/8,18/7/8,19/8/2,20/5/2,21/2/2,22/2/4,23/2/7,24/2/8,25/3.5/8,26/6/7,27/7/6,28/7/3,29/5/3,30/3/3,31/3/7,32/4.55/7,33/6/5,34/6/3.5,35/4/3.5,36/3/4,37/3.5/5,38/3.5/6,39/4.5/5,40/5/4,41/4.5/6}
  \node[coordinate] (v\i) at (\x,\y) {\i};

  \foreach \i in {1,...,41}
  \path[fill=black!20] (v\i) circle (1mm);

  \foreach \i/\j in {1/2,2/3,3/4,4/1,5/6,6/7,7/8,8/1,8/2,7/9,9/10,10/11,3/11,11/12,12/13,13/14,14/15,15/16,4/16,16/5,17/5,17/6,17/18,18/7,7/19,19/20,19/9,20/10,20/21,21/11,21/12,21/22,22/13,22/14,23/14,23/15,23/24,24/16,24/25,25/17,26/18,26/27,26/28,27/18,27/28,28/29,29/20,29/30,30/21,30/22,31/22,31/32,32/25,32/26,31/41,41/33,41/38,33/39,33/40,33/34,34/28,34/40,34/35,35/40,35/30,35/36,36/40,36/39,36/37,37/39,37/38,38/39,39/40}
  \draw[thick,black!20] (v\i) edge (v\j);

  \foreach \i/\j in {4/16,7/8,3/11,22/31,31/41,31/32}
  \draw[very thick] (v\i) edge (v\j);
  
  \node[coordinate] (red) at (10,9.5) {};
  \fill[red] (red) circle (1.5mm);
  \draw[red,very thick] (red) edge[bend right=5] (v4) edge (v8);
  \draw[red,very thick] (red) .. controls (10,0) .. (v3);

    \node[coordinate] (blue) at (4.5,1.5) {};
  \fill[blue] (blue) circle (1.5mm);
  \draw[blue,very thick] (blue) edge[bend left=60] (v14) edge (v22) edge
  (v11) edge (v28) edge (v41);
  \draw[blue,very thick] (blue) .. controls (9,1) .. (v7);

      \node[coordinate] (ForestGreen) at (5,8.5) {};
  \fill[ForestGreen] (ForestGreen) circle (1.5mm);
  \draw[ForestGreen,very thick] (ForestGreen) edge (v16) edge (v7) edge (v28) edge
  (v32) edge[bend right] (v14);

    \foreach \i in {3,4,8,16,7,11,14,22,41,28,31,32}
  \fill (v\i) circle (1.5mm);

  \path (5,-1.1) node {(b)}; 

  \end{scope}
\end{tikzpicture}
  \caption{Defining the faces of a 3-connected planar graph: (a) shows
    a 3-connected planar graph $G$ with 3 regions formed by faces with at
  most 6 edges in their boundary; (b) shows the derived graph
  $G^{(1)}$; the faces of $G^{(1)}$ are in one-to-one correspondence
  to the white faces of $G$}
  \label{fig:regions}
\end{figure}

\begin{proof}[Proof of Theorem~\ref{thm:3conn}]
  Let $G$ be a 3-connected planar graph of order $|G|=n$. For angles
  $\vec v=(v_1,v_2,v_3)$, $\vec w=(w_1,w_2,w_3)$, we write
  $\vec v\curvearrowright\vec w$ if $\vec v,\vec w$ are aligned, and
  we write $\vec v\curlywedgedownarrow\vec w$ if $w_1=v_3$ and
  $w_2=v_2$ and $w_3\neq v_1$. Note that, for every angle $\vec v$,
  there is a unique $\vec w$ such that
  $\vec v\curvearrowright\vec w$, because, by the $3$-connectedness of
  $G$, every angle is in the boundary of a unique face, and the aligned
  angle belongs to the same face. There is also a unique $\vec w'$ such that
  $\vec v\curlywedgedownarrow\vec w'$, determined by the cyclic order of
  the edges and faces around a vertex. An \emph{angle walk} is a
  sequence $\vec v_0,\ldots,\vec v_\ell$ of angles such that for all
  $i\in[\ell]$, we have $\vec v_{i-1}\curvearrowright\vec v_i$ or
  $\vec v_{i-1}\curlywedgedownarrow \vec v_i$. The \emph{direction} of
  the angle walk $\vec v_0,\ldots,\vec v_\ell$ is the tuple
  $\vec\delta=(\delta_1,\ldots,\delta_\ell)\in\{\curvearrowright,
  \curlywedgedownarrow\}^\ell$ such that for every $i\in[\ell]$, we
  have $\vec v_{i-1}\delta_i\vec v_i$. Using Lemma \ref{lem:angles}, it is straightforward to prove that for every
  $\vec\delta\in\{\curvearrowright, \curlywedgedownarrow\}^{\le n}$, there
  is a $\LC{O(1)}{O(\log n)}$-formula $\logic{awalk}^{(n)}_{\vec \delta}(\vec x,\vec y)$
  such that for all
  $\vec v,\vec w\in V(G)^3$, we have
  $
    G\models \logic{awalk}^{(n)}_{\vec
      \delta}(\vec v,\vec w)$ if and only if there is an angle walk
      of direction $\vec\delta$ from $\vec v$ to $\vec w$.
  Now let $v_1v_2\in E(G)$. Then there is a $v_3$ such that
  $(v_1,v_2,v_3)$ is an angle. Let $\vec v \coloneqq (v_1,v_2,v_3)$. Note that,
  for every $w\in V(G)\setminus\{v_1,v_2,v_3\}$, there is an angle walk
  of length at most $n$ from $\vec v$ to some $\vec w=(w_1,w_2,w_3)$
  with $w_3=w$, simply because every path in $G$ can be extended to an
  angle walk. Let $\Delta(w)$ be the set of all directions
  $\vec\delta$ of length at most $n$ such that there is an angle walk
  of direction $\bar\delta$ from $\vec v$ to some
  $\vec w=(w_1,w_2,w_3)$ with $w_3=w$. Note that the sets $\Delta(w)$
  for $w\in V(G)\setminus\{v_1,v_2,v_3\}$ are mutually disjoint.  Let
  $\logic{id}_{\bar\delta}(x_1,x_2,x_3,y)\coloneqq \exists y_1\exists
  y_2\logic{awalk}^{(n)}_{\vec \delta}(x_1,x_2,x_3,y_1,y_2,y)$. Then
  for $\bar\delta\in\Delta(w)$, we have
  $G\models \logic{id}_{\bar\delta}(v_1,v_2,v_3,w)$ and
  $G\not\models \logic{id}_{\bar\delta}(v_1,v_2,v_3,w')$ for all
  $w'\neq w$. 
\end{proof}

\section{Decomposition into Blocks}\label{sec:tree:decompositions}

Let $G$ be a graph. A \emph{tree decomposition} of $G$ is a pair $(T,\beta)$ where $T$ is a tree and $\beta\colon V(T) \rightarrow 2^{V(G)}$ is a function such that for every $v \in V(G)$, the set $\{t \in V(T) \mid v \in \beta(t)\}$ is non-empty and induces a connected subgraph in $T$, and
for every $e \in E(G)$, there is a $t \in V(T)$ such that $e \subseteq \beta(t)$.
For $t \in V(T)$, we call $\beta(t)$ a \emph{bag} of $(T,\beta)$. 
The \emph{adhesion} of $(T,\beta)$ is
$
  \ad(T,\beta) \coloneqq \max\big\{|\beta(t)\cap\beta(u)|\bigmid tu\in
  E(T)\big\}
$
(or $0$ if $E(T)=\emptyset$). The \emph{width} of $(T,\beta)$ is
$\width(T,\beta) \coloneqq \max_{t \in V(T)} |\beta(t)| - 1.$

We denote the root of a rooted tree $T$ by $r^T$. For better readability, if the rooted tree is referred to as $T^*$, we set $r^* \coloneqq r^{T^*}$. The \emph{height} of
$T$ is the maximum length of a path from $r^T$ to a leaf of $T$. We
denote the descendant order of $T$ by $\trianglelefteq^T$. That is,
$t\trianglelefteq^T u$ if $t$ occurs on the path from $r^T$ to $u$.
A \emph{rooted tree decomposition} is a tree decomposition where the
tree is rooted. 

\begin{lemma}[Folklore]\label{lem:1}
  Let $T$ be a tree and $\chi \colon V(T)\to\Real_{\ge0}$. Then there is a node
  $t\in V(T)$ such that for every connected component $C$ of
  $T\setminus \{t\}$, it holds that
  \[
    \sum_{t\in V(C)}\chi(t)\le\frac{1}{2}\sum_{t\in V(T)} \chi(t).
  \]
\end{lemma}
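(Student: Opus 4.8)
The plan is to find the desired node by a standard greedy descent argument on the tree, using the weight function $\chi$ as a measure. First I would set $W \coloneqq \sum_{t \in V(T)} \chi(t)$; if $W = 0$ any node works, so assume $W > 0$. For a node $t$ and a component $C$ of $T \setminus \{t\}$, write $\chi(C) \coloneqq \sum_{s \in V(C)} \chi(s)$. Call a node $t$ \emph{heavy} if some component $C$ of $T \setminus \{t\}$ satisfies $\chi(C) > \tfrac12 W$; note that for a heavy node this overweight component is \emph{unique}, since the components of $T \setminus \{t\}$ are vertex-disjoint and their weights sum to $W - \chi(t) \le W$. The claim is that some node is not heavy, and any such node satisfies the conclusion.

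The key step is a descent argument. Start at the root $r^T$ (or any node). If the current node $t$ is not heavy, we are done. Otherwise let $C_t$ be the unique component of $T \setminus \{t\}$ with $\chi(C_t) > \tfrac12 W$, and let $t'$ be the unique neighbour of $t$ lying in $C_t$; move to $t'$. The crucial observation is that this process makes monotone progress: when we move from $t$ to $t'$ along the edge $tt'$, the component of $T \setminus \{t'\}$ that contains $t$ — call it $C'$ — satisfies $V(C') = V(T) \setminus V(C_t)$, hence $\chi(C') = W - \chi(C_t) < \tfrac12 W$, so this "backward" component is never the overweight one. Consequently the walk never revisits a node (it always enters a strictly smaller part of the tree, measured by the vertex sets of the regions $C_t$, which are strictly nested along the walk). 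Since $T$ is finite, the walk must terminate, and it can only terminate at a node that is not heavy. That node $t$ then has $\chi(C) \le \tfrac12 W$ for every component $C$ of $T \setminus \{t\}$, which is exactly the assertion.

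I expect the only subtle point to be the monotonicity claim that guarantees termination: one must argue that the sets $V(C_t)$ encountered along the walk are strictly decreasing, i.e.\ that after stepping to $t'$ the new overweight component $C_{t'}$ (if $t'$ is still heavy) is contained in $C_t \setminus \{t'\}$ and in particular does not contain $t$. This follows immediately from the displayed identity $\chi(C') = W - \chi(C_t) < \tfrac12 W$ for the backward component $C'$, so that the overweight component $C_{t'}$, if it exists, must be one of the \emph{other} components of $T \setminus \{t'\}$, all of which are subsets of $C_t \setminus \{t'\}$. Everything else is bookkeeping about components of a tree after deleting a vertex, which is routine.
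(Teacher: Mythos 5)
Your proof is correct and matches the paper's approach in substance: both locate the centroid by ``heading towards the heavier side'' and using the acyclicity of the tree to guarantee a fixed point. The paper's version is just a more compact packaging of the same idea --- it orients each edge of $T$ towards the side of the cut with the larger $\chi$-weight and notes that, since $T$ has $|V(T)|-1$ edges, some node must be a sink of this orientation; that sink is precisely the non-heavy node your greedy descent terminates at, and the strict-nesting observation you flag as the ``subtle point'' is exactly what makes the paper's orientation acyclic.
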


\begin{proof}
  Orient all edges towards the larger sum of $\chi$-weights in the connected components that the removal of the edge would induce, breaking
  ties arbitrarily. There will
  be a node such that all incident edges are oriented towards it. This node has the desired property.
\end{proof}

The following lemma is known in its essence (for example,
\cite{elbjaktan10}), though we are not aware of a reference where it
is stated in this precise form, which we will need later.

\begin{lemma}\label{lem:logdec}
  Let $T$ be a tree, and let $B\subseteq
  V(T)$ be a set of size $|B|\le 3$. Then there is a rooted tree
  decomposition $(T^*,\beta^*)$ of $T$ with $B \subseteq \beta^*(r^*)$ and the following additional properties.
  \begin{eroman}
  \item\label{it:ld1} The height of $T^*$ is at most $2\log |T|$. 
  \item\label{it:ld2} The width of $(T^*,\beta^*)$ is at most $3$.
  \item\label{it:ld3} The adhesion of $(T^*,\beta^*)$ is at most $3$.
  \item\label{it:ld5}
    For every $t^*\mspace{-4mu}\in\mspace{-3mu} V(T^*)$ and every child $u^*\mspace{-4mu}$ of $t^*\mspace{-4mu}$, the graph
    $T\mspace{-2mu}\left[(\bigcup_{v^*\trianglerighteq^{T^*}
        u^*}\beta^*\mspace{-2mu}(v^*))\setminus\beta^*(t^*)\right]$ is
    connected.
  \end{eroman}
\end{lemma}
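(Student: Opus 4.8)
The plan is to construct $(T^*,\beta^*)$ by a recursive separator decomposition, proving the following slightly more general statement by induction: for every subtree $S$ of $T$ (that is, $S=T[W]$ for a set $W$ inducing a connected subgraph) and every $B\subseteq V(S)$ with $|B|\le 3$, there is a rooted tree decomposition of $S$ with $B\subseteq\beta^*(r^*)$ satisfying \ref{it:ld2}, \ref{it:ld3} and \ref{it:ld5} and of height at most $2\log|S|$; for $S=T$ and the given $B$ this is the assertion of the lemma. In the recursive step the root bag is $\beta^*(r^*)\coloneqq B\cup\{c\}$ for a single, carefully chosen vertex $c\in V(S)$, so $|\beta^*(r^*)|\le 4$ and \ref{it:ld2} holds. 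The children of $r^*$ are in bijection with the connected components $D$ of $S\setminus(B\cup\{c\})$: for such a $D$ we let $B_D$ be the set of vertices of $B\cup\{c\}$ that have a neighbour in $V(D)$, we let $S_D$ be the subtree of $T$ induced by $V(D)\cup B_D$ (it is connected since every vertex of $B_D$ has a neighbour in the connected set $V(D)$), and we attach below $r^*$ the decomposition of $S_D$ provided by the induction hypothesis; its root bag contains $B_D$, which makes the bags containing any fixed vertex of $S$ connect up correctly, so a routine check shows that $(T^*,\beta^*)$ is a valid tree decomposition of $S$. For the edge from $r^*$ to such a child, the two bags intersect in a subset of $B_D$, and $|B_D|\le 3$ in each of the two cases for $c$ below, so together with the inductive bound inside $S_D$ we get \ref{it:ld3}; and the set of vertices occurring strictly below that child, with $\beta^*(r^*)$ removed, is exactly $V(D)$, which induces a connected subgraph of $T$ — this is precisely \ref{it:ld5}. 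The base case is $S\setminus(B\cup\{c\})=\emptyset$, where the single bag $V(S)=B\cup\{c\}$ works.

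Everything hinges on the choice of $c$, and here we distinguish two cases. If $|B|\le 2$, we take $c$ to be a centroid of $S$ with respect to the weight $\chi(v)\coloneqq 1$ if $v\notin B$ and $\chi(v)\coloneqq 0$ otherwise; by Lemma~\ref{lem:1}, each component of $S\setminus\{c\}$, and hence each component $D$ of $S\setminus(B\cup\{c\})$, contains at most $\tfrac12\bigl|V(S)\setminus B\bigr|$ vertices outside $B$, so the measure $|V(S)\setminus B|$ at least halves when passing to $S_D$, while trivially $|B_D|\le|B\cup\{c\}|\le 3$. If $|B|=3$, we instead take $c$ to be the \emph{median} of $B$ in $S$, i.e.\ the vertex lying on all three paths between pairs of vertices of $B$ (possibly $c\in B$). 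The key claim is that then $|B_D|\le 2$ for every component $D$ of $S\setminus(B\cup\{c\})$: if $D$ had a neighbour in three distinct vertices of $B\cup\{c\}$, then at least two of them, say $b$ and $b'$, would lie in $B\setminus\{c\}$; the $b$--$b'$ path of $S$ would then be contained in $\{b,b'\}\cup V(D)$, while the median $c$ lies on this path and $c\notin\{b,b'\}$, forcing $c\in V(D)$ — impossible, since $V(D)$ is disjoint from $B\cup\{c\}$. Hence $D$ has a neighbour in at most two vertices of $B\cup\{c\}$.

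For the height, note that along any root-to-leaf path of $T^*$, a step with $|B|=3$ (a median split, which need not shrink the piece at all) is never immediately followed by another such step, since by the claim all of its children satisfy $|B_D|\le 2$; and every step with $|B|\le 2$ (a centroid split) at least halves $|V(S)\setminus B|$. Thus any two consecutive steps along a path reduce this quantity by a factor of at least $2$, and a suitable potential argument — using $2\log$ of (one plus) the number of non-boundary vertices, increased by $1$ when $|B|=3$, together with the base case above — can be tuned to give height at most $2\log|S|$, as required.

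The point I expect to be the main obstacle is exactly this tension: a single separator vertex cannot simultaneously shrink the current piece by a constant factor \emph{and} keep the boundary down to size $\le 3$, because a connected subtree of $T$ can be adjacent to all four vertices of a $4$-element vertex set and would thereby inherit a boundary of size $4$. Taming the boundary forces the median step, which is useless for shrinking, and this is precisely where the factor $2$ in the height bound comes from. The remaining work — verifying that the recursively assembled bags genuinely form a tree decomposition (connectedness of the set of bags containing a fixed vertex, and coverage of every edge) and pinning down the exact constant $2$ rather than a generic $O(\log|T|)$ — is routine but needs a little care.
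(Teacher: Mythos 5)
Your approach is in essence the same as the paper's: a recursive separator decomposition that alternates a boundary-reduction step (when $|B|=3$) with a weighted centroid step (when $|B|\le 2$). Your use of the median (Steiner point) of $B$ plays the role that the paper's application of Lemma~\ref{lem:1} to the characteristic function of $B$ plays; both choices guarantee $|B_D|\le 2$ for every child. The main structural difference is in what you recurse on: you recurse on $S_D=T[V(D)\cup B_D]$, where the boundary $B_D$ consists of vertices from the \emph{parent's} bag, so $|S_D|=|V(D)|+|B_D|$ and only the quantity $m=|V(S)\setminus B|$ is halved, not $|S|$ itself. The paper instead recurses directly on the components $C_i$, with boundary $B_i=(B\cap V(C_i))\cup\{c_i\}$ chosen \emph{inside} $C_i$ (via the neighbour $c_i$ of the removed split vertex), and it threads the split vertex through an extra layer of tree nodes whose bags are merged with those above and below; this way $|C_i|\le|T|/2$ holds outright and two levels of $T^*$ always correspond to a halving of $|T|$.

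This difference matters for the exact constant, and here there is a genuine gap in your verification. Your proposed potential $\Phi=2\log(m+1)+[\,|B|=3\,]$ does \emph{not} drop by at least $1$ across a centroid step when $m$ is even: you would need $m_D\le(m-1)/2$, but Lemma~\ref{lem:1} only gives $m_D\le\lfloor m/2\rfloor=m/2$. Dropping the ``$+1$'' inside the logarithm fixes that step but then yields $H\le 2\log(|T|-3)+1$ in the case $|B|=3$, which exceeds $2\log|T|$ once $|T|\ge 11$. Your construction does in fact satisfy the stated bound, but to see it you need one more observation that you did not make: whenever a centroid step produces a child $D'$ with $|B_{D'}|=3$, the median of $B_{D'}$ lies \emph{strictly inside} $D'$ (all three boundary vertices sit outside $D'$ with a neighbour in $D'$, so each pairwise path runs through $D'$, and the Steiner point therefore cannot be any of the three boundary vertices). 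Hence after the very first median step, every median step removes a non-boundary vertex and the two-step recursion $m\to m-1\to\lfloor(m-1)/2\rfloor$ can be checked to stay within $2\log|T|$. Without this extra fact the claim that the potential ``can be tuned'' does not go through; with it your proof is correct and is a reasonable alternative to the paper's bookkeeping, though the paper's way of interleaving the two split vertices into shared bags makes the $2\log|T|$ bound fall out immediately.
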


\begin{proof}
  Condition \eqref{it:ld5} is something that we can easily achieve for every
  rooted tree decomposition: if, for the rooted subtree at some node, the subgraph induced by the bags in this subtree is not connected, we simply create one copy of the subtree for each connected component and only keep the vertices of that connected
  component in the copy. Moreover, the adhesion of a tree decomposition of width $3$ can only
  be larger than 3 if there are adjacent nodes with the same
  bag. If this is the case, we can simply contract the edge between
  the nodes. Repeating this, we can turn the decomposition into a
  decomposition of adhesion at most $3$. So we only need to take care
  of Conditions \eqref{it:ld1} and \eqref{it:ld2}.

  The proof is by induction on $n\coloneqq|T|$. We prove a slightly
  stronger statement; in addition to $B\subseteq\beta^*(r^*)$, we
  require $|\beta^*(r^*)\setminus B|\le 1$.

  The base case $n\le 4$ is easy: for $n=1$, the $1$-node tree
  decomposition of height $0$ has all the desired properties, and for
  $2\le n\le 4$, we can take a $2$-node tree decomposition of height
  $1$ where the root bag is $B$ and the leaf bag is $V(T)$.

  For the inductive step, suppose $n> 4$.
  \begin{cs}
    \case1
    $|B|<3$.\\
    By Lemma~\ref{lem:1}, there is a node $b\in V(T)$ such that for
    every connected component $C$ of $T\setminus\{b\}$, it holds that
    \[
      |V(C)|\le\frac{n}{2}.
    \]
    Let $C_1,\ldots,C_m$ be the vertex sets of the connected
    components of $T\setminus\{b\}$.
    For every $i\in[m]$, let $c_i$ be the unique neighbour
    of $b$ in $C_i$, and let $B_i\coloneqq (B\cap
    V(C_i))\cup\{c_i\}$. Note that $|B_i|\le 3$.
    
    By the induction hypotheses, for every $i$, there is a rooted tree
    decomposition $(T_i,\beta_i)$ of $C_i$ with the desired
    properties. In particular, the height of $T_i$ is at most
    $2\log(n/2)=2\log n-2$.

    For every $i$, let $r_i$ be the root of $T_i$. We form a new tree
    $T^*$ by taking the disjoint union of all the $T_i$, adding fresh
    nodes $r^*$ and $r_i^*$ for $i\le m$, and adding edges $r^*r_i^*$,
    $r_i^*r_i$ for all $i\in[m]$. We define
    $\beta^* \colon V(T^*)\to 2^{V(T)}$ by
    \[
      \beta^*(t)\coloneqq
      \begin{cases}
        B\cup\{b\}&\text{if }t=r^*,\\
        B_i\cup\{b\}&\text{if }t=r_i^*,\\
        \beta_i(t)&\text{if }t\in V(T_i).
      \end{cases}
    \]
    Then $(T^*,\beta^*)$ is a tree decomposition of $T$ of width at
    most $3$ and height at most $2\log n$.
    
    \case2 $|B|=3$.\\
    By Lemma~\ref{lem:1} applied to the characteristic function of
    $B$, there is a node $b\in V(T)$ such that for every connected
    component $C$ of $T\setminus\{b\}$, it holds that
    \[
      |V(C)\cap B|\le1.
    \]
    Let $C_1,\ldots,C_\ell$ be the connected components of
    $T\setminus\{b\}$, and for every $i$, let $B_i\coloneqq B\cap
    V(C_i)$. Then $|B_i|\le 1$.
    
    \begin{claim}\label{cl:logdec1}
      For every $i\in[\ell]$, there is a tree decomposition
      $(T_i,\beta_i)$ of width at most $3$ such that the height of
      $T_i$ is at most $2\log n-1$ and for the root $r_i$ of $T_i$ it
      holds that $B_i\subseteq\bag_i(r_i)$ and
      $|\bag(r_i)|\le 2$.
    \end{claim}

    \begin{claimproof}
      Let $\in[\ell]$
    and $n_i\coloneqq|C_i|$. By Lemma~\ref{lem:1}, there is a $c\in V(C_i)$ such
    that for every connected component $D$ of $C_i\setminus\{c\}$, it
    holds that $|D|\le n_i/2$. Choose such a $c$ and let
    $D_{1},\ldots,D_{m}$ be the connected components of
    $C_i\setminus\{c\}$. For every $j\in[m]$, let $d_{j}$ be the
    unique neighbour of $c$ in $D_j$. Let
    $B_{ij}\coloneqq (B_i\cap D_j)\cup\{d_j\}$. Then $|B_{ij}|\le 2$.

    By the induction hypotheses, for every $j$, there is a rooted
    tree decomposition $(T_{ij},\beta_{ij})$ of $D_j$ of width $3$
    such that the height of $T_{ij}$ is at most
    $2\log|D_i|\le2\log(n_i/2)\le 2\log n-2$. Furthermore, for the
    root $r_{ij}$ of $T_{ij}$, it holds that
    $B_{ij}\subseteq\beta_{ij}(r_{ij})$ and
    $|\beta_{ij}(r_{ij})\setminus B_{ij}|\le 1$. This implies
    $|\beta_{ij}(r_{ij})|\le 3$.

    We form a new tree $T_i$ by taking the disjoint union of all the
    $T_{ij}$ for $j\in[m]$, adding a fresh node $r_i$, and adding
    edges $r_ir_{ij}$ for all $j\in[m]$. We define
    $\beta_i \colon V(T_i)\to 2^{V(C_i)}$ by
    \[
      \beta_i(t)\coloneqq
      \begin{cases}
        B_i\cup\{c\}&\text{if }t=r_i,\\
        \beta_{ij}(r_{ij})\cup\{c\}&\text{if }t=r_{ij},\\
        \beta_{ij}(t)&\text{if }t\in V(T_{ij})\setminus\{r_{ij}\}.
      \end{cases}
    \]
    Then $(T_i,\beta_i)$ is a tree decomposition of $C_i$ with the
    desired properties.
    \end{claimproof}

    To complete the proof of the lemma, we form a new tree $T^*$ by taking the disjoint union of the
    $T_{i}$ of Claim~\ref{cl:logdec1} for $i\in[\ell]$, adding a fresh node $r^*$, and adding
    edges $r^*r_{i}$ for all $i\in[\ell]$. We define
    $\beta^* \colon V(T^*)\to 2^{V(T)}$ by
    \[
      \beta^*(t)\coloneqq
      \begin{cases}
        B\cup\{b\}&\text{if }t=r^*,\\
        \beta(r_i)\cup\{b\}&\text{if }t=r_{i},\\
        \beta_{i}(t)&\text{if }t\in V(T_{i})\setminus\{r_{i}\}.
      \end{cases}
    \]
    Then $(T^*,\beta^*)$ is a tree decomposition of $T$ of width at
    most $3$ and height at most $2\log n$.
    \qedhere
  \end{cs}
\end{proof}

Let us now turn to decompositions of a graph into its 3-connected
components. We need a few more definitions.  In the following, let $G$
be a connected graph and $X\subseteq V(G)$. The \emph{torso} of $X$ is the
graph $G\llbracket X\rrbracket$ with vertex set $X$ and edge set
\[
\Big\{vw\in\binom{X}{2}\Bigmid vw\in E(G)\text{ or
}v,w\in N_G(C)\text{ for some connected component $C$ of }G\setminus X\Big\}.
\]
The
\emph{adhesion} of $X$ is the maximum of $|N_G(C)|$ for all connected
components $C$ of $G\setminus X$. It is easy to see that if the
adhesion of $X$ is at most $2$, then the torso $G\llbracket
X\rrbracket$ is a topological subgraph of $G$ and if the
adhesion of $X$ is at most $1$, then the torso $G\llbracket
X\rrbracket$ is just the induced subgraph $G[X]$.

A \emph{block}\footnote{Our usage of the term ``block'' is non-standard. If anything, what we call a ``block'' might better be called ``2-block''. But just using ``block'' is more convenient.}
of $G$ is a set $B\subseteq V(G)$ such that
\begin{itemize}
\item either $G\llbracket B\rrbracket$ is 3-connected and the adhesion
of $B$ is at most $2$,
\item or $G\llbracket B\rrbracket$ is a complete graph of order $3$ and the adhesion
of $B$ is at most $2$, 
\item or $G\llbracket B\rrbracket$ is a complete graph of order $2$ and the adhesion
of $B$ is at most $1$. 
\end{itemize}
We call blocks with $3$-connected torsos \emph{proper blocks} and blocks
of cardinality at most $3$ \emph{degenerate blocks} of \emph{order}
3 and 2, respectively.  It is easy
to see that for distinct blocks $B,B'$, neither $B\subseteq B'$ nor
$B'\subseteq B$ holds and, furthermore, $|B\cap B'|\le 2$.
A \emph{block separator} is a set
$S\subseteq V(G)$ such that there are distinct blocks $B,B'$ with
$S=B\cap B'$, and the two sets $B\setminus S$ and $B'\setminus S$
belong to different connected components of $G\setminus S$. Note that by the definition of blocks, block separators have cardinality at most $2$.

Observe that the torsos of all blocks of a graph are topological
subgraphs. As all topological subgraphs of a planar graph are planar,
the torsos of the blocks of a planar graph are planar. In particular,
the torsos of proper blocks are 3-connected planar graphs. This will
be important later.

Call a tree decomposition $(T,\beta)$ \emph{small} if for all distinct nodes $t,u\in V(T)$, it holds that $\beta(t)\not\subseteq\beta(u)$.

\begin{lemma}[\cite{tut84}]\label{lem:3cc}
Every connected graph $G$ has a small tree decomposition $(T,\beta)$ of
adhesion at most $2$ such that for all $t\in V(T)$, the bag $\beta(t)$ is a block of $G$.
\end{lemma}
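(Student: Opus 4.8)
The plan is induction on $|V(G)|$, repeatedly splitting $G$ along separators of size at most $2$. As the base case we handle the situation that $V(G)$ itself is a block of $G$ --- this covers $G=K_2$, $G=K_3$, and every $3$-connected $G$ --- by taking the one-node tree decomposition with bag $V(G)$, which is trivially small, has adhesion $0$, and whose only bag is a block. (It is important to phrase the base case via ``$V(G)$ is a block'' rather than via ``$G$ is $3$-connected'', so that the triangle $K_3$, which is not $3$-connected by our definition, is not missed.)

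Otherwise $G$ is neither $K_2$ nor $K_3$ nor $3$-connected, so it has a separator $S\subseteq V(G)$ with $|S|\le 2$: a cut vertex $S=\{x\}$ if one exists, and otherwise (then $G$ is $2$-connected on at least four vertices and not $3$-connected) a $2$-separator $S=\{x,y\}$. Let $C_1,\dots,C_m$ with $m\ge 2$ be the connected components of $G\setminus S$, and let $G_i\coloneqq G[V(C_i)\cup S]$, except that when $|S|=2$ we additionally insert a ``virtual'' edge $xy$. Standard arguments give $|V(G_i)|<|V(G)|$ and, when $|S|=2$, that each $G_i$ is again $2$-connected. By the induction hypothesis each $G_i$ has a small tree decomposition $(T_i,\beta_i)$ of adhesion at most $2$ whose bags are blocks of $G_i$; since $S$ induces a clique in $G_i$, some bag $\beta_i(t_i)$ contains $S$. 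We form $T^*$ as the disjoint union of the $T_i$ together with edges $t_1t_i$ for all $i\ge 2$, and set $\beta^*\coloneqq\bigcup_i\beta_i$.

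It then remains to verify four properties. That $(T^*,\beta^*)$ is a tree decomposition of $G$ follows because $S$ lies in every $\beta_i(t_i)$ and the $C_i$ are pairwise disjoint (so the bags containing a fixed vertex stay connected), and every edge of $G$ lies inside some $G_i$. Its adhesion remains at most $2$ because $\beta^*(t_1)\cap\beta^*(t_i)=S$ for the new edges, using $V(C_1)\cap V(C_i)=\emptyset$. It remains small because within each $T_i$ this is inherited, and a bag of $T_i$ contained in a bag of $T_j$ with $j\ne i$ would have to be a subset of $S$, whereas $S$ is never a block of $G_i$ (it has at most two vertices and, when $|S|=2$, adhesion $2$ in the $2$-connected graph $G_i$, so it fails the degenerate-block condition). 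The crux is the fourth property, that every bag is a block of $G$: for a block $B$ of $G_i$ one must show $G\llbracket B\rrbracket=G_i\llbracket B\rrbracket$ and that the adhesion of $B$ is the same in $G$ as in $G_i$. The key observation is that the portion of $G$ lying ``behind'' $S$ as seen from $G_i$ forms a single connected component of $G\setminus B$ whose neighbourhood into $B$ is exactly $S\cap B$ (of size at most $2$, and for $|S|=2$ already recorded by the virtual edge $xy$); hence it creates no torso edge that $G_i$ does not already have and does not push the adhesion above the threshold required of a block.

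I expect this fourth step --- the torso/adhesion compatibility under the splitting, with a case distinction on $|S\cap B|\in\{1,2\}$ and, when $|S|=2$, on whether $xy$ is a real edge of $G$ --- to be the main obstacle; everything else is routine bookkeeping about tree decompositions. (Alternatively, one could invoke Tutte's decomposition of a $2$-connected graph into $3$-connected components, cycles, and bonds, then replace each cycle of length at least $4$ by a path of triangle-bags, suppress the bond nodes by reconnecting their neighbours directly, and finally glue the $2$-connected pieces along cut vertices via the block--cut tree; the verification is essentially the same.)
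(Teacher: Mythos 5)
The paper offers no proof of this lemma at all: it cites Tutte and only sketches how to merge his decomposition of a $2$-connected graph into $3$-connected pieces and cycles with the block--cut tree, replacing cycles by fans of triangles. Your direct induction is therefore a genuinely different and more self-contained route, and it is essentially sound: the construction of $(T^*,\beta^*)$, the tree-decomposition axioms, the adhesion bound via $\beta^*(t_1)\cap\beta^*(t_i)=S$, and the smallness argument all check out.

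The ``key observation'' in your crux step is, however, not correct as stated. The portion of $G$ behind $S$ is not in general a single connected component of $G\setminus B$: when $|S|=1$ and $S\subseteq B$, the sets $C_j$ for $j\ne i$ are \emph{separate} components; in the other cases the relevant component also swallows parts of $V(C_i)\setminus B$ reachable from $S\setminus B$, so its neighbourhood in $B$ need not be ``exactly $S\cap B$'' --- it can legitimately also contain vertices of $V(C_i)\cap B$. The statement your argument should actually rest on is: restriction to $V(G_i)$ gives a neighbourhood-preserving bijection between the components of $G\setminus B$ that meet $V(G_i)\setminus B$ and the components of $G_i\setminus B$; the virtual edge is essential here, since without it $x$ and $y$ could lie in distinct components of $G_i\setminus B$ but in the same component of $G\setminus B$. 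The only leftover components of $G\setminus B$ are the $C_j$ themselves (arising only when $|S|=1$ and $x\in B$), each with neighbourhood $\{x\}$, so they add no torso edge and keep adhesion within bounds. With that reformulation one gets $G\llbracket B\rrbracket=G_i\llbracket B\rrbracket$, and the adhesion of $B$ in $G$ --- which can strictly exceed that in $G_i$ (e.g.\ $0$ versus $1$ when $B=V(G_i)$), contrary to your claim that they are ``the same'' --- still respects the block thresholds. Since you explicitly flag this as the main obstacle, this reads as a sketch rather than an error, but this is precisely the spot where the case analysis on $|S\cap B|$ must be carried out. One trivial gap: $K_1$ is connected, has no blocks, and admits no separator, so your induction (and the lemma itself) tacitly assumes $|G|\ge 2$.
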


The decomposition in this lemma is essentially Tutte's well-known
decomposition of a graph into its 3-connected components described in
a slightly non-standard way. The two main differences are that,
normally, the decomposition is only described for 2-connected graphs,
whereas arbitrary connected graphs are first decomposed into their
2-connected components. We merge these decompositions into one. The
second difference is that Tutte decomposes a 2-connected graph into
3-connected pieces (our proper blocks) and cycles. Instead of cycles,
we only allow triangles, i.e., degenerate blocks of order $3$. This is
possible because every cycle can be decomposed into triangles. What we
lose with our form of decomposition is the canonicity: a graph may
have several structurally different decompositions of the form described
in the lemma. 

In the following, we apply Lemma~\ref{lem:logdec} to the tree
of the decomposition of Lemma~\ref{lem:3cc} and obtain a decomposition
of logarithmic height that is still essentially a decomposition into
3-connected components.

\begin{lemma}\label{lem:tree:dec:log:height}
Every connected graph $G$ has a rooted tree decomposition $(T^*,\beta^*)$ with the
following properties.
\begin{eroman}
\item\label{it:tdl1} The height of $T^*$ is at most $2\log|G|$. 
\item\label{it:tdl2} For every $t^*\in V(T^*)$, there are sets $B_1,\ldots,B_4$ (not
 necessarily distinct or disjoint) such that $\beta^*(t^*)=\bigcup_{i=1}^4B_i$ and each $B_i$ is either a block or a block separator.
\item\label{it:tdl3} The adhesion of $(T^*,\beta^*)$ is at most $6$.
\item\label{it:tdl6} For every $t^*\in V(T^*)$ and every child $u^*$ of $t^*$, the induced subgraph
\[
G\left[\Big(\bigcup_{v^*\trianglerighteq^{T^*}u^*}
\beta^*(v^*)\Big)\setminus\beta^*(t^*)\right]
\]
    is connected.
  \end{eroman}
\end{lemma}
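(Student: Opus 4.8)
The plan is to start from the block decomposition $(T,\beta)$ of $G$ guaranteed by Lemma~\ref{lem:3cc} (so $T$ is a tree, every bag $\beta(t)$ is a block of $G$, the decomposition is small, and its adhesion is at most $2$), and then apply Lemma~\ref{lem:logdec} to the underlying tree $T$ to reorganise it into something of logarithmic height while keeping control over how bags of the original decomposition get merged. The crucial quantitative fact is the width bound in Lemma~\ref{lem:logdec}\eqref{it:ld2}: the new tree decomposition $(T^+, \gamma)$ of the \emph{tree} $T$ has bags $\gamma(s)$ of size at most $4$, i.e.\ each $\gamma(s)$ is a set of at most four nodes of $T$. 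This is exactly the source of the ``$B_1,\dots,B_4$'' in condition~\eqref{it:tdl2}: I would define $\beta^*(s) \coloneqq \bigcup_{t \in \gamma(s)} \beta(t)$, so that each $\beta^*(s)$ is a union of at most four blocks of $G$. I first need to check the standard fact that if $(T^+,\gamma)$ is a tree decomposition of the tree $T$ and $(T,\beta)$ is a tree decomposition of $G$, then composing in this way yields a tree decomposition $(T^*,\beta^*)$ of $G$ (with $T^* \coloneqq T^+$); this is because connectivity in $T^+$ of the nodes whose $\gamma$-bag contains a fixed $t\in V(T)$, together with connectivity in $T$ of the nodes whose $\beta$-bag contains a fixed $v \in V(G)$, combine to give connectivity of $\{s : v \in \beta^*(s)\}$, and every edge of $G$ lives in some $\beta(t)$, hence in $\beta^*(s)$ for any $s$ with $t \in \gamma(s)$.

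Next I would verify the four listed properties. For~\eqref{it:tdl1}, the height of $T^* = T^+$ is at most $2\log|T| \le 2\log|G|$ by Lemma~\ref{lem:logdec}\eqref{it:ld1}, since $|T| \le |G|$ (the decomposition of Lemma~\ref{lem:3cc} is small, so no bag is contained in another, and in particular $|V(T)| \le |V(G)|$; even a crude bound on the number of blocks of a connected graph suffices here, but smallness gives it cleanly). Property~\eqref{it:tdl2} is immediate from the construction together with the width bound: $\beta^*(s)$ is the union of the blocks $\beta(t)$ for the at most four nodes $t \in \gamma(s)$. For~\eqref{it:tdl3}, if $su$ is an edge of $T^*$ then $\beta^*(s) \cap \beta^*(u) = \bigl(\bigcup_{t\in\gamma(s)}\beta(t)\bigr) \cap \bigl(\bigcup_{t'\in\gamma(u)}\beta(t')\bigr)$; here I want to use the adhesion bound $\ad(T^+,\gamma)\le 3$ from Lemma~\ref{lem:logdec}\eqref{it:ld3}, so $|\gamma(s)\cap\gamma(u)| \le 3$, but that alone does not immediately bound $|\beta^*(s)\cap\beta^*(u)|$ — I need to control how the blocks sitting at nodes of $\gamma(s)\setminus\gamma(u)$ can intersect those at $\gamma(u)\setminus\gamma(s)$. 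The key observation is that by the connectivity condition~\eqref{it:ld5} in Lemma~\ref{lem:logdec}, the set of $T$-nodes appearing below $u$ in $\gamma$ but not in $\gamma(s)$ induces a connected subtree of $T$ separated from the rest; hence any vertex $v\in V(G)$ in $\beta^*(s)\cap\beta^*(u)$ must lie in $\beta(t)$ for some $t\in\gamma(s)$ and in $\beta(t')$ for some $t'$ strictly below, which forces $v$ to lie in $\beta(t'')$ for every node $t''$ on the $T$-path between them, in particular in the bag of a node in the separator $\gamma(s)\cap\gamma(u)$; so $\beta^*(s)\cap\beta^*(u) \subseteq \bigcup_{t\in\gamma(s)\cap\gamma(u)}\beta(t)$ — wait, more carefully, $\beta^*(s)\cap\beta^*(u)$ is contained in the union, over pairs $(t,t')$ with $t\in\gamma(s), t'\in\gamma(u)$, of $\beta(t)\cap\beta(t')$, and each such intersection equals $\bigcap$ of the bags along the $T$-path from $t$ to $t'$, which passes through $\gamma(s)\cap\gamma(u)$; so in fact $\beta^*(s)\cap\beta^*(u)\subseteq \bigcup\{\beta(t)\cap\beta(t') : t,t'\in\gamma(s)\cap\gamma(u)\}$ plus the degenerate cases where $\gamma(s)\subseteq\gamma(u)$ or vice versa. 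Since adjacent nodes of $T$ in the block decomposition have $|\beta(t)\cap\beta(t')|\le 2$ (adhesion at most $2$), and $|\gamma(s)\cap\gamma(u)| \le 3$, one gets at most $\binom{3}{2}\cdot 2 = 6$ (or, reading it as three bags each of which meets the adhesion structure), the bound $6$. I would spell this out via: the at most $3$ common $T$-nodes contribute at most $3$ bags, each adjacent pair of which shares at most $2$ vertices, but since these bags form a connected subtree of $T$ on at most $3$ nodes, the whole union of \emph{shared} material has size at most $\ldots$ — the clean way is: $\beta^*(s)\cap\beta^*(u) \subseteq \bigcup_{t \in \gamma(s)\cap\gamma(u)} \beta(t) \cap N$, and a direct count using $|\gamma(s)\cap\gamma(u)|\le 3$ and pairwise-intersection-at-most-$2$ among a connected tree of bags gives $|\beta^*(s)\cap\beta^*(u)| \le 6$. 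This counting step is the main obstacle and needs care.

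Finally, for~\eqref{it:tdl6}, fix $s^* \in V(T^*)$ and a child $u^*$; let $W \coloneqq \bigl(\bigcup_{v^* \trianglerighteq u^*}\beta^*(v^*)\bigr)\setminus\beta^*(s^*)$. Unwinding definitions, $W = \bigl(\bigcup_{t \in U}\beta(t)\bigr)\setminus\beta^*(s^*)$ where $U \coloneqq \bigl(\bigcup_{v^*\trianglerighteq u^*}\gamma(v^*)\bigr)$ is the set of $T$-nodes appearing in the subtree of $T^+$ rooted at $u^*$. By Lemma~\ref{lem:logdec}\eqref{it:ld5}, the graph $T[U \setminus \gamma(s^*)]$ is connected — that is, $U \setminus \gamma(s^*)$ induces a connected subtree $T'$ of the original tree $T$. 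Now $G[W]$ is the subgraph of $G$ induced on all vertices that appear in a bag $\beta(t)$ for some $t \in T'$ but not in any bag $\beta(t'')$ for $t'' \in \gamma(s^*)$; since $(T,\beta)$ is a tree decomposition and $T'$ is a connected subtree, the union $\bigcup_{t\in T'}\beta(t)$ induces a connected subgraph of $G$ whenever $T'$ is connected — more precisely, I would argue that $G$ restricted to $\bigcup_{t\in T'}\beta(t)$ is connected because $T'$ is connected and each bag is connected in... no, bags need not be connected in $G$, but the standard fact is: for a tree decomposition, if $T'$ is a connected subtree then $\bigcup_{t\in T'}\beta(t)$ need not be connected; however here I can use the sharper structural fact that removing $\beta^*(s^*)$ — which contains a separator — leaves exactly one ``side''. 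The cleanest route: the adhesion-$2$ block decomposition has the property that for adjacent $t,t''$ in $T$, $\beta(t)\cap\beta(t'')$ separates $\beta(t)\setminus\beta(t'')$ from $\beta(t'')\setminus\beta(t)$; combined with connectivity of $T'$ and the fact that $\gamma(s^*)$ contains all the ``boundary'' nodes of $T'$ in $T$, one shows $G[W]$ is connected by induction on $|T'|$, peeling off a leaf of $T'$ whose bag's overlap with the rest lies in $\beta^*(s^*)$. I expect this last verification, like the adhesion bound, to require some fiddly but routine case analysis; the genuinely substantive content is all packaged into Lemmas~\ref{lem:logdec} and~\ref{lem:3cc}, and the work here is bookkeeping about composed tree decompositions.
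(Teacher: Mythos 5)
Your overall plan---compose the block decomposition of Lemma~\ref{lem:3cc} with the logarithmic-height decomposition of the tree $T$ from Lemma~\ref{lem:logdec}---is the same starting point as the paper, and Conditions~\eqref{it:tdl1} and~\eqref{it:tdl2} do come out cleanly. But the naive composition $\beta^*(s) \coloneqq \bigcup_{t\in\gamma(s)}\beta(t)$ destroys the adhesion bound~\eqref{it:tdl3}, and this is where the real content of the lemma lies, not in bookkeeping. Concretely: if $su$ is an edge of the new tree and some $T$-node $t$ lies in $\gamma(s)\cap\gamma(u)$ (which Lemma~\ref{lem:logdec}\eqref{it:ld3} allows to have up to three elements), then your definition puts the \emph{entire} block $\beta(t)$ into both $\beta^*(s)$ and $\beta^*(u)$, so $\beta(t)\subseteq\beta^*(s)\cap\beta^*(u)$. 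A (proper) block can have size $\Theta(n)$, so the adhesion of $(T^*,\beta^*)$ is unbounded. Your proposed reduction $\beta^*(s)\cap\beta^*(u)\subseteq\bigcup\{\beta(t)\cap\beta(t'):t,t'\in\gamma(s)\cap\gamma(u)\}$ does not rescue this: with $t=t'$ that union already contains the whole block $\beta(t)$, so the ``pairwise intersection at most $2$'' count never gets off the ground.

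The paper's proof introduces exactly the extra idea needed to fix this: a $T$-node $t$ contributes its \emph{full} block $\beta(t)$ to $\beta^*(\cdot)$ only at the unique topmost $T^*$-node $\min^*(t)$ whose $\gamma$-bag contains $t$. At any strictly lower $T^*$-node $v^*$ whose $\gamma$-bag still contains $t$, only the size-$\le 2$ block separator $\beta(t)\cap\beta(u)$ is added, where $u$ is the unique $T$-neighbour of $t$ whose block first appears at or below $v^*$ (the paper calls such $t$ \emph{active} at $v^*$, with \emph{activator} $u$; uniqueness of the activator is proved using Lemma~\ref{lem:logdec}\eqref{it:ld5}). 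With this refinement, the shared $T$-nodes between adjacent $T^*$-bags each contribute a separator $S_i$ of size $\le 2$, and a path argument in $T$ (using the adhesion-$2$ property of $(T,\beta)$) shows that the contributions of the non-shared $T$-nodes to $\beta^*(t^*)\cap\beta^*(u^*)$ are also absorbed into $S_1\cup S_2\cup S_3$, yielding adhesion $\le 6$. This refinement is also why Condition~\eqref{it:tdl2} allows each $B_i$ to be a block \emph{separator} rather than only a block --- a flexibility the downstream Lemma~\ref{lem:dec:logic} depends on. It additionally requires a nontrivial (re-)verification that the refined $\beta^*$ still satisfies the connectivity axiom of tree decompositions, which the paper proves by an induction along paths in $T$; that verification is not needed for your version but is essential once one stops repeating full blocks. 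In short, the missing idea is the active/activator machinery, and without it Condition~\eqref{it:tdl3} genuinely fails.
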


\begin{proof}
  Let $(T,\beta)$ be the decomposition of $G$ into its blocks obtained
  from Lemma~\ref{lem:3cc}.  Let $(T^*,\beta_T^*)$ be the rooted tree
  decomposition of $T$ obtained from Lemma~\ref{lem:logdec}. Let $r^*$
  be the root of $T^*$, and let
  $\trianglelefteq^*{\coloneqq}\trianglelefteq^{T^*}$ be the partial
  descendant 
  order associated with $T^*$. For every
  $t^*\in V(T^*)$, let
  \begin{align*}
    \gamma^*_T(t^*)&{\coloneqq}\bigcup_{u^*\trianglerighteq^*t^*}\beta_T^*(u^*)\\
  \intertext{and}
    \sigma^*_T(t^*)&{\coloneqq}
                     \begin{cases}
                       \emptyset&\text{if }t^*=r^*,\\
                       \beta^*_T(s^*)\cap\beta^*_T(t^*)&\text{for the parent $s^*$ of $t^*$ in $T^*$, otherwise} \, .
                     \end{cases}
  \end{align*}
    For every $t\in V(T)$, we let $\min^*(t)$ be the
  unique $\trianglelefteq^*$-minimal node $t^*\in V(T^*)$ such that
  $t\in\beta_T^*(t^*)$. The uniqueness follows from the fact that the
  set of all $t^*\in V(T^*)$ with $t\in\beta_T^*(t^*)$ is connected in
  $T^*$.

  Let us call $t\in V(T)$ \emph{active} in $t^*\in V(T^*)$ if
  $t\in\bag^*_T(t^*)$ and 
  $t^*\neq\min^*(t)$ and there is a $u\in N_T(t)$ such that
  $t^*\trianglelefteq\min^*(u)$. We call $u$ an \emph{activator} of
  $t$ in $t^*$.

  \begin{claim}\label{cl:intersection:1}
    Suppose that $t\in V(T)$ is active in $t^*\in V(T^*)$. Then there
    is a unique activator of $t$ in $t^*$.
  \end{claim}
  \begin{claimproof}
    Since $t\in\bag^*_T(t^*)$ and $t^*\neq\min^*(t)$, we have
    $\min^*(t)\triangleleft t^*$ and
    $t\in\bag^*_T(\min^*(t))\cap\bag^*_T(t^*)\subseteq\sigma^*_T(t^*)$. Moreover,
    for every activator $u$ of $t$, it holds that
    $t^*\trianglelefteq\min^*(u)$, which implies
    $u\in\gamma^*_T(t^*)\setminus\sigma^*_T(t^*)$.

    Suppose towards a contradiction that $t$ has two activators $u_1,u_2$ in
    $t^*$. Then $u_1,u_2\in
    N_T(t)\cap\big(\gamma^*_T(t^*)\setminus\sigma^*_T(t^*)\big)$.
    By Lemma~\ref{lem:logdec}\eqref{it:ld5}, the induced subgraph
    $T[\gamma^*_T(t^*)\setminus\sigma^*_T(t^*)]$ is connected.
     Thus,
    there is a path from $u_1$ to $u_2$ in
    $T[\gamma^*_T(t^*)\setminus\sigma^*_T(t^*)]$. As $u_1,u_2\in
    N_T(t)$ and $t\in\sigma^*_T(t^*)$, there is a cyle in $T$, which is a contradiction.
  \end{claimproof}

Hence, in the following we can speak of \emph{the} activator of a node.
  Observe that if $t$ is active in $t^*$, then $t$ is also active in
  all $u^*$ with $\min^*(t)\triangleleft u^*\triangleleft t^*$, with
  the same activator.

  Now we are ready to define our tree decomposition $(T^*,\beta^*)$ of
  $G$. The tree $T^*$ is the same as in the decomposition
  $(T^*,\beta_T^*)$ of $T$. We
  define $\beta^* \colon V(T^*)\to 2^{V(G)}$ by letting
   $\beta^*(t^*)$ for $t^*\in V(T^*)$
  be the union of the following
  sets:
  \begin{itemize}
  \item for all $t\in\beta_T^*(t^*)$ such that
    $t^*=\min^*(t)$: the block $\beta(t)$, and
  \item for all $t\in\beta_T^*(t^*)$ such that
    $t$ is active in $t^*$ with activator $u$: the block separator
    $\beta(t)\cap\beta(u)$.    
  \end{itemize}

  \begin{claim}
    $(T^*,\beta^*)$ is a tree decomposition of $G$.
  \end{claim}

  \begin{claimproof}
    Every edge $e\in E(G)$ is contained in some bag $\beta(t)$, and
    $\beta(t)\subseteq \beta^*(\min^*(t))$.

    Now consider a vertex $v\in V(G)$. Let
    \begin{align*}
      S_v&{{}\coloneqq{}}\{t\in V(T)\mid v\in\beta(t)\},\\
      S^*_v&{{}\coloneqq{}}\{t^*\in V(T^*)\mid S_v\cap\beta^*_T(t^*)\neq\emptyset\}.
    \end{align*}
    Since $(T,\beta)$ is a tree decomposition, $S_v$ is connected in
    $T$, and as $(T^*,\beta_T^*)$ is a tree decomposition, $S_v^*$ is
    connected in $T^*$. Thus, there is a unique
    $\trianglelefteq^*$-minimal node $s^*$ in $S_v^*$. Let $s\in
    S_v\cap\beta^*_T(s^*)$. Then
    $s^*=\min^*(s)$ and therefore $v\in \beta^*(s^*)$.

    Let $t^*\in V(T^*)$ such that $v\in\beta^*(t^*)$.
    We shall prove that $v\in\beta^*(v^*)$ for all $v^*$ on the path
    from $t^*$ to $s^*$. This will prove that the set of all $t^*$
    for which $v\in\beta^*(t^*)$ holds is connected in $T^*$.

    By the definition of $\beta^*$, since $v\in\beta^*(t^*)$, there is
    a $t\in \beta^*_T(t)$ such that $v\in\beta(t)$ and either
    $t^*=\min^*(t)$ or $t$ is active in $t^*$. We choose such a
    $t$. Then $t\in S_v$ and therefore $t^*\in S_v^*$. By the
    minimality of $s^*$, this implies $s^*\trianglelefteq^* t^*$.

    The proof that $v\in\beta^*(v^*)$ holds for all $v^*$ on the path from
    $t^*$ to $s^*$ is by induction on the distance $d$ between $t^*$
    and $s^*$. The base case $d=0$ is trivial. So let us assume that
    $d\ge 1$. It follows from the definition of $\beta^*$ that
    $v\in\beta^*(v^*)$ holds for all $v^*$ on the path from $t^*$ to
    $\min^*(t)$.
    Thus, without loss of generality, we may
    assume that $t^*=\min^*(t)$.

    Let $t=t_1,\ldots,t_m=s$ be the path from $t$ to $s$ in $T$. Note
    that $v\in\beta(t_i)$ holds for all $i\in[m]$. The edge $tt_2=t_1t_2$
    must be covered by some bag $\beta_T^*(u^*)$ that contains both
    $t$ and $t_2$. Since $t^*=\min^*(t)$, we have
    $t^*\trianglelefteq^* u^*$. As the pre-image of the path
    $t_1,\ldots,t_m$ in $T^*$ is connected and
    $s^*\trianglelefteq^* t^*\trianglelefteq^* u^*$, there is an $i>1$
    such that $t_i\in \bag^*(t^*)$. If $\min^*(t_i)=t^*$, we find a
    $j>i$ such that $t_j\in \bag^*(t)$, and, repeating this, we
    eventually arrive at a $t_k\in \bag^*(t)$ such that
    $\min^*(t_k)\triangleleft t^*$. Arguing as above, we find that
    $v\in\bag^*(v^*)$ holds for all $v^*$ on the path from $t^*$ to
    $\min^*(t_k)$. Since $\min^*(t_k)$ is closer to $s^*$ than $t^*$,
    we can now apply the induction hypothesis to conclude that
    $v\in\bag^*(v^*)$ holds for all $v^*$ on the path from $\min^*(t_k)$ to
    $s^*$.
  \end{claimproof}

  Let us turn to proving that the tree decomposition $(T^*,\beta^*)$ has
  the desired properties. 

  Since $(T,\beta)$ is a small decomposition, we have $|T|\le|G|$. Thus, Condition \eqref{it:tdl1} follows from Lemma~\ref{lem:logdec}\eqref{it:ld1}.

  Condition \eqref{it:tdl2} follows immediately from Lemma~\ref{lem:logdec}\eqref{it:ld2} and the definition of $\beta^*(t)$.

  To prove Condition \eqref{it:tdl3}, let $u^*$ be a child of
  $t^*$. Let us assume that $\bag^*_T(t^*)=\{t_1,\ldots,t_4\}$ and
  $\bag^*_T(u^*)=\{u_1,\ldots,u_4\}$ with $t_1=u_1,t_2=u_2$, and
  $t_3=u_3$ and $t_4\neq u_i,u_4\neq t_i$ for $i\in[4]$. The cases of
  smaller bags $\beta^*_T(t^*)$, $\beta^*_T(u^*)$ or a smaller
  intersection between them can be dealt with similarly.

  Let us first deal with the common elements $t_i=u_i$ for
  $i\in[3]$. Note that $\min^*(t_i)\trianglelefteq t^*\triangleleft
  u^*$. If $t_i$ is not active in $u^*$, then it does not
  contribute to the $\beta^*(u^*)$ and hence not to the intersection
  of the two bags. 
  If $t_i$ is active in $u^*$, say, with activator
  $v_i$, then the block separator $S_i{{}\coloneqq{}}\beta(t_i)\cap\beta(v_i)$ is
  contained in $\beta^*(u^*)$. To simplify the notation, in the
  following, we let $S_i \coloneqq 0$ if $t_i$ is not active in $u^*$.

  Either $t_i$ is active in
  $t^*$ as well with the same activator and we have
  $S_i\subseteq\beta^*(t^*)$, or
  $t^*=\min^*(t_i)$ and $S_i\subseteq\beta(t_i)\subseteq\bag^*(t^*)$. In both
  cases, 
  \begin{equation}
    \label{eq:1}
    S_i\subseteq\beta^*(t^*)\cap\beta(u^*).
  \end{equation}
  Next, let us look at the contribution of $t_4$ and $u_4$. The
  contribution of $t_4$ to $\bag^*(t^*)$ is contained in $\bag(t_4)$, and
  the contribution of $u_4$ to $\bag^*(u^*)$ is contained in $\bag(u_4)$.
  Since the only neighbour of $t_i$ in
  $\gamma_T^*(u^*)\setminus\sigma^*_T(u^*)=\gamma_T^*(u^*)\setminus\{t_1,t_2,t_3\}$
  is $v_i$ (if $t_i$ is active in $u^*$, otherwise there is no
  neighbour), all paths from $t_i$ to $u_4$ go through $v_i$. This
  implies that
  \begin{equation}
    \label{eq:2}
    \bag(t_i)\cap\bag(u_4)\subseteq \bag(t_i)\cap\bag(v_i)=S_i.
  \end{equation}
  All paths from $t_4$ to $u_4$ go through $t_1,t_2,t_3$, and
  therefore
  \begin{equation}
    \label{eq:3}
    \bag(t_4)\cap \bag(u_4)\subseteq\bigcup_{i=1}^3
    \bag(t_i)\cap\bag(u_4)\subseteq S_1\cup S_2\cup S_3.
  \end{equation}
  Thus, overall, we have $\bag^*(t^*)\cap\bag^*(u^*)\subseteq S_1\cup
  S_2\cup S_3$.

  To prove that Condition \eqref{it:tdl6} holds, let $t^*\in V(T^*)$ and and let $u^*$ be a child of
  $t^*$. To simplify the notation, let
  $\sigma^*(u^*){{}\coloneqq{}}\beta(u^*)\cap\beta^*(t^*)$ and 
  \begin{equation}
    \label{eq:4}
     \gamma^*(u^*){{}\coloneqq{}}\bigcup_{v^*\trianglerighteq
      u^*}\beta^*(v^*).
    \end{equation}
  We need to prove that $G[\gamma^*(u^*)\setminus\sigma^*(u^*)]$ is
  connected. The key observation is that
 \begin{equation}
    \label{eq:5}
    \gamma^*(u^*)\setminus\sigma^*(u^*)=\bigcup_{t\in\gamma_T^*(u^*)\setminus
      \sigma_T^*(u^*)}\bag(t).
  \end{equation}
  The reason for this is that, for all $t\in\gamma_T^*(u^*)\setminus
      \sigma_T^*(u^*)$, it holds that $u^*\trianglelefteq\min^*(t)$,
      which implies that $\beta(t)\subseteq\beta^*(\min^*(t))$ appears
      on the right-hand side of \eqref{eq:4}. It follows from Part~\eqref{it:ld5} in Lemma~\ref{lem:logdec} that the set $\gamma_T^*(u^*)\setminus \sigma_T^*(u^*)$ is connected in $T$, and this implies that the union on the right-hand side of \eqref{eq:5} is connected.
\end{proof}

Our next goal will be to define the decomposition in the logic
$\LC{O(1)}{O(\log n)}$. The following lemma yields a way to define blocks via triplets of vertices.

\begin{lemma}\label{lem:block:characterisation}
Let $G$ be a graph, and let $B$ be a proper block of $G$. Let
$b_1,b_2,b_3\in B$ be pairwise distinct vertices. Then $B$ is the
set of all $v\in V(G)$ such that there is no set
$S\subseteq V(G)\setminus\{v\}$ of cardinality at most $2$
separating $v$ from $\{b_1,b_2,b_3\}$.
\end{lemma}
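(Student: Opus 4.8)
\emph{Proof plan.} Write $A$ for the set of all $v\in V(G)$ such that no set $S\subseteq V(G)\setminus\{v\}$ with $|S|\le 2$ separates $v$ from $\{b_1,b_2,b_3\}$; we have to show $A=B$, and the plan is to prove the two inclusions separately. The inclusion $A\subseteq B$ is the easy one: if $v\notin B$, then $v$ lies in some connected component $C$ of $G\setminus B$, and I would simply take $S\coloneqq N_G(C)$. Since $B$ is a proper block, its adhesion is at most $2$, so $|S|\le 2$; moreover $S\subseteq B$, hence $v\notin S$. Every edge leaving $C$ ends in $N_G(C)=S$, so in $G\setminus S$ the vertex $v$ lies in a component contained in $C$, which is disjoint from $B\supseteq\{b_1,b_2,b_3\}$; thus $S$ separates $v$ from $\{b_1,b_2,b_3\}$, i.e.\ $v\notin A$.

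For the converse inclusion $B\subseteq A$ I would argue as follows. We may assume $v\in B\setminus\{b_1,b_2,b_3\}$, since for $v\in\{b_1,b_2,b_3\}$ the claim is trivial (no $S$ avoiding $v$ can separate $v$ from a set containing $v$). Fix an arbitrary $S\subseteq V(G)\setminus\{v\}$ with $|S|\le 2$; since $|S|\le 2$, some $b_i$ lies outside $S$. First I would work inside the torso $G\llbracket B\rrbracket$, which is $3$-connected by the definition of a proper block: by the global version of Menger's theorem (any two vertices of a $3$-connected graph are joined by three internally disjoint paths) there are three internally disjoint $v$--$b_i$ paths $P_1,P_2,P_3$ in $G\llbracket B\rrbracket$.

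The heart of the argument is to \emph{lift} these to three internally disjoint $v$--$b_i$ paths in $G$ itself. Every edge $xy$ of some $P_j$ that is not an edge of $G$ is, by the definition of the torso, of the form $x,y\in N_G(C)$ for some connected component $C$ of $G\setminus B$; since the adhesion of $B$ is at most $2$, this forces $N_G(C)=\{x,y\}$, and as $C$ is connected and contains neighbours of both $x$ and $y$, we can reroute the edge $xy$ through a path whose interior lies entirely in $C$. Distinct torso edges correspond to distinct pairs $\{x,y\}$, hence to distinct components of $G\setminus B$, and distinct components are pairwise vertex-disjoint and disjoint from $B$; moreover the only torso edge that could appear in two of the $P_j$ is the single edge $vb_i$, which is used at most once. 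Hence the rerouted paths $\widehat P_1,\widehat P_2,\widehat P_3$ are again internally disjoint. A set $S$ of size at most $2$ with $v,b_i\notin S$ can meet the interiors of at most two of them, so at least one $\widehat P_j$ survives in $G\setminus S$; thus $S$ does not separate $v$ from $b_i$, and therefore not from $\{b_1,b_2,b_3\}$. As $S$ was arbitrary, $v\in A$.

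I expect the lifting step to be the main obstacle: one has to verify that routing torso edges through components of $G\setminus B$ preserves internal disjointness of the three paths, and it is precisely the adhesion bound of $2$ for a proper block that makes this work, since it simultaneously keeps the candidate separators $S$ small and guarantees that each torso edge arises from a \emph{uniquely determined} pair $N_G(C)$. Everything else is routine bookkeeping with Menger's theorem and the definitions of torso and adhesion given above.
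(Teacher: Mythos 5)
Your proof is correct and takes essentially the same route as the paper: the easy inclusion via $S = N_G(C)$ for the component $C$ containing $v$, and the hard inclusion via three internally disjoint paths in the $3$-connected torso $G\llbracket B\rrbracket$, lifted to $G$ using the fact that the torso is a topological subgraph. The only small variation is that you fix $S$ first and route all three paths to a single $b_i\notin S$, whereas the paper constructs a fan of three internally disjoint paths from $v$ to $b_1,b_2,b_3$ respectively (which then works against every $S$ simultaneously) — both are standard consequences of $3$-connectedness, and your explicit rerouting through components of $G\setminus B$ is just a careful unpacking of the paper's one-line appeal to the topological-subgraph property.
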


\begin{proof}
Let $v\in B$.
Since $G\llbracket B\rrbracket$ is 3-connected, there are paths
$P_i\subseteq G\llbracket B\rrbracket$ from $v$ to $b_i$ that are
internally disjoint, that is, $V(P_i)\cap V(P_j)=\{v\}$ for
$i\neq j$. As $G\llbracket B\rrbracket$ is a topological subgraph of
$G$, these paths can be expanded to paths $P_i'$ from $v$ to $b_i$
in $G$, and the $P_i'$ are still internally disjoint. Since every
$S\subseteq V(G)\setminus\{v\}$ of cardinality at most $2$
has an empty intersection with at least one of the paths $P_i'$, it does not separate $v$
from $\{b_1,b_2,b_3\}$.

Conversely, let $v\in V(G)\setminus B$, and let $C$ be the connected
component of $G\setminus B$ with $v\in V(C)$, and let
$S\coloneqq N_G(C)$. Then $|S|\le 2$. Then $S$ separates $v$ from
$\{b_1,b_2,b_2\}$.
\end{proof}

Let $G$ be a graph and $S,X\subseteq V(G)$. We say that $S$
\emph{separates} $X$ if there are two distinct connected components
$C_1,C_2$ of $G\setminus S$ such that $X\cap V(C_i)\neq\emptyset$ for both $i=1,2$.

\begin{lemma}\label{lem:block2}
Let $G$ be a graph, and let $b_1,b_2,b_3\in V(G)$ be mutually
distinct. Then there is a proper block $B$ with $b_1,b_2,b_3\in B$
if and only if there is a vertex $b_4\in V(G)\setminus\{b_1,b_2,b_3\}$ such that no
set $S\subseteq V(G)$ of cardinality at most $2$ separates
$\{b_1,b_2,b_3,b_4\}$.
\end{lemma}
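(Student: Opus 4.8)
The plan is to prove the two implications separately, using Lemma~\ref{lem:block:characterisation} as the main tool for the forward direction and a direct argument for the converse.

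For the forward direction, suppose there is a proper block $B$ with $b_1,b_2,b_3\in B$. Since $G\llbracket B\rrbracket$ is $3$-connected, it has order at least $4$, so we can pick a fourth vertex $b_4\in B\setminus\{b_1,b_2,b_3\}$. I claim no set $S$ of cardinality at most $2$ separates $\{b_1,b_2,b_3,b_4\}$. Indeed, by Lemma~\ref{lem:block:characterisation}, each $b_j\in B$ (for $j\in\{1,2,3,4\}$) has the property that no set of size at most $2$ avoiding $b_j$ separates $b_j$ from $\{b_1,b_2,b_3\}$; equivalently, for any two indices $j,j'$ and any $S$ of size at most $2$ with $b_j,b_{j'}\notin S$, the vertices $b_j$ and $b_{j'}$ lie in the same connected component of $G\setminus S$ (going through the third fixed vertex if necessary — one has to be slightly careful when $S$ meets $\{b_1,b_2,b_3\}$, but since $|S|\le 2$ at least one of $b_1,b_2,b_3$ survives, and all four of the $b_j$ connect to it). Hence all surviving $b_j$ lie in a single component of $G\setminus S$, so $S$ does not separate $\{b_1,b_2,b_3,b_4\}$.

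For the converse, suppose $b_4\in V(G)\setminus\{b_1,b_2,b_3\}$ is such that no set $S$ of cardinality at most $2$ separates $\{b_1,b_2,b_3,b_4\}$. Consider the block tree decomposition $(T,\beta)$ of (the relevant connected component of) $G$ from Lemma~\ref{lem:3cc}. Since the adhesion is at most $2$, every adhesion set $\beta(t)\cap\beta(u)$ for an edge $tu\in E(T)$ has cardinality at most $2$, and removing it from $G$ separates the vertices appearing only on one side of the edge from those appearing only on the other side. Because no size-$\le 2$ set separates $\{b_1,b_2,b_3,b_4\}$, all four of these vertices must lie on the same side of every edge of $T$, which forces the existence of a single node $t_0\in V(T)$ with $b_1,b_2,b_3,b_4\in\beta(t_0)$ (the sets $\{t\mid b_j\in\beta(t)\}$ are connected subtrees that pairwise intersect, hence have a common node by the Helly property of subtrees). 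So $\beta(t_0)$ is a block containing four distinct vertices, hence it is not a degenerate block; it is a proper block, and $B\coloneqq\beta(t_0)$ works.

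The main obstacle is the case analysis in the forward direction when the separating set $S$ intersects $\{b_1,b_2,b_3\}$: Lemma~\ref{lem:block:characterisation} is stated only for separators \emph{avoiding} the vertex in question, so one must argue that since $|S|\le 2$, at least one of $b_1,b_2,b_3$ is not in $S$, and then route every other surviving $b_j$ to that vertex via internally disjoint paths inside the $3$-connected torso $G\llbracket B\rrbracket$ (expanded to paths in $G$, as in the proof of Lemma~\ref{lem:block:characterisation}), at most two of which can be hit by $S$. In the converse direction, the one point requiring care is that Lemma~\ref{lem:3cc} gives a decomposition of a \emph{connected} graph, so one should first pass to the connected component containing $b_1,b_2,b_3$ — noting that if the $b_i$ are not all in one component then the empty set already separates them, contradicting the hypothesis.
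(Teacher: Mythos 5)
Your proof is correct, and the converse direction takes a genuinely different route from the paper. The paper handles the backward implication by taking $B$ to be the set of all $v$ for which no set of size at most $2$ avoiding $v$ separates $v$ from $\{b_1,b_2,b_3\}$ (the set from Lemma~\ref{lem:block:characterisation}), checking that $b_1,\ldots,b_4$ lie in it, and then asserting---without further detail---that ``it is easy to prove that $B$ is a block''. You instead use the Tutte-style block decomposition $(T,\beta)$ of Lemma~\ref{lem:3cc}: since each adhesion set has size at most $2$ and separates the vertices on the two sides of the corresponding tree edge, the subtrees $\{t : b_j \in \beta(t)\}$ for $j=1,\ldots,4$ must pairwise intersect (otherwise the adhesion set on a separating edge would be a size-$\le 2$ separator for two of the $b_j$, neither of which lies in that adhesion set), so by the Helly property of subtrees they share a node $t_0$, and $\beta(t_0)$ is a block of cardinality at least $4$, hence proper. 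Your route relies on a heavier ingredient (the full block decomposition) but is actually complete, whereas the paper defers the verification that its candidate set is a block to the reader. On the forward direction both arguments lean on Lemma~\ref{lem:block:characterisation}; you are right to flag that the paper's one-line appeal hides a small case distinction when $S$ meets $\{b_1,b_2,b_3\}$, and your resolution---three internally disjoint paths between any two vertices of $B$ lifted from the $3$-connected torso to $G$, at most two of which $S$ can hit---is exactly the mechanism used to prove that lemma and closes the gap cleanly.
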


\begin{proof}
For the forward direction, suppose that $B$ is a proper block with
$b_1,b_2,b_3\in B$. Let $b_4\in B\setminus\{b_1,b_2,b_3\}$. Then it
follows from Lemma~\ref{lem:block:characterisation} that there is
no $S$ of cardinality at most $2$ that separates $\{b_1,b_2,b_3,b_4\}$.

For the backward direction, let $B$ be the set of all $v\in V(G)$
such that no set $S\subseteq V(G)\setminus\{v\}$ of cardinality at
most $2$ separates $v$ from $\{b_1,b_2,b_3\}$. Then
$b_1,b_2,b_3\in B$ and $|B|\ge 4$. It is easy to prove that $B$ is a block.
\end{proof}

\begin{lemma}\label{lem:block}
For all $n \in \Nat$, there exist $\LC{O(1)}{O(\log n)}$-formulas
$\logic{block}^{(n)}(x_1,x_2,x_3,y)$ and $\logic{torso}^{(n)}(x_1,x_2,x_3,y,z)$ such that for all graphs $G$ of order at most $n$ and all $b_1,b_2,b_3,v\in V(G)$, we have
\[
G\models \logic{block}^{(n)}(b_1,b_2,b_3,v)
\]
if and only if one of the following holds:
\begin{itemize}
\item either $\{b_1,b_2,b_3\}$ is a degenerate block and $v\in\{b_1,b_2,b_3\}$, 
\item or $b_1,b_2,b_3$ are mutually distinct and there is a proper block $B$ such that $b_1,b_2,b_3,v\in B$.
\end{itemize}
Moreover, for all $b_1,b_2,b_3,v,w\in V(G)$, we have
\[
G\models \logic{torso}^{(n)}(b_1,b_2,b_3,v,w)
\]
if and only if $G\models \logic{block}^{(n)}(b_1,b_2,b_3,v)$ and
$G\models \logic{block}^{(n)}(b_1,b_2,b_3,w)$ and $vw$ is an edge of
the torso of the block determined by $b_1,b_2,b_3$.
\end{lemma}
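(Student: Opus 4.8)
The plan is to turn the combinatorial descriptions of blocks in Lemma~\ref{lem:block:characterisation} and Lemma~\ref{lem:block2} into $\LC{O(1)}{O(\log n)}$-formulas, the one non-routine ingredient being two relativised versions of the reachability formula of Example~\ref{exa:dist}. First I would fix, for every constant $c$, a \emph{bounded} relativisation $\formel{dist}^{s_1,\dots,s_c}_{\le k}(x,x')$, obtained from the divide-and-conquer recursion of Example~\ref{exa:dist} by conjoining to the base case $k\le 1$ the requirement that $x$ and $x'$ both differ from each of $s_1,\dots,s_c$. In a graph of order at most $n$ this holds exactly when there is a walk of length at most $k$ from $x$ to $x'$ that avoids $\{s_1,\dots,s_c\}$ entirely, endpoints included; recycling the three distance variables as in Example~\ref{exa:dist} costs nothing, so $\formel{dist}^{s_1,\dots,s_c}_{\le k}$ uses $O(1)$ variables and has quantifier depth $\lceil\log n\rceil$, and with $k=n-1$ I get a formula $\formel{reach}^{s_1,\dots,s_c}(x,x')$ defining reachability in $G\setminus\{s_1,\dots,s_c\}$.

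Next I would assemble $\logic{block}^{(n)}(b_1,b_2,b_3,v)$ as a disjunction of a proper and a degenerate branch. For the proper branch, Lemma~\ref{lem:block2} says that $b_1,b_2,b_3$ lie in a (then necessarily unique) proper block iff they are pairwise distinct and some $b_4\notin\{b_1,b_2,b_3\}$ admits no set $S$ of cardinality at most $2$ separating $\{b_1,b_2,b_3,b_4\}$; writing the latter as a disjunction over pairs $\{b_i,b_j\}$ of the assertion ``both $b_i$ and $b_j$ lie outside $\{s_1,s_2\}$ and in distinct components of $G\setminus\{s_1,s_2\}$'' (using $\neg\formel{reach}^{s_1,s_2}$), I obtain an $\LC{O(1)}{O(\log n)}$-formula $\formel{hasblock}(b_1,b_2,b_3)$. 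By Lemma~\ref{lem:block:characterisation}, the block itself then consists of the vertices $v$ for which no set of cardinality at most $2$ disjoint from $v$ separates $v$ from $\{b_1,b_2,b_3\}$; here the built-in endpoint exclusion of $\formel{reach}$ makes $\neg\exists s_1\exists s_2\big(v\neq s_1\wedge v\neq s_2\wedge\bigwedge_{i=1}^3\neg\formel{reach}^{s_1,s_2}(v,b_i)\big)$ say exactly that, and the proper branch is its conjunction with $\formel{hasblock}(b_1,b_2,b_3)$ and pairwise distinctness of $b_1,b_2,b_3$. The degenerate branch is the disjunction of the case ``torso $\cong K_3$'' (pairwise distinct $b_i$; no connected component of $G\setminus\{b_1,b_2,b_3\}$ adjacent to all three; each $b_ib_j$ a torso edge) and the case ``torso $\cong K_2$'' (exactly two distinct vertices among the $b_i$, forming an edge of $G$ such that no component of $G$ minus those two vertices is adjacent to both), each of which is a Boolean combination of adjacency atoms and assertions about components of $G$ minus the at most three candidate vertices, expressible via $\formel{reach}^{b_1,b_2,b_3}$ and existential quantification over neighbours. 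All of this stays within $O(1)$ variables and $O(\log n)$ quantifier depth.

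For $\logic{torso}^{(n)}$ I would use that $vw$ is a torso edge of the block $B$ determined by $b_1,b_2,b_3$ iff $vw\in E(G)$ or both $v$ and $w$ lie in $N_G(C)$ for some connected component $C$ of $G\setminus B$. The first alternative is the atom $E(x,x')$. The second needs reachability in $G\setminus B$, but now $B$ is not of bounded size: it is the set of $z$ with $G\models\logic{block}^{(n)}(b_1,b_2,b_3,z)$. I therefore introduce a second relativisation $\formel{dist}^{\neg B}_{\le k}(x,x',b_1,b_2,b_3)$, which is the same recursion as $\formel{dist}_{\le k}$ but with $\neg\logic{block}^{(n)}(b_1,b_2,b_3,x)\wedge\neg\logic{block}^{(n)}(b_1,b_2,b_3,x')$ conjoined to the base case $k\le 1$. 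Since in the recursion tree every vertex of a witnessing walk occurs as an endpoint of some base-case instance, $\formel{reach}^{\neg B}\coloneqq\formel{dist}^{\neg B}_{\le n-1}$ correctly expresses that both its first two arguments lie in one connected component of $G\setminus B$. I would then set $\logic{torso}^{(n)}(b_1,b_2,b_3,v,w)$ to be $\logic{block}^{(n)}(b_1,b_2,b_3,v)\wedge\logic{block}^{(n)}(b_1,b_2,b_3,w)\wedge v\neq w\wedge\big(E(v,w)\vee\exists z_1\exists z_2\big(E(v,z_1)\wedge E(w,z_2)\wedge\formel{reach}^{\neg B}(z_1,z_2,b_1,b_2,b_3)\big)\big)$; the second disjunct says precisely that $v$ and $w$ have neighbours lying in a common component of $G\setminus B$, and this works uniformly for proper and degenerate blocks.

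The genuine obstacle is the quantifier depth of $\formel{reach}^{\neg B}$. The subformula $\logic{block}^{(n)}$ already has quantifier depth $\Theta(\log n)$, and it is inserted into the $\Theta(\log n)$-deep divide-and-conquer recursion, so a careless nesting would give quantifier depth $\Theta\big((\log n)^2\big)$. The point is that $\logic{block}^{(n)}$ appears only in the base case $k\le 1$, hence only at the leaves of the recursion tree, and the two formulas produced at each recursive step are siblings, neither nested in the other; therefore every root-to-leaf branch of the recursion carries exactly one copy of $\logic{block}^{(n)}$, and the quantifier depth of $\formel{reach}^{\neg B}$ is $\lceil\log n\rceil+O(\log n)=O(\log n)$. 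Two routine points then remain: there is no circular dependency, because $\logic{block}^{(n)}$ is built only from the bounded relativisations $\formel{reach}^{s_1,\dots,s_c}$, never from $\formel{reach}^{\neg B}$ or $\logic{torso}^{(n)}$; and recycling the auxiliary variables everywhere keeps the variable count constant. The behaviour on graphs of very small order (where, for instance, a separating set cannot always be chosen disjoint from a given vertex) is handled directly.
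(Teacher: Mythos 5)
Your proposal is correct and fills in the details of the paper's very terse proof, which simply invokes Lemmas~\ref{lem:block:characterisation} and \ref{lem:block2} for the proper-block case and asserts that the degenerate case is easy. In particular, you correctly identify and resolve the one non-routine point — that the $\Theta(\log n)$-depth $\logic{block}^{(n)}$ formula appears only at the leaves of the $\Theta(\log n)$-depth divide-and-conquer recursion for $\formel{reach}^{\neg B}$, so the quantifier depths add rather than multiply and $\logic{torso}^{(n)}$ stays in $\LC{O(1)}{O(\log n)}$.
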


\begin{proof}
It is easy to express in $\LC{O(1)}{O(\log n)}$ that $\{b_1,b_2,b_3\}$ is a degenerate block. For proper blocks, we use Lemmas~\ref{lem:block:characterisation} and \ref{lem:block2}.
\end{proof}

As an immediate consequence, we obtain a formula to define a block separator.

\begin{corollary}\label{cor:blocksep}
For all $n \in \Nat$, there exists a $\LC{O(1)}{O(\log n)}$-formula $\logic{blocksep}^{(n)}(x_1,x_2)$ such that for all graphs $G$ of order at most $n$ and all $s_1,s_2\in V(G)$, we have
\[
G\models \logic{blocksep}^{(n)}(s_1,s_2)
\]
if and only if $\{s_1,s_2\}$ is a block separator of $G$.
\end{corollary}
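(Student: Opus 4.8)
The plan is to write down a single $\LC{O(1)}{O(\log n)}$-formula $\logic{blocksep}^{(n)}(x_1,x_2)$ that handles both genuine block separators (where $s_1\ne s_2$) and singleton ones (where $s_1=s_2$), built on the formula $\logic{block}^{(n)}$ of Lemma~\ref{lem:block}. Two features of $\logic{block}^{(n)}$ make this work. First, since a ``degenerate block'' may have order $2$, the formula $\logic{block}^{(n)}(a,a,b,\cdot)$ defines the block $\{a,b\}$ whenever $\{a,b\}$ is one; hence every block $B$ of $G$ equals $\{v\mid G\models\logic{block}^{(n)}(b_1,b_2,b_3,v)\}$ for a suitable triple $(b_1,b_2,b_3)$ --- three distinct vertices of $B$ if $|B|\ge 3$, and $(a,a,b)$ if $B=\{a,b\}$. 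Second, such a triple determines its block uniquely, since distinct blocks meet in at most two vertices. I will also need a relativised reachability predicate: starting from Example~\ref{exa:dist} and carrying two extra free variables $x_1,x_2$, forbidding the endpoints (hence, inductively, every quantified intermediate vertex in the recursion) to equal $x_1$ or $x_2$, I obtain a $\LC{O(1)}{O(\log n)}$-formula $\formel{reach}^{(n)}(x_1,x_2,u,u')$ asserting that $u,u'\notin\{x_1,x_2\}$ and that $u$ and $u'$ lie in the same connected component of $G\setminus\{x_1,x_2\}$.

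Before assembling the formula I would record one small fact: if $S\subseteq V(G)$ with $1\le|S|\le 2$ and $B$ is a block with $S\subseteq B$ and $B\setminus S\ne\emptyset$, then $B\setminus S$ lies entirely within one connected component of $G\setminus S$. Indeed, the torso $G\llbracket B\rrbracket$ is $3$-connected or equal to $K_3$ or $K_2$, so $G\llbracket B\rrbracket\setminus S$ is connected; and every edge $vw$ of $G\llbracket B\rrbracket$ that is missing in $G$ comes from a connected component $C$ of $G\setminus B$ with $v,w\in N_G(C)$, which (as $C\subseteq G\setminus B\subseteq G\setminus S$) supplies a walk from $v$ to $w$ in $G\setminus S$; so connectivity transfers from $G\llbracket B\rrbracket\setminus S$ to $G\setminus S$. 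This is precisely what makes the phrase ``$B\setminus S$ belongs to a connected component of $G\setminus S$'' in the definition of a block separator meaningful, and it shows that a single vertex of $B\setminus S$ together with a single vertex of $B'\setminus S$ suffices to witness that $B\setminus S$ and $B'\setminus S$ sit in different components.

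With these pieces in hand I would set
\[
\logic{blocksep}^{(n)}(x_1,x_2)\coloneqq\exists b_1\exists b_2\exists b_3\,\exists b_1'\exists b_2'\exists b_3'\,\exists u\,\exists u'\,\psi,
\]
where $\psi$ is the conjunction of $\logic{block}^{(n)}(b_1,b_2,b_3,x_1)$, $\logic{block}^{(n)}(b_1,b_2,b_3,x_2)$, $\logic{block}^{(n)}(b_1',b_2',b_3',x_1)$, $\logic{block}^{(n)}(b_1',b_2',b_3',x_2)$, $\logic{block}^{(n)}(b_1,b_2,b_3,u)$, $\logic{block}^{(n)}(b_1',b_2',b_3',u')$, $u\ne x_1$, $u\ne x_2$, $u'\ne x_1$, $u'\ne x_2$, and $\neg\formel{reach}^{(n)}(x_1,x_2,u,u')$. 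This is a $\LC{O(1)}{O(\log n)}$-formula, being a constant number of quantifiers on top of the $\LC{O(1)}{O(\log n)}$-formulas $\logic{block}^{(n)}$ and $\formel{reach}^{(n)}$. For correctness, in a satisfying assignment the satisfiable formulas $\logic{block}^{(n)}(b_1,b_2,b_3,\cdot)$ and $\logic{block}^{(n)}(b_1',b_2',b_3',\cdot)$ define blocks $B\supseteq\{s_1,s_2,u\}$ and $B'\supseteq\{s_1,s_2,u'\}$ with $u,u'\notin\{s_1,s_2\}$; the small fact places $B\setminus\{s_1,s_2\}$ and $B'\setminus\{s_1,s_2\}$ each inside one component of $G\setminus\{s_1,s_2\}$, and $\neg\formel{reach}^{(n)}$ forces these components to differ, so $B\ne B'$, and then $B\cap B'=\{s_1,s_2\}$ because distinct blocks meet in at most two vertices (when $s_1=s_2$, a further common vertex would lie in both of those distinct components); thus $\{s_1,s_2\}$ is a block separator. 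Conversely, from a block separator $\{s_1,s_2\}=B\cap B'$ with $B\ne B'$ and $B\setminus\{s_1,s_2\}$, $B'\setminus\{s_1,s_2\}$ in different components, I code $B$ and $B'$ by triples as above (using $|B|,|B'|>|\{s_1,s_2\}|$ since neither block contains the other), pick $u\in B\setminus\{s_1,s_2\}$ and $u'\in B'\setminus\{s_1,s_2\}$, and observe that all conjuncts of $\psi$ are satisfied. When $s_1=s_2$ this specialises to the statement that $\{s_1\}$ is a block separator iff $s_1$ is a cut vertex; and for disconnected $G$, for which block separators are undefined, one may harmlessly conjoin the sentence $\formel{conn}_n$ of Example~\ref{exa:dist}.

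I do not expect a real obstacle here: the corollary is essentially an assembly of Lemma~\ref{lem:block} with the logarithmic-depth connectivity gadget of Example~\ref{exa:dist}. The only points that want care are the small fact that $B\setminus S$ never straddles two components of $G\setminus S$ --- which both legitimises the definition and reduces the separation test to single witnesses --- and checking that the repeated-argument form $\logic{block}^{(n)}(a,a,b,\cdot)$ really captures order-$2$ blocks, so that singleton block separators fall out of the very same formula.
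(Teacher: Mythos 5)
Your proposal is correct, and since the paper omits a proof of this corollary entirely (stating it as an ``immediate consequence'' of Lemma~\ref{lem:block}), your construction supplies the natural missing details in exactly the spirit the paper intends. The key observations you identify --- that the formula $\logic{block}^{(n)}$ with repeated arguments handles order-$2$ degenerate blocks, that the triple determining a block is unique since distinct blocks meet in at most two vertices, and the ``small fact'' that $B\setminus S$ is contained in a single component of $G\setminus S$ (which both legitimises the definition of block separator and lets two single witnesses $u,u'$ certify the separation) --- are all correct, and the backward direction correctly pins down $B\cap B'=\{s_1,s_2\}$ even when $s_1=s_2$, via the component argument. The only caveat is your parenthetical about conjoining $\formel{conn}_n$: Section~\ref{sec:tree:decompositions} fixes $G$ connected before defining blocks and block separators, so that edge case is outside scope rather than something the formula must rule out; but this is a harmless aside, not an error.
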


We are ready to define the formula that yields the decomposition from Lemma \ref{lem:tree:dec:log:height}.

\begin{lemma}\label{lem:dec:logic}
For all $h\ge 0$, $n\ge 1$, there is a $\LC{O(1)}{O(h+\log n)}$-formula
$\logic{dec}^{(n)}_h(x_i^j,y_{k}\mid i\in[4],j\in[3],k\in[6])$ such that the
following holds.
Let $G$ be a graph of order $|G|\le n$ and $b_i^j,s_{k}\in V(G)$ for
$i\in[4],j\in[3],k\in [6]$ (not necessarily distinct). Then
\[
G\models \logic{dec}^{(n)}_h(b_i^j,s_{k}\mid i\in[4],j\in[3],k\in[6])
\]
if and only if the following conditions are satisfied.
\begin{eroman}
\item\label{it:dd1} For all $i\in [4]$, either $B_i \coloneqq
\{b_i^1,b_i^2,b_i^3\}$ is a block separator or $B_i \coloneqq \{b_i^1,b_i^2,b_i^3\}$ is a degenerate
block or $b_i^1,b_i^2,b_i^3$ are mutually distinct and there is
a (unique) block $B_i$ that contains $b_i^1,b_i^2,b_i^3$.

Let $B\coloneqq B_1\cup\ldots\cup B_4$.
\item\label{it:dd2} $S\coloneqq\{s_1,\ldots,s_6\}\subset B$.
\item\label{it:dd3} There is a (unique) connected component $C$ of
$G\setminus S$ such that $B\subseteq S\cup V(C)$.
\item\label{it:dd4} The induced subgraph $G[S\cup V(C)]$ has a rooted tree decomposition
$(T^*,\beta^*)$ of height at most $h$ with $B=\bag^*(r^*)$ for the root $r^*$ of $T^*$.
\item\label{it:dd5} The tree decomposition $(T^*,\beta^*)$ satisfies Conditions
\eqref{it:tdl2}--\eqref{it:tdl6} of Lemma~\ref{lem:tree:dec:log:height}, where all
blocks are blocks of the graph $G$  (rather than of the subgraph $G[S\cup C]$).
\end{eroman}
\end{lemma}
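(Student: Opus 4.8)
The plan is to construct $\logic{dec}^{(n)}_h$ by induction on $h$, with the induction hypothesis being precisely the statement of the lemma for $h-1$. The definability toolbox consists of the $\LC{O(1)}{O(\log n)}$-definable predicates from Lemma~\ref{lem:block} and Corollary~\ref{cor:blocksep} (membership in the proper block determined by a triple, torso edges, block separators, degenerate blocks) together with a parameterised version of the connectivity formulas of Example~\ref{exa:dist}: for a fixed bound on $|Z|$ there is a $\LC{O(1)}{O(\log n)}$-formula expressing that two vertices lie in the same connected component of $G\setminus Z$, carrying the at most six vertices of $Z$ along as additional free variables (in the divide-and-conquer one simply demands at each step that the midpoint differs from all vertices of $Z$). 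On the combinatorial side I would first record some facts about a tree decomposition $(T^*,\beta^*)$ as produced by Lemma~\ref{lem:tree:dec:log:height}, which — after repeatedly contracting edges whose lower bag is contained in their upper bag, an operation that can only lower the height and preserves Conditions~\eqref{it:tdl2}--\eqref{it:tdl6} — I may assume to be \emph{reduced}, i.e.\ no bag is contained in the bag of its parent. For a non-root node $u^*$ with parent $t^*$, writing $\sigma^*(u^*)\coloneqq\beta^*(u^*)\cap\beta^*(t^*)$ and $\gamma^*(u^*)\coloneqq\bigcup_{v^*\trianglerighteq^{T^*}u^*}\beta^*(v^*)$, the set $\gamma^*(u^*)\setminus\sigma^*(u^*)$ is a non-empty connected component of $G\setminus\sigma^*(u^*)$; the children of any node $v^*$ are in bijection with the connected components of $G[\gamma^*(v^*)\setminus\beta^*(v^*)]$; and the subtree rooted at $u^*$ is a rooted tree decomposition of $G[\gamma^*(u^*)]=G[\sigma^*(u^*)\cup V(C_{u^*})]$, where $C_{u^*}$ is the component of $G\setminus\sigma^*(u^*)$ just mentioned, with root bag $\beta^*(u^*)$, adhesion $\sigma^*(u^*)$ of size at most $6$ by~\eqref{it:tdl3}, Conditions~\eqref{it:tdl2}--\eqref{it:tdl6} inherited, and height at most $h-1$ whenever $t^*$ is the root of a decomposition of height at most $h$.

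Next I would assemble the formula. Let $\Phi^{(n)}(x_i^j,y_k)$ be the $\LC{O(1)}{O(\log n)}$-formula asserting Conditions~\eqref{it:dd1}--\eqref{it:dd3} together with those parts of Lemma~\ref{lem:tree:dec:log:height}\eqref{it:tdl2},\eqref{it:tdl3} that refer only to the root bag $B\coloneqq B_1\cup\ldots\cup B_4$ and its adhesion $S$; all of this is a boolean combination of block/blocksep tests and connectivity-in-$G\setminus S$ statements, in particular that $C$ is the unique connected component of $G\setminus S$ with $B\subseteq S\cup V(C)$. For the base case I set $\logic{dec}^{(n)}_0\coloneqq\Phi^{(n)}\wedge\xi^{(n)}$, where $\xi^{(n)}$ is the logarithmic-depth connectivity statement saying $S\cup V(C)=B$; this captures Conditions~\eqref{it:dd4},\eqref{it:dd5} for a one-node decomposition. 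For $h\ge1$, letting $\vec{x}',\vec{y}'$ be a second copy of the variables $x_i^j,y_k$, I set
\[
\logic{dec}^{(n)}_h\coloneqq\Phi^{(n)}\wedge\forall z\Big(\big(z\in S\cup V(C)\wedge z\notin B\big)\to\exists\vec{x}'\exists\vec{y}'\big(\Psi^{(n)}\wedge\logic{dec}^{(n)}_{h-1}(\vec{x}',\vec{y}')\big)\Big),
\]
where $\Psi^{(n)}(\vec{x},\vec{y},z,\vec{x}',\vec{y}')$ is a $\LC{O(1)}{O(\log n)}$-formula stating that the bag $B'$ and separator $S'$ coded by $\vec{x}',\vec{y}'$ satisfy $S'=B'\cap B$ and $B'\setminus S'\ne\emptyset$, that $z$ lies in the same connected component of $G\setminus S'$ as some vertex of $B'\setminus S'$, and that this component is disjoint from $B$. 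With $\Psi^{(n)}$ enforced, the component singled out by the recursive call through its own instance of~\eqref{it:dd3} is exactly the connected component of $z$ in $G[(S\cup V(C))\setminus B]$, so each admissible $(\vec{x}',\vec{y}')$ describes a child of the root dealing with precisely that component, and $z$ runs through one representative of each such component.

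For correctness I would argue both directions. If $(T^*,\beta^*)$ is a decomposition as in~\eqref{it:dd4},\eqref{it:dd5}, which I take reduced, then $\Phi^{(n)}$ holds; for each $z\in(S\cup V(C))\setminus B$ there is a unique child $u^*$ of the root below which $z$ occurs, and coding $\beta^*(u^*),\sigma^*(u^*)$ into $(\vec{x}',\vec{y}')$ makes $\Psi^{(n)}$ true while the subtree at $u^*$ witnesses $\logic{dec}^{(n)}_{h-1}(\vec{x}',\vec{y}')$ by the induction hypothesis. Conversely, if $G\models\logic{dec}^{(n)}_h(\ldots)$, then $\Phi^{(n)}$ provides $B$, $S$, $C$, and for each connected component $D$ of $G[(S\cup V(C))\setminus B]$ I pick a $z\in D$, take the witnesses $(\vec{x}',\vec{y}')$, and obtain from the induction hypothesis a decomposition of $G[S'\cup V(C')]$ of height $\le h-1$, root bag $B'$, and Conditions~\eqref{it:tdl2}--\eqref{it:tdl6}; here $V(C')=D$ because $S'\subseteq B$ and $C'$ avoids $B$, so this is in fact a decomposition of $G[S'\cup D]$. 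Gluing all of these under a new root with bag $B$ yields the required decomposition of $G[S\cup V(C)]$: the bags fit together because the $D$'s partition $(S\cup V(C))\setminus B$ and every $S'$ lies in $B$; Condition~\eqref{it:tdl6} at the root holds since each $D$ induces a connected subgraph; and the new adhesions have size at most $6$.

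Finally I would bound the resources. Since $\logic{dec}^{(n)}_{h-1}$ appears in a conjunct disjoint from $\Phi^{(n)}$ and is guarded only by the single quantifier $\forall z$, the constantly many quantifiers $\exists\vec{x}'\exists\vec{y}'$, and the $\LC{O(1)}{O(\log n)}$-formula $\Psi^{(n)}$, there is a constant $c$ with $\qd(\logic{dec}^{(n)}_h)\le\max\{O(\log n),\,c+\qd(\logic{dec}^{(n)}_{h-1})\}$; together with $\qd(\logic{dec}^{(n)}_0)=O(\log n)$ this solves to $\qd(\logic{dec}^{(n)}_h)=O(h+\log n)$. The number of variables stays bounded by the usual reuse, since inside $\logic{dec}^{(n)}_{h-1}(\vec{x}',\vec{y}')$ the variables $\vec{x},\vec{y},z$ are no longer needed. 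I expect this bookkeeping to be the only genuine obstacle: re-verifying the component condition~\eqref{it:dd3} from scratch inside each of the up to $h$ nested recursive calls would nest logarithmic-depth connectivity formulas and push the quantifier depth to $O(h\log n)$, which would be too much for the intended application. The formulation above avoids this by passing down, as the separator parameter of the recursive call, only the bounded-size adhesion $S'$ rather than the unbounded root bag $B$, so that the relevant component is always a component of $G$ minus a bounded set and can be tied to the quantified vertex $z$ by the side formula $\Psi^{(n)}$; all connectivity checks then sit in conjuncts off the recursion path and are incurred only once, additively.
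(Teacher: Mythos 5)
Your proposal is essentially the paper's proof: induction on $h$, with Conditions \eqref{it:dd1}--\eqref{it:dd3} expressed once in $\LC{O(1)}{O(\log n)}$ and the recursive call $\logic{dec}^{(n)}_{h-1}$ guarded by only $O(1)$ quantifiers (one ranging over a representative $z$ of a component of $G[(S\cup V(C))\setminus B]$ and $18$ over the parameters of the child bag), so that the $O(\log n)$-depth connectivity subformulas sit in conjuncts off the recursion path and the total depth solves to $O(h+\log n)$. This is exactly the structure and the bookkeeping the paper uses.

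The one genuine difference is the choice of separator passed into the recursive call. The paper fixes $S'=N_G(C')$ and asserts $\{s'_1,\ldots,s'_6\}=N_G(C')$ in the formula; you instead fix $S'=B'\cap B$ (the adhesion of the child bag) and impose the side conditions in $\Psi^{(n)}$ ($B'\setminus S'\neq\emptyset$, $z$ connected to $B'\setminus S'$ in $G\setminus S'$, and that component disjoint from~$B$). Both work, and you correctly verify the small nonobvious facts your variant needs: that $N_G(C')\subseteq S'$ so edges from $C'$ into $B$ are covered by the child's root bag when gluing, and that the component $C''$ supplied by Condition \eqref{it:dd3} of the recursive call coincides with the component $C'$ of $G[(S\cup V(C))\setminus B]$ containing $z$. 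Your extra step of passing to a \emph{reduced} decomposition (contracting $t^*u^*$ whenever $\beta^*(u^*)\subseteq\beta^*(t^*)$) to guarantee $B'\setminus S'\neq\emptyset$ is a detail the paper leaves implicit; your check that contraction preserves Conditions \eqref{it:tdl2}--\eqref{it:tdl6}, via $\gamma^*(w^*)\setminus\beta^*(t^*)=\gamma^*(w^*)\setminus\beta^*(u^*)$ when $\beta^*(u^*)\subseteq\beta^*(t^*)$, is correct. In short: a technical variant of the same argument, not a different route, and arguably a little more explicit about why the recursion reconstructs a valid decomposition.
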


\begin{proof}
  The proof is by induction on $h \geq 0$.

  However, before we begin the induction, we observe that using
  Lemma~\ref{lem:block} and Corollary~\ref{cor:blocksep}, we can write a formula
  in the variables $x_i^j$ that expresses Condition \eqref{it:dd1}. It
  is straightforward to express Condition \eqref{it:dd2}, and, again using
  Lemma~\ref{lem:block}, to express Condition \eqref{it:dd3}. So in
  the induction, we will focus on Conditions \eqref{it:dd4} and \eqref{it:dd5}.

  For the base case $h=0$, we observe that a decomposition of
  height $0$ consists of a single node that covers the whole graph. So
  we need to express that for the component $C$ we obtain in \eqref{it:dd3}, we have $V(C)\cup S = B$. Then the $1$-node tree
  decomposition of $G[B]$ satisfies Conditions \eqref{it:dd4} and \eqref{it:dd5}.
  
  For a $1$-node decomposition,
  Conditions \eqref{it:tdl3} and \eqref{it:tdl6} of
  Lemma~\ref{lem:tree:dec:log:height} are void, and Condition
  \eqref{it:tdl2} of Lemma~\ref{lem:tree:dec:log:height} follows from Condition \eqref{it:dd1} of this lemma.

  For the inductive step $h\to h+1$, suppose we have a graph $G$ and
  elements $b_i^j$, $s_k$ satisfying Conditions \eqref{it:dd1}--\eqref{it:dd3}
  for suitable sets $B,S,C$. It suffices to express that for each
  connected component $C'$ of $G[S\cup V(C)]\setminus B$, we can find a decomposition of height $h$ that covers $C'$ and attaches to
  $B$ in a way that satisfies the conditions of
  Lemma~\ref{lem:tree:dec:log:height}.

  So let $G'\coloneqq G[S\cup V(C)]$, and let 
  $C'$ be a connected component of $G'\setminus B$. Let $S'\coloneqq
  N_G(C')$. If $|S'|> 6$, then there definitely is no decomposition
  with the desired properties. Suppose that $S'=\{s_1',\ldots,s_6'\}$. Then, if there are
  $b_i'^j\in S'\cup V(C')$ such that $G\models
  \logic{dec}_h^{(n)}(b_i'^j,s_k'\mid i\in[4],j\in[3],k\in[6])$, 
  the desired decomposition that covers $C'$ exists by the
  induction hypothesis. If this is the case for all $C'$, we can
  combine the decompositions to form the desired decomposition of
  $G'$. Conversely, if there is a decomposition of $G[S'\cup
  V(C')]$ of height $h$ in the sense of Lemma \ref{lem:tree:dec:log:height} such that for the root $u^*$, the bag
  $\beta^*(u^*)$ contains
  $S'$, then there are blocks or block separators $B_1',\ldots,B_4'$
  such that $\beta^*(u^*)=B_1'\cup\ldots\cup B_4'$. From the $B_i'$, we
  obtain $b_i'^j$ such that $G\models
  \logic{dec}_h^{(n)}(b_i'^j,s_k'\mid i\in[4],j\in[3],k\in[6])$,
  again by the induction hypothesis.

  To conclude, in addition to the subformulas taking care of
  Conditions \eqref{it:dd1}--\eqref{it:dd3}, the formula $\logic{dec}_{h+1}^{(n)}$
  must have a subformula stating that, for all connected components
  $C'$ of $G'\setminus B$, there exist $s_k'\in B$ for $k\in[6]$ and
  $b_i'^j\in S'\cup V(C')$ for
  $i\in[4]$, $j\in[3]$ such that
  $\{s_1',\ldots,s_6'\}=N_G(C')$ and
  $\logic{dec}_h^{(n)}(b_i'^j,s_k'\mid i\in[4],j\in[3],k\in[6])$ 
  holds.

  Note that, in each step $h\to h+1$ of the induction, we need to use
  formulas of quantifier depth $O(\log n)$ to express the desired
  connectivity conditions, for example to speak about components $C'$,
  and to express that the $b_i^j$ define blocks. However, the formula
  $\logic{dec}_h^{(n)}$ occurs only in the scope of constantly many
  (19, to be precise) quantifiers ranging over an element of the
  component(s) $C'$ and the $b_i'^j,s_k'$. Thus, overall, the
  quantifier depth will be $O(h)+O(\log n)$.
\end{proof}

\section{Canonisation}\label{sec:reduction}

In this section, we finally prove Theorems \ref{thm:main} and $\ref{thm:main:logical}$. By the logical characterisation of the WL algorithm given in Theorem \ref{thm:quantdepth}, we obtain Theorem \ref{thm:main} as a corollary from Theorem $\ref{thm:main:logical}$, which we prove below. 

In the following, for a graph $G$ and a list of vertices $v_1, \dots, v_\ell \in V(G)$, we denote by $(G, v_1, \dots, v_\ell)$ the graph $G$ with \emph{individualised} vertices $v_1, \dots, v_\ell$. That is, $(G, v_1, \dots, v_\ell)$ and $(G', v'_1, \dots, v'_{\ell'})$ have the same isomorphism type if and only if $\ell = \ell'$ and there is an isomorphism from $G$ to $G'$ that maps $v_i$ to $v'_i$ for every $i \in [\ell]$.

\begin{figure}
\centering
\includegraphics[width=7.8cm]{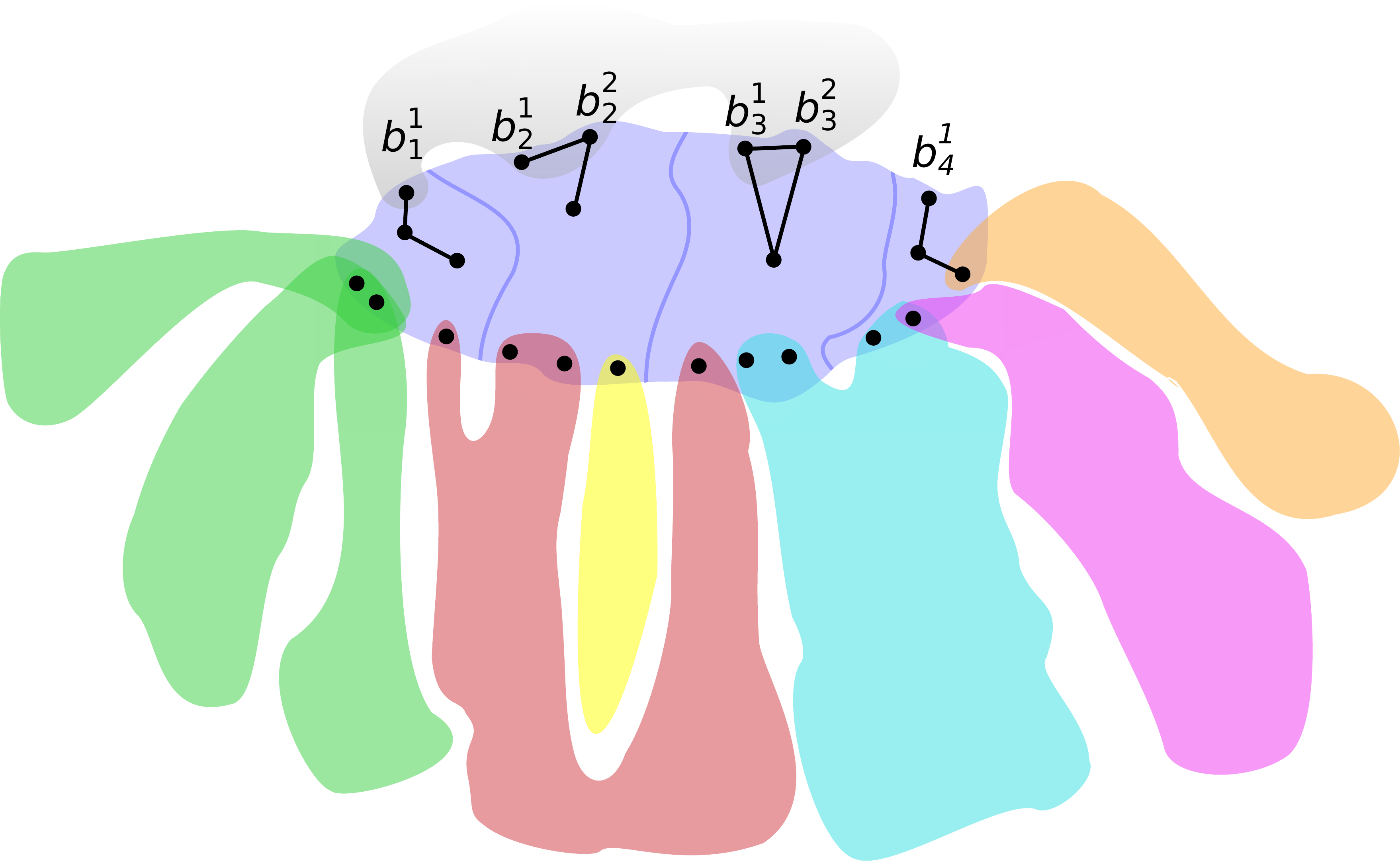}
\caption{A (simplified) schematic visualisation of the rooted tree decomposition $(T^*,\beta^*)$ from Lemma \ref{lem:tree:dec:log:height}. For simplicity, all $B_i$ in the bag of the purple node are depicted as distinct proper blocks.}
\label{fig:drawing}
\end{figure}

\begin{lemma}\label{lem:iso:formel}
For all $h\ge 0$, $n\ge 1$ and all connected planar graphs $G$ of order $|G|\le n$, and all $b_i^j,s_{k}\in V(G)$ for
$i\in[4],j\in[3],k\in [6]$ (not necessarily distinct) such that 
\[
G\models \logic{dec}^{(n)}_h(b_i^j,s_{k}\mid i\in[4],j\in[3],k\in[6]),
\]
there is a $\LC{O(1)}{O(h+\log n)}$-formula
$\logic{iso}^{(n)}_h(x_i^j,y_{k}\mid i\in[4],j\in[3],k\in[6])$ (which depends on the $b_i^j$ and the $s_k$) such that the
following holds.
Let $H$ be a connected graph of order $|H|\le n$ and $b_i'^j,s'_{k}\in V(H)$ for
$i\in[4],j\in[3],k\in [6]$ (not necessarily distinct) and assume $H\models \logic{dec}^{(n)}_h(b_i'^j,s'_{k}\mid i\in[4],j\in[3],k\in[6])$. Then
\[
H\models \logic{iso}^{(n)}_h(b_i'^j,s'_{k}\mid i\in[4],j\in[3],k\in[6])
\]
if and only if for the connected components $C_G, C_H$ that Lemma \ref{lem:dec:logic} yields for $G$ and $H$, it holds that $\big(H[\{s'_1,\dots,s'_6\} \cup V(C_H)],(b_i'^j,s'_k \mid i\in[4],j\in[3],k\in[6])\big) \cong \big(G[\{s_1,\dots,s_6\} \cup V(C_G)],(b_i^j,s_k \mid i\in[4],j\in[3],k\in[6])\big)$.
\end{lemma}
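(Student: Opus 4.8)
The plan is to prove the lemma by induction on $h$, mirroring the inductive structure of Lemma~\ref{lem:dec:logic}. The formula $\logic{iso}^{(n)}_h$ will not merely assert the existence of \emph{some} isomorphism; it will encode, via its syntactic shape, the full isomorphism type of the individualised graph $\big(G[\{s_1,\dots,s_6\}\cup V(C_G)],(b_i^j,s_k)\big)$, so that any other graph $H$ satisfying it must realise exactly that type. The root bag of the tree decomposition supplied by Lemma~\ref{lem:dec:logic} is $B=B_1\cup\dots\cup B_4$, where each $B_i$ is a block, a block separator, or a degenerate block. For the proper blocks among the $B_i$, I invoke Theorem~\ref{thm:3conn}: once we individualise an edge $b_i^1 b_i^2$ of the $3$-connected torso (which we may, since $B_i$ is a block with $3$-connected planar torso and the $b_i^j$ are distinct vertices of it), there is a choice of third vertex $b_i^3$ and, for every vertex $w$ of the torso, a formula $\logic{id}_w$ of logarithmic quantifier depth pinning $w$ down uniquely relative to $(b_i^1,b_i^2,b_i^3)$. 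Combined with $\logic{torso}^{(n)}$ from Lemma~\ref{lem:block}, this lets us write down, for each proper block $B_i$, a formula of depth $O(\log n)$ that describes the isomorphism type of its torso with the three individualised vertices marked, and moreover uniquely names every vertex of $B_i$ by a formula. Degenerate blocks and block separators are of bounded size and are handled directly by individualising all their vertices.

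Having fixed names for all vertices of $B$, the recursion proceeds as in Lemma~\ref{lem:dec:logic}: for each connected component $C'$ of $G'\setminus B$ (where $G'=G[\{s_1,\dots,s_6\}\cup V(C_G)]$), we have $S'\coloneqq N_G(C')$ of size at most $6$, and by Condition~\eqref{it:dd5} the subtree hanging below the root corresponds to a decomposition of $G[S'\cup V(C')]$ of height $h$ in the sense of Lemma~\ref{lem:tree:dec:log:height}, whose root bag is covered by $B_1'\cup\dots\cup B_4'$ with $S'\subseteq$ that bag. We apply the induction hypothesis to each such $(G[S'\cup V(C')],(b_i'^j,s_k'))$ to obtain a formula $\logic{iso}^{(n)}_h$ of depth $O(h+\log n)$ describing its individualised isomorphism type. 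The formula $\logic{iso}^{(n)}_{h+1}$ then asserts: Condition~\eqref{it:dd1}--\eqref{it:dd3} hold (via $\logic{dec}^{(n)}_{h+1}$); the torsos of the $B_i$ have the prescribed isomorphism types with the prescribed individualised vertices; for each ``attachment type'' $\tau$ occurring among the components $C'$ of $G'\setminus B$ — where the type records $S'$ as a tuple of named vertices of $B$ together with the value of $\logic{iso}^{(n)}_h$ for $(G[S'\cup V(C')],\cdot)$ — the \emph{number} of components of $H'\setminus B$ of type $\tau$ equals the number in $G$. The counting quantifiers of $\LC{}{}$ are exactly what lets us express ``there are exactly $m$ components $C'$ with attachment type $\tau$'': we quantify over a representative vertex of $C'$, use $\logic{comp}$ to fix the component, use Lemma~\ref{lem:dec:logic} and the inductive $\logic{iso}^{(n)}_h$ to certify its type, and wrap this in $\exists^{=m}$. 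Because the decomposition has adhesion at most $6$ and width bounded by four blocks, and because every vertex of $B$ is uniquely named, matching up these counts across $G$ and $H$ forces, by a gluing argument along the tree decomposition, an isomorphism of the whole individualised graphs — this is the ``only if'' direction; the ``if'' direction is immediate since an isomorphism transports all the asserted combinatorial data.

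For the quantifier-depth bookkeeping: $\logic{iso}^{(n)}_{h+1}$ invokes $\logic{iso}^{(n)}_h$ only within the scope of a constant number of quantifiers (those naming a representative of a component $C'$, the vertices $s_k'$ of $S'$, and the vertices $b_i'^j$ of the next-level root bag), together with subformulas of depth $O(\log n)$ from $\logic{dec}^{(n)}$, $\logic{block}^{(n)}$, $\logic{torso}^{(n)}$, $\logic{comp}$, and Theorem~\ref{thm:3conn}. Hence the depth satisfies a recurrence of the form $\qd_{h+1}\le \qd_h + O(\log n)$, giving $\qd_h = O(h+\log n)$ as claimed; the number of variables stays bounded because the variables quantified at one level can be reused inside the recursive subformula (the torso-identification and connectivity subformulas each need only $O(1)$ variables, as in Example~\ref{exa:dist} and Theorem~\ref{thm:3conn}).

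I expect the main obstacle to be the ``only if'' direction: arguing rigorously that matching \emph{counts} of attachment types, together with the individualised torso isomorphisms at the root, actually assemble into a \emph{single} global isomorphism. The subtlety is that distinct components $C'$ may share the same separator $S'$ in $B$, and the decomposition of Lemma~\ref{lem:tree:dec:log:height} was built so that each bag is a union of at most four blocks with adhesion $\le 6$ — one must check that the recursive isomorphisms agree on the overlap $S'$ (they do, because $S'\subseteq B$ is individualised and the inductive $\logic{iso}^{(n)}_h$ was stated for the graph \emph{with} $S'$ individualised), and that the induced isomorphism respects the block structure as defined in $G$ rather than in the subgraph (Condition~\eqref{it:dd5}). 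Getting this gluing argument airtight — essentially a structural-induction claim that a ``locally consistent'' family of isomorphisms along a tree decomposition of bounded adhesion patches together — is the technical heart of the proof.
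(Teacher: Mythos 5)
Your proposal follows essentially the same route as the paper: induction on $h$, Verbitsky's Theorem~\ref{thm:3conn} to name the vertices inside each proper block's $3$-connected torso, relativisation via $\logic{torso}^{(n)}$ and $\logic{block}^{(n)}$, a subformula pinning down the isomorphism type of $G[B]$ with the boundary vertices individualised, and counting quantifiers to match the multisets of attachment types of the components of $G'\setminus B$. That said, there is one concrete gap, plus a small bookkeeping slip.

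The gap concerns your use of Theorem~\ref{thm:3conn}. You write that you ``individualise an edge $b_i^1 b_i^2$ of the $3$-connected torso (which we may, since $B_i$ is a block\ldots), there is a choice of third vertex $b_i^3$.'' But the $b_i^j$ are not yours to choose: they are the fixed vertices in the assertion $G\models\logic{dec}^{(n)}_h(\ldots)$, and $\logic{dec}^{(n)}_h$ only guarantees that $b_i^1,b_i^2,b_i^3$ lie in a common proper block --- it does not guarantee that $b_i^1 b_i^2$ is an edge of the torso, nor that $b_i^3$ is the neighbour of $b_i^2$ that Theorem~\ref{thm:3conn} produces. Since the free variables $x_i^j$ of $\logic{iso}^{(n)}_h$ will be instantiated by $b_i'^j$ in $H$, the Verbitsky identification formulas cannot simply be written with $x_i^1,x_i^2,x_i^3$ as their anchor. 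The paper handles this by wrapping the formula in a block of fresh existential quantifiers $\exists\tilde{x}_1^1\dots\exists\tilde{x}_4^3$ that guess, for each $B_i$, a triple satisfying the hypotheses of Theorem~\ref{thm:3conn}, then asserting via $\bigwedge_j \logic{block}^{(n)}(x_i^1,x_i^2,x_i^3,\tilde x_i^j)$ that the guessed triple lives in the same block as $(x_i^1,x_i^2,x_i^3)$; the identification formulas and $\formel{iso}_B$ are applied to the $\tilde x$'s while the assertions about components refer back to the $x$'s. Without this extra quantifier layer, the biconditional of the lemma can fail when $b_i^1b_i^2$ is an edge in $G$ but $b_i'^1b_i'^2$ is not in $H$, or vice versa.

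The bookkeeping slip: you state the recurrence $\qd_{h+1}\le \qd_h+O(\log n)$, which solves to $O(h\log n)$, not the claimed $O(h+\log n)$. The correct observation (and what the paper uses) is that the recursive call to $\logic{iso}^{(n)}_h$ sits inside only \emph{constantly many} quantifiers, while the depth-$O(\log n)$ subformulas $\logic{dec}$, $\logic{block}$, $\logic{torso}$, $\formel{comp}$, and the Verbitsky formulas occur as conjuncts \emph{alongside} rather than around the recursive call; so the recurrence is $\qd_{h+1}\le\max\bigl(\qd_h+O(1),\,O(\log n)\bigr)$, which does give $O(h+\log n)$.
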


\begin{proof}
For the following arguments, see also Figure \ref{fig:drawing} for a better intuition.

Let $n \in \mathbb{N}$ and let $G$ be a connected planar graph with $|G| \leq n$. The proof is by induction on $h \geq 0$.

First, given a second connected graph $H$ of order at most $|G|$ that satisfies the $\formel{dec}_h^{(n)}$-formula, we can assume that the first four triplets of vertices form the same types of blocks and block separators (of corresponding sizes), respectively, in $H$ as in $G$, since otherwise we can distinguish the graphs using the formulas from Lemma \ref{lem:block} and Corollary~\ref{cor:blocksep}.

Note that there is a formula $\formel{bag}^{(n)}(x^1_1,\dots,x^3_{4},y) \in \LC{O(1)}{O(\log n)}$ such that for all graphs $H$ of order at most $n$ and all $b_1'^1, b_1'^2,b_1'^3, \dots, b_4'^1,b_4'^2,b_4'^3, v \in V(H)$, it holds that $H \models \formel{bag}^{(n)}(b_1'^1, \dots, b_4'^3,v)$ if and only if each set $\{b_i'^j \mid j \in [3]\}$ for $i \in [4]$ is a block separator $B_i$ or a degenerate block $B_i$ or contained in a proper block $B_i$ of $H$ and $v$ is in $B \coloneqq \bigcup_{i=1}^4 B_i$.

The case that $h = 0$ follows analogously as the formula for the isomorphism type of the root bag in the inductive step. We therefore focus on the inductive step. Assume that for every list of vertices $(b_i'^j,s'_k \mid i\in[4],j\in[3],k\in[6]) \in V(G)^{18}$, where \[
G\models \logic{dec}^{(n)}_h(b_i'^j,s'_{k} \mid i\in[4],j\in[3],k\in[6]),
\] 
there is a $\LC{O(1)}{O(h+ \log n)}$-formula 
\[\formel{iso}_{G,(b_i'^j,s'_k \mid i\in[4],j\in[3],k\in[6])}(x_1'^1,\dots,x_4'^3,y'_1,\dots,y'_6)\]
that defines the isomorphism type of $(G[\{s'_1,\dots,s'_6\} \cup V(C')],(b_i'^j,s'_k \mid i\in[4],j\in[3],k\in[6]))$, where $C'$ is the connected component from Parts \eqref{it:dd3}--\eqref{it:dd5} in Lemma \ref{lem:dec:logic}.

Let $(b_i^j,s_k \mid i\in[4],j\in[3],k\in[6]) \in V(G)^{18}$ be a list of vertices such that \[
G\models \logic{dec}^{(n)}_{h+1}(b_i^j,s_{k}\mid i\in[4],j\in[3],k\in[6]).
\]
For $B_1, B_2, B_3, B_4, B,C,S$ as described in Lemma
\ref{lem:dec:logic}, let $(T^*,\beta^*)$ be the rooted tree
decomposition from Condition \eqref{it:dd4} in Lemma
\ref{lem:dec:logic}. Let $r^*$ be the root of $T^*$. By Condition
\eqref{it:dd4} in Lemma \ref{lem:dec:logic}, it holds that
$\beta^*(r^*) = B = \bigcup_{i=1}^4 B_i$. 
Consider a $B_i$ with $|B_i| \geq 4$. Then $B_i$ is a proper block, in which, by Theorem \ref{thm:3conn}, we can find vertices $v_i^1, v_i^2, v_i^3$ such that for all $w \in B_i$, there is a $\LC{O(1)}{O(\log n)}$-formula $\logic{id}'_w(x_i^1,x_i^2,x_i^3,y)$ such that $G[[B_i]] \models \logic{id}'_w(v_i^1,v_i^2,v_i^3,w)$ and $G[[B_i]] \not\models \logic{id}'_w(v_i^1,v_i^2,v_i^3,w')$ for every $w'\in B_i \setminus \{v\}$. (In every $B_i$ with $|B_i| \leq 3$, such vertex-identifying formulas with four free variables exist trivially and they also identify the vertex the entire graph $G$.)

For simplicity, first assume that for all $i$ with $|B_i| \geq 4$, the vertex $v_i^j$ equals $b_i^j$ for $j \in [3]$. Then by replacing in $\logic{id}'_v(x_i^1,x_i^2,x_i^3,y)$ every subformula of the form $\exists^{\geq k} x \psi$ with $\exists^{\geq k} x (\psi \land \logic{block}^{(n)}(x_i^1,x_i^2,x_i^3,x))$ and every $E(x,y)$ with $\formel{torso}^{(n)}(x_i^1,x_i^2,x_i^3,x,y)$, we easily obtain for each $v \in B$ a $\LC{O(1)}{O(\log n)}$-formula $\logic{id}_v(x_1^1,\dots,x_4^3,y)$ with $\logic{id}_v[G,b_1^1,\dots,b_4^3,y] = \{v\}$. 

Now we can use these formulas to address each vertex
individually. More formally, we can define the edge relation of $G[B]$
by setting, for $v,w \in B$ with $v \neq w$, 
  \[
    \varphi_{v,w}(x,y)\coloneqq
    \begin{cases}
      E(x,y)&\text{if }vw \in E(G),\\
      \neg E(x,y)&\text{otherwise}.
    \end{cases}
  \]

Then the $\LC{O(1)}{O(\log n)}$-formula 
\begin{align*}
\formel{iso}_B(x_1^1, \dots, x_4^3) {{}\coloneqq}{} 
\mspace{-7mu}&\bigwedge_{v,w \in B} \mspace{-12mu}\exists^{=1} x \Big(\logic{id}_v(x_1^1,\dots,x_4^3,x) \land \exists^{=1} x'\big(\logic{id}_{w}(x_1^1,\dots,x_4^3,x') \land \varphi_{v,w}(x,x')\big)\mspace{-3mu}\Big) 
\\&\phantom{\text{ \ }}\land \neg \exists x \Big(\formel{bag}^{(n)}(x_1^1,\dots,x_4^3,x) \rightarrow \bigwedge_{w \in B} \neg \logic{id}_{w}(x_1^1,\dots,x_4^3,x)\Big) \land {}
\\&\mspace{-3mu}\bigwedge_{v\neq w \in B} \mspace{-12mu}\neg\exists x \Big(\logic{id}_v(x_1^1,\dots,x_4^3,x) \land \logic{id}_{w}(x_1^1,\dots,x_4^3,x)\Big)
\end{align*} 
defines the isomorphism type of $(G[B],b_1^1, \dots, b_4^3)$ (see the
purple bag in Figure \ref{fig:drawing}).

We now construct a formula that describes how the connected components of $G[S \cup V(C)] \setminus B$ are attached to $G[B]$. Let $G' \coloneqq G[S \cup V(C)]$. By Condition \eqref{it:tdl3} in Lemma \ref{lem:tree:dec:log:height}, for every connected component $C'$ of $G'\setminus B$, it holds that $|N_G(C')| \leq 6$ (see the coloured shapes attached to the purple one in Figure \ref{fig:drawing}). Hence, we iterate over all tuples $(s'_1, \dots, s'_6) \in B^6$: let $\CM^{s'_1, \dots, s'_6}$ be the multiset of isomorphism types of the graphs $(G[S' \cup C'],s'_1,\dots,s'_6)$, where $S' \coloneqq \{s'_i \mid i \in [6]\}$ and $C'$ is a connected component of $G'\setminus B$ with $N_G(C') = S'$. 

Since $G\models \logic{dec}^{(n)}_{h+1}(b_i^j,s_{k}\mid i\in[4],j\in[3],k\in[6])$, for every $(s'_1,\dots,s'_6) \in B^6$ and every connected component $C'$ of $G' \setminus B$ with $N_G(C') = \{s'_1,\dots,s'_6\}$, there exist vertices $(b_i'^j \mid i\in[4],j\in[3]) \in (S' \cup V(C'))^{12}$ such that 
\[
G[S'\cup V(C')]\models \logic{dec}^{(n)}_{h}(b_i'^j,s'_{k}\mid i\in[4],j\in[3],k\in[6]).
\]
So, by the induction hypothesis, there is a formula $\formel{iso}_M(x_1'^1,\dots,x_4'^3,y'_1,\dots,y'_6) \in \LC{O(1)}{O(h+ \log n)}$ for the isomorphism type $M$ of $({G[\{s'_1, \dots, s'_6\}\cup V(C')],(b_i'^j,s'_k \mid i\in[4],j\in[3],k\in[6]))}$. Note that by Condition \eqref{it:dd2} in Lemma \ref{lem:dec:logic}, at least one of the vertices $b_i'^j$ will lie outside $B$. Using the counting quantifiers, we can use the $\formel{iso}_M$ to make sure that every isomorphism type appears with the correct multiplicity. More precisely, we first group all components with equal isomorphism types. The fact that they are of the same size enables us to define their number (e.g.\ the three green shapes in Figure \ref{fig:drawing}).  This then allows us to build a formula $\formel{iso}'_\CM(y'_1,\dots,y'_6)$ which identifies the graph $(G[S' \cup \bigcup_{C' : N_G(C') = S'} V(C')], s'_1, \dots,s'_6)$ (where $\CM \coloneqq \CM^{s'_1, \dots, s'_6}$ and the $C'$ are connected components of $G' \setminus B$). Using the $\formel{dec}_h^{(n)}$-formula, we can turn $\formel{iso}'_\CM(y'_1,\dots,y'_6)$ into a formula $\formel{iso}_\CM(x_1^1,\dots,x_4^3,y_1,\dots,y_6,y'_1,\dots,y'_6)$ that ensures that $\formel{iso}_\CM(b_1^1,\dots,b_4^3,s_1,\dots,s_6,s'_1,\dots,s'_6)$ describes for $S' \coloneqq \{s'_1,\dots,s'_6\}$ the subgraph $(G[S' \cup \bigcup_{C' : N_G(C') = S'} V(C')], s'_1, \dots,s'_6)$, where the $C'$ are connected components of $G' \setminus B$, up to isomorphism.

Hence, it suffices to conjugate $\formel{iso}_B(x_1^1,\dots,x_4^3)$ with a conjunction over all $(s'_1, \dots, s'_6) \in B^6$ of the following formula 
\[\exists y'_1 \dots \exists y'_6 \left(\bigwedge_{i=1}^6 \formel{id}_{s'_i}(x_1^1, \dots, x_4^3,y'_i) \wedge \formel{iso}_\CM(x_1^1,\dots,x_4^3,y_1,\dots,y_6,y'_1,\dots,y'_6)\right),\]
where $\CM \coloneqq \CM^{s_1, \dots, s_6}$,
to obtain the desired $\formel{iso}_{G,(b_i^j,s_k \mid i\in[4],j\in[3],k\in[6])}(x_1^1,\dots,x_4^3,y_1,\dots,y_6)$.

We now consider the general case where it does not necessarily hold for all $i,j$ that $v_i^j = b_i^j$. We assume for notational simplicity that for all $i$, the $b_i^1,b_i^2,b_i^3$ define a block. It is easy to adapt the following construction to the situation that block separators are present.

We introduce one nested existential quantifier $\exists \tilde{x}_i^j$ for each of the $v_i^j$ so that our resulting formula $\logic{iso}_{G,(b_i^j,s_k \mid i\in[4],j\in[3],k\in[6])}(x_1^1,\dots,x_4^3,y_1,\dots,y_6)$ looks as follows:
\begin{align*}\textstyle
\exists \tilde{x}_1^1 \dots \exists \tilde{x}_4^3 \Bigg( &\bigwedge_{j=1}^3 \bigwedge_{i=1}^4 \formel{block}^{(n)}(x_i^1,x_i^2,x_i^3,\tilde{x}_i^j) \land \formel{iso}_B(\tilde{x}_1^1,\dots,\tilde{x}_4^3)
 \land {}\\&\mspace{-30mu}\bigwedge_{(s'_1,\dots,s'_6)\in B^6} \mspace{-15mu} \exists y'_1 \dots \exists y'_6 \Bigg(\bigwedge_{i=1}^6 \formel{id}_{s'_i}(\tilde{x}_1^1, \dots, \tilde{x}_4^3,y'_i) \wedge {}
 \\&\phantom{{}\mspace{-30mu}\bigwedge_{(s'_1,\dots,s'_6)\in B^6} \mspace{-15mu} \exists y'_1 \dots \exists y'_6 \Bigg(\bigwedge_{i=1}^6{}{}}\formel{iso}_\CM(x_1^1,\dots,x_4^3,y_1,\dots,y_6,y'_1,\dots,y'_6)\Bigg)\Bigg).
\end{align*}

The bounds on the quantifier depth and the number of variables follow similarly as in the proof of Lemma \ref{lem:dec:logic}.
\end{proof}

Applying Lemma \ref{lem:logdec}, we can deduce Theorem \ref{thm:main:logical}.

\begin{proof}[Proof of Theorem \ref{thm:main:logical}]
Let $n \in \mathbb{N}$ and let $G$ be a planar graph with order $|G| = n$. If $G$ is not connected, we construct one formula for each connected component of $G$ (as described in the following) and join them to obtain the identifying sentence.

So suppose $G$ is connected. Then by Lemma \ref{lem:tree:dec:log:height}, $G$ has a rooted tree decomposition $(T^*,\beta^*)$ of logarithmic height and adhesion at most $6$ for which every bag is a union of four (not necessarily distinct) blocks or block separators and also Condition \eqref{it:tdl6} of the lemma holds. Let $b_1^1, \dots, b_4^3$ be vertices that determine the blocks and block separators in the root bag $B$ of $(T^*,\beta^*)$. 

If there is a vertex $s \in B$ such that there is a unique connected component $C$ of $G \setminus \{s\}$ with $B \subseteq \{s\} \cup V(C)$, then there are vertices $b_i^j, s_k$ for $i \in [4], j \in [3], k \in [6]$ (e.g.\ $s_k = s$ for all $k$) such that $G$ satisfies $\formel{dec}^{(n)}_{2 \log|G|}(b_i^j,s_{k}\mid i\in[4],j\in[3],k\in[6])$. Then the sentence 
\[\exists x_1^1 \dots \exists x_4^3 \exists y_1 \dots \exists y_6 \formel{iso}_{G,(b_i^j,s_k \mid i\in[4],j\in[3],k\in[6])}(x_1^1,\dots,x_4^3,y_1,\dots,y_6)\]
identifies $G$, where $\formel{iso}_{G,(b_i^j,s_k \mid i\in[4],j\in[3],k\in[6])}$ is the formula from Lemma \ref{lem:iso:formel}. 

Otherwise, let $s \in B$ be a vertex such that $G \setminus \{s\}$ has multiple connected components $C_i$ and let $G_i \coloneqq G[V(C_i) \cup \{s\}]$. Then the restriction of $(T^*,\beta^*)$ to each $G_i$ still satisfies the conditions of Lemma \ref{lem:tree:dec:log:height}, because the block structure of $G_i$ is just the block structure induced by $G$ on $V(G_i)$ (that is, the blocks of $G_i$ are precisely those blocks of $G$ contained in $V(G_i)$, and similarly for the block separators). This yields by Lemma \ref{lem:iso:formel} an identifying formula $\varphi_i(y)$ for each $(G_i,s)$, which we can join by isomorphism type of $(G_i,s)$ to obtain an identifying sentence.
\end{proof}

We can directly deduce Theorem \ref{thm:main}.

\begin{proof}[Proof of Theorem \ref{thm:main}]
The theorem follows from Theorems \ref{thm:main:logical} and \ref{thm:quantdepth}.
\end{proof}

\section{Conclusion}
We prove that planar graphs are identified by the WL algorithm with constant dimension in a logarithmic number of
iterations, thereby completing a project started by Verbitsky fourteen years
ago with his proof of the same result in the special case of 3-connected planar
graphs. Our proof is based on the careful analysis of a novel
logarithmic-depth decomposition of graphs into their 3-connected
components. 

It is unclear which dimension of the WL algorithm is necessary to
identify planar graphs in logarithmically many iterations and if there
is a (provable) trade-off between dimension and iteration number. This
is not only interesting for planar graphs, and many questions remain
open.

We leave it as another interesting open project whether our
result can be extended to graph classes of bounded genus. As it
stands, our proof heavily relies on properties of 3-connected planar
graphs that are not shared by 3-connected graphs of higher genus. Similarly, we pose as a challenge to find good bounds on the iteration number of the WL algorithm on other parameterised graph classes, such as graphs with a certain excluded minor or graphs of bounded rank width.  

\bibliography{main}

\begin{thebibliography}{10}

\bibitem{ahmkermlanat13}
B.~Ahmadi, K.~Kersting, M.~Mladenov, and S.~Natarajan.
\newblock Exploiting symmetries for scaling loopy belief propagation and
  relational training.
\newblock {\em Machine Learning Journal}, 92(1):91--132, 2013.
\newblock \href {https://doi.org/10.1007/s10994-013-5385-0}
  {\path{doi:10.1007/s10994-013-5385-0}}.

\bibitem{atsman13}
A.~Atserias and E.~N. Maneva.
\newblock {Sherali-Adams} relaxations and indistinguishability in counting
  logics.
\newblock {\em {SIAM} Journal on Computing}, 42(1):112--137, 2013.
\newblock \href {https://doi.org/10.1137/120867834}
  {\path{doi:10.1137/120867834}}.

\bibitem{atsoch18}
A.~Atserias and J.~Ochremiak.
\newblock Definable ellipsoid method, sums-of-squares proofs, and the
  isomorphism problem.
\newblock In {\em Proceedings of the 33rd Annual {ACM/IEEE} Symposium on Logic
  in Computer Science (LICS '18)}, pages 66--75, 2018.
\newblock \href {https://doi.org/10.1145/3209108.3209186}
  {\path{doi:10.1145/3209108.3209186}}.

\bibitem{bab16}
L.~Babai.
\newblock Graph isomorphism in quasipolynomial time.
\newblock In {\em Proceedings of the 48th Annual {ACM} Symposium on Theory of
  Computing ({STOC} '16)}, pages 684--697, 2016.
\newblock \href {https://doi.org/10.1145/2897518.2897542}
  {\path{doi:10.1145/2897518.2897542}}.

\bibitem{caifurimm92}
J.~Cai, M.~F{\"u}rer, and N.~Immerman.
\newblock An optimal lower bound on the number of variables for graph
  identification.
\newblock {\em Combinatorica}, 12:389--410, 1992.
\newblock \href {https://doi.org/10.1007/BF01305232}
  {\path{doi:10.1007/BF01305232}}.

\bibitem{chepon19}
G.~Chen and I.~Ponomarenko.
\newblock Lectures on coherent configurations.
\newblock Lecture notes available at
  \url{http://www.pdmi.ras.ru/~inp/ccNOTES.pdf}, 2019.

\bibitem{darliffsakmar04}
P.~T. Darga, M.~H. Liffiton, K.~A. Sakallah, and I.~L. Markov.
\newblock Exploiting structure in symmetry detection for {CNF}.
\newblock In {\em Proceedings of the 41st Design Automation Conference ({DAC}
  '04)}, pages 530--534. {ACM}, 2004.
\newblock \href {https://doi.org/10.1145/996566.996712}
  {\path{doi:10.1145/996566.996712}}.

\bibitem{delgrorat18}
H.~Dell, M.~Grohe, and G.~Rattan.
\newblock Lov{\'{a}}sz meets {W}eisfeiler and {L}eman.
\newblock In {\em Proceedings of the 45th International Colloquium on Automata,
  Languages, and Programming ({ICALP} '18)}, pages 40:1--40:14, 2018.
\newblock \href {https://doi.org/10.4230/LIPIcs.ICALP.2018.40}
  {\path{doi:10.4230/LIPIcs.ICALP.2018.40}}.

\bibitem{die16}
R.~Diestel.
\newblock {\em Graph Theory}.
\newblock Springer Verlag, 5th edition, 2016.

\bibitem{dvo10}
Z.~Dvor{\'{a}}k.
\newblock On recognizing graphs by numbers of homomorphisms.
\newblock {\em Journal of Graph Theory}, 64(4):330--342, 2010.
\newblock \href {https://doi.org/10.1002/jgt.20461}
  {\path{doi:10.1002/jgt.20461}}.

\bibitem{elbjaktan10}
M.~Elberfeld, A.~Jakoby, and T.~Tantau.
\newblock Logspace versions of the theorems of {Bodlaender} and {Courcelle}.
\newblock In {\em Proceedings of the 51st Annual IEEE Symposium on Foundations
  of Computer Science (FOCS '10)}, pages 143--152, 2010.
\newblock \href {https://doi.org/10.1109/FOCS.2010.21}
  {\path{doi:10.1109/FOCS.2010.21}}.

\bibitem{evdpontin00}
S.~Evdokimov, I.~N. Ponomarenko, and G.~Tinhofer.
\newblock Forestal algebras and algebraic forests (on a new class of weakly
  compact graphs).
\newblock {\em Discrete Mathematics}, 225(1-3):149--172, 2000.
\newblock \href {https://doi.org/10.1016/S0012-365X(00)00152-7}
  {\path{doi:10.1016/S0012-365X(00)00152-7}}.

\bibitem{gro98a}
M.~Grohe.
\newblock Fixed-point logics on planar graphs.
\newblock In {\em Proceedings of the 13th IEEE Symposium on Logic in Computer
  Science (LICS '98)}, pages 6--15, 1998.
\newblock \href {https://doi.org/10.1109/LICS.1998.705639}
  {\path{doi:10.1109/LICS.1998.705639}}.

\bibitem{gro00}
M.~Grohe.
\newblock Isomorphism testing for embeddable graphs through definability.
\newblock In {\em Proceedings of the 32nd ACM Symposium on Theory of Computing
  (STOC '00)}, pages 63--72, 2000.
\newblock \href {https://doi.org/10.1145/335305.335313}
  {\path{doi:10.1145/335305.335313}}.

\bibitem{gro17}
M.~Grohe.
\newblock {\em Descriptive Complexity, Canonisation, and Definable Graph
  Structure Theory}, volume~47 of {\em Lecture Notes in Logic}.
\newblock Cambridge University Press, 2017.
\newblock \href {https://doi.org/10.1017/9781139028868}
  {\path{doi:10.1017/9781139028868}}.

\bibitem{gro21}
M.~Grohe.
\newblock The logic of graph neural networks.
\newblock In {\em Proceedings of the 36th ACM-IEEE Symposium on Logic in
  Computer Science (LICS '21))}, 2021.
\newblock arXiv version at \url{https://arxiv.org/abs/2104.14624}.

\bibitem{grokie19}
M.~Grohe and S.~Kiefer.
\newblock A linear upper bound on the {W}eisfeiler-{L}eman dimension of graphs
  of bounded genus.
\newblock In {\em Proceedings of the 46th International Colloquium on Automata,
  Languages, and Programming (ICALP '19)}, pages 117:1--117:15, 2019.
\newblock \href {https://doi.org/10.4230/LIPIcs.ICALP.2019.117}
  {\path{doi:10.4230/LIPIcs.ICALP.2019.117}}.

\bibitem{gromar99}
M.~Grohe and J.~Mari{\~{n}}o.
\newblock Definability and descriptive complexity on databases of bounded
  tree-width.
\newblock In {\em Proceedings of the 7th International Conference on Database
  Theory (ICDT '99)}, volume 1540 of {\em Lecture Notes in Computer Science},
  pages 70--82. Springer, 1999.
\newblock \href {https://doi.org/10.1007/3-540-49257-7_6}
  {\path{doi:10.1007/3-540-49257-7_6}}.

\bibitem{groneu19}
M.~Grohe and D.~Neuen.
\newblock Canonisation and definability for graphs of bounded rank width.
\newblock In {\em Proceedings of the 34th Annual {ACM/IEEE} Symposium on Logic
  in Computer Science (LICS '19)}, pages 1--13, 2019.
\newblock \href {https://doi.org/10.1109/LICS.2019.8785682}
  {\path{doi:10.1109/LICS.2019.8785682}}.

\bibitem{groott15}
M.~Grohe and M.~Otto.
\newblock Pebble games and linear equations.
\newblock {\em Journal of Symbolic Logic}, 80(3):797--844, 2015.
\newblock \href {https://doi.org/10.1017/jsl.2015.28}
  {\path{doi:10.1017/jsl.2015.28}}.

\bibitem{grover06}
M.~Grohe and O.~Verbitsky.
\newblock Testing graph isomorphism in parallel by playing a game.
\newblock In {\em Proceedings of the 33rd International Colloquium on Automata,
  Languages and Programming (ICALP '06)}, pages 3--14, 2006.
\newblock \href {https://doi.org/10.1007/11786986\_2}
  {\path{doi:10.1007/11786986\_2}}.

\bibitem{immlan90}
N.~Immerman and E.~Lander.
\newblock Describing graphs: A first-order approach to graph canonization.
\newblock In {\em Complexity theory retrospective}, pages 59--81.
  Springer-Verlag, 1990.

\bibitem{junkas07}
T.~A. Junttila and P.~Kaski.
\newblock Engineering an efficient canonical labeling tool for large and sparse
  graphs.
\newblock In {\em Proceedings of the 9th Workshop on Algorithm Engineering and
  Experiments ({ALENEX} '07)}. {SIAM}, 2007.
\newblock \href {https://doi.org/10.1137/1.9781611972870.13}
  {\path{doi:10.1137/1.9781611972870.13}}.

\bibitem{kie20}
S.~Kiefer.
\newblock The {W}eisfeiler-{L}eman algorithm: An exploration of its power.
\newblock {\em {ACM} {SIGLOG} News}, 7(3):5--27, 2020.
\newblock \href {https://doi.org/10.1145/3436980.3436982}
  {\path{doi:10.1145/3436980.3436982}}.

\bibitem{kiemck20}
S.~Kiefer and B.~D. McKay.
\newblock The iteration number of {C}olour {R}efinement.
\newblock In {\em Proceedings of the 47th International Colloquium on Automata,
  Languages, and Programming ({ICALP} '20)}, volume 168 of {\em LIPIcs}, pages
  73:1--73:19. Schloss Dagstuhl -- Leibniz-Zentrum f{\"{u}}r Informatik, 2020.
\newblock \href {https://doi.org/10.4230/LIPIcs.ICALP.2020.73}
  {\path{doi:10.4230/LIPIcs.ICALP.2020.73}}.

\bibitem{kieneu19}
S.~Kiefer and D.~Neuen.
\newblock The power of the {W}eisfeiler-{L}eman algorithm to decompose graphs.
\newblock In {\em Proceedings of the 44th International Symposium on
  Mathematical Foundations of Computer Science (MFCS '19)}, volume 138 of {\em
  Leibniz International Proceedings in Informatics (LIPIcs)}, pages
  45:1--45:15. Schloss Dagstuhl -- Leibniz-Zentrum für Informatik, 2019.
\newblock \href {https://doi.org/10.4230/LIPIcs.MFCS.2019.45}
  {\path{doi:10.4230/LIPIcs.MFCS.2019.45}}.

\bibitem{kieponschwei19}
S.~Kiefer, I.~Ponomarenko, and P.~Schweitzer.
\newblock The {W}eisfeiler-{L}eman dimension of planar graphs is at most 3.
\newblock {\em J. {ACM}}, 66(6):44:1--44:31, 2019.
\newblock \href {https://doi.org/10.1145/3333003} {\path{doi:10.1145/3333003}}.

\bibitem{kieschwei19}
S.~Kiefer and P.~Schweitzer.
\newblock Upper bounds on the quantifier depth for graph differentiation in
  first-order logic.
\newblock {\em Log. Methods Comput. Sci.}, 15(2), 2019.
\newblock \href {https://doi.org/10.23638/LMCS-15(2:19)2019}
  {\path{doi:10.23638/LMCS-15(2:19)2019}}.

\bibitem{KoblerV08}
J.~K{\"{o}}bler and O.~Verbitsky.
\newblock From invariants to canonization in parallel.
\newblock In {\em Proceedings of the 3rd International Computer Science
  Symposium in Russia ({CSR} '08)}, volume 5010 of {\em Lecture Notes in
  Computer Science}, pages 216--227. Springer, 2008.
\newblock \href {https://doi.org/10.1007/978-3-540-79709-8_23}
  {\path{doi:10.1007/978-3-540-79709-8_23}}.

\bibitem{lichponschwei19}
M.~Lichter, I.~Ponomarenko, and P.~Schweitzer.
\newblock Walk refinement, walk logic, and the iteration number of the
  {Weisfeiler}-{Leman} algorithm.
\newblock In {\em Proceedings of the 34th Annual {ACM/IEEE} Symposium on Logic
  in Computer Science ({LICS} '19)}, pages 1--13. {IEEE}, 2019.
\newblock \href {https://doi.org/10.1109/LICS.2019.8785694}
  {\path{doi:10.1109/LICS.2019.8785694}}.

\bibitem{mck81}
B.~D. McKay.
\newblock Practical graph isomorphism.
\newblock {\em Congressus Numerantium}, 30:45--87, 1981.

\bibitem{mckaypip14}
B.~D. McKay and A.~Piperno.
\newblock Practical graph isomorphism, {II}.
\newblock {\em J. Symb. Comput.}, 60:94--112, 2014.
\newblock \href {https://doi.org/10.1016/j.jsc.2013.09.003}
  {\path{doi:10.1016/j.jsc.2013.09.003}}.

\bibitem{morritfey+19}
C.~Morris, M.~Ritzert, M.~Fey, W.~Hamilton, J.~E. Lenssen, G.~Rattan, and
  M.~Grohe.
\newblock {Weisfeiler} and {Leman} go neural: Higher-order graph neural
  networks.
\newblock In {\em Proceedings of the 33rd AAAI Conference on Artificial
  Intelligence}, 2019.
\newblock \href {https://doi.org/10.1609/aaai.v33i01.33014602}
  {\path{doi:10.1609/aaai.v33i01.33014602}}.

\bibitem{sheschlee+11}
N.~Shervashidze, P.~Schweitzer, E.~J. van Leeuwen, K.~Mehlhorn, and K.~M.
  Borgwardt.
\newblock Weisfeiler-{L}ehman graph kernels.
\newblock {\em Journal of Machine Learning Research}, 12:2539--2561, 2011.

\bibitem{tut84}
W.~T. Tutte.
\newblock {\em Graph Theory}.
\newblock Addison-Wesley, 1984.

\bibitem{ver07}
O.~Verbitsky.
\newblock Planar graphs: Logical complexity and parallel isomorphism tests.
\newblock In {\em Proceedings of the 24th Annual Symposium on Theoretical
  Aspects of Computer Science (STACS '07)}, pages 682--693, 2007.
\newblock \href {https://doi.org/10.1007/978-3-540-70918-3\_58}
  {\path{doi:10.1007/978-3-540-70918-3\_58}}.

\bibitem{weilem68}
B.~Weisfeiler and A.~Leman.
\newblock The reduction of a graph to canonical form and the algebra which
  appears therein.
\newblock {\em NTI, Series 2}, 1968.
\newblock English translation by G.~Ryabov available at
  \url{https://www.iti.zcu.cz/wl2018/pdf/wl_paper_translation.pdf}.

\bibitem{whi32}
H.~Whitney.
\newblock Congruent graphs and the connectivity of graphs.
\newblock {\em American Journal of Mathematics}, 54:150--168, 1932.

\bibitem{xuhulesjeg19}
K.~Xu, W.~Hu, J.~Leskovec, and S.~Jegelka.
\newblock How powerful are graph neural networks?
\newblock In {\em Proceedings of the 7th International Conference on Learning
  Representations (ICLR '19)}, 2019.

\end{thebibliography}
\end{document}